\documentclass{article}
\usepackage[T1]{fontenc}
\usepackage[latin9]{inputenc}
\usepackage{geometry}
\setlength{\parskip}{\medskipamount}
\setlength{\parindent}{0pt}
\usepackage{float}
\usepackage{ifthen}
\usepackage{amsbsy}
\usepackage{amssymb}
\usepackage{amsmath}
\usepackage{amsthm}
\usepackage{graphicx}
\usepackage{setspace}
\usepackage{esint}
\usepackage{comment}
\usepackage{mathtools}
\usepackage{xcolor}

\DeclarePairedDelimiter{\abs}{\lvert}{\rvert} %


\DeclareMathOperator*{\argmin}{arg\,min} 




\def\ddefloop#1{\ifx\ddefloop#1\else\ddef{#1}\expandafter\ddefloop\fi}
\def\ddef#1{\expandafter\def\csname bb#1\endcsname{\ensuremath{\mathbb{#1}}}}
\ddefloop ABCDEFGHIJKLMNOPQRSTUVWXYZ\ddefloop
\def\ddefloop#1{\ifx\ddefloop#1\else\ddef{#1}\expandafter\ddefloop\fi}
\def\ddef#1{\expandafter\def\csname b#1\endcsname{\ensuremath{\mathbf{#1}}}}
\ddefloop ABCDEFGHIJKLMNOPQRSTUVWXYZ\ddefloop
\def\ddef#1{\expandafter\def\csname c#1\endcsname{\ensuremath{\mathcal{#1}}}}
\ddefloop ABCDEFGHIJKLMNOPQRSTUVWXYZ\ddefloop
\def\ddef#1{\expandafter\def\csname h#1\endcsname{\ensuremath{\widehat{#1}}}}
\ddefloop ABCDEFGHIJKLMNOPQRSTUVWXYZ\ddefloop
\def\ddef#1{\expandafter\def\csname hc#1\endcsname{\ensuremath{\widehat{\mathcal{#1}}}}}
\ddefloop ABCDEFGHIJKLMNOPQRSTUVWXYZ\ddefloop
\def\ddef#1{\expandafter\def\csname t#1\endcsname{\ensuremath{\widetilde{#1}}}}
\ddefloop ABCDEFGHIJKLMNOPQRSTUVWXYZ\ddefloop
\def\ddef#1{\expandafter\def\csname tc#1\endcsname{\ensuremath{\widetilde{\mathcal{#1}}}}}
\ddefloop ABCDEFGHIJKLMNOPQRSTUVWXYZ\ddefloop



\usepackage{prettyref}

\usepackage{nicefrac}
\usepackage{subcaption}
\usepackage{bbm}
\usepackage{booktabs}

\makeatletter
\newtheorem*{rep@theorem}{\rep@title}
\newcommand{\newreptheorem}[2]{%
\newenvironment{rep#1}[1]{%
 \def\rep@title{#2 \ref{##1}}%
 \begin{rep@theorem}}%
 {\end{rep@theorem}}}
\makeatother

\usepackage{url}
\usepackage{hyperref}
\hypersetup{breaklinks}
\usepackage{cleveref}
\newtheorem{theorem}{Theorem}
\newreptheorem{theorem}{Theorem}
\newtheorem{corollary}[theorem]{Corollary}
\newreptheorem{corollary}{Corollary}
\newtheorem{lemma}[theorem]{Lemma}

\newtheorem{example}{Example}
\newtheorem{remark}{Remark}

\crefname{equation}{}{}
\crefname{proposition}{Proposition}{Propositions}
\crefname{appendix}{Appendix}{Appendices}
\usepackage{autonum}
\usepackage{authblk}

\usepackage{color-edits}
\addauthor{vs}{red}
\addauthor{rs}{blue}
\addauthor{vc}{purple}
\addauthor{wn}{green}

\newcommand{\kibitz}[2]{\ifnum\Comments=1{\color{#1}{#2}}\fi}

\newcommand{\E}{\mathbb{E}}

\newcommand{\R}{\mathbb{R}}


\newcommand{\mcX}{{\mathcal X}}

\newcommand{\mcG}{{\mathcal G}}

\newcommand{\ba}{\begin{array}}
\newcommand{\ea}{\end{array}}
\newcommand{\bs}{\begin{align}\begin{split}\nonumber}
\newcommand{\bsnumber}{\begin{align}\begin{split}}
\newcommand{\es}{\end{split}\end{align}}

\newcommand{\mcZ}{{\mathcal Z}}

\def\defeq{\triangleq} %
 
\def\balign#1\ealign{\begin{align}#1\end{align}}
\def\balignat#1\ealign{\begin{alignat}#1\end{alignat}}
\def\bitemize#1\eitemize{\begin{itemize}#1\end{itemize}}
\def\benumerate#1\eenumerate{\begin{enumerate}#1\end{enumerate}}
\newenvironment{talign}
 {\let\displaystyle\textstyle\csname align\endcsname}
 {\endalign}
\def\balignt#1\ealignt{\begin{talign}#1\end{talign}}%
 %

 %

\title{Debiased Machine Learning without Sample-Splitting for Stable Estimators}
\author[1]{Qizhao Chen}
\author[2]{Vasilis Syrgkanis}
\author[3]{Morgane Austern}
\affil[1,3]{Department of Statistics, Harvard University}
\affil[2]{MS\&E Department, Stanford University}
\date{}

\begin{document}

\maketitle

\begin{abstract}
    Estimation and inference on causal parameters is typically reduced to a generalized method of moments problem, which involves auxiliary functions that correspond to solutions to a regression or classification problem. Recent line of work on debiased machine learning shows how one can use generic machine learning estimators for these auxiliary problems, while maintaining asymptotic normality and root-$n$ consistency of the target parameter of interest, while only requiring mean-squared-error guarantees from the auxiliary estimation algorithms. The literature typically requires that these auxiliary problems are fitted on a separate sample or in a cross-fitting manner. We show that when these auxiliary estimation algorithms satisfy natural leave-one-out stability properties, then sample splitting is not required. This allows for sample re-use, which can be beneficial in moderately sized sample regimes. For instance, we show that the stability properties that we propose are satisfied for ensemble bagged estimators, built via sub-sampling without replacement, a popular technique in machine learning practice.
\end{abstract}

\section{Introduction}

A large variety of problems in causal inference and more generally semi-parametric inference can be framed as finding a solution to a moment condition:
\begin{align}
M(\theta, g) ~\defeq~& \E_Z[m(Z; \theta, g)] & 
    M(\theta_0, g_0) ~=~& 0
\end{align}
where $Z\in \mcZ$ is a vector of random variables that, apart from the target parameter $\theta_0\in \Theta\subset\mathbb{R}^p$ of interest, also depends on unknown nuisance functions $g_0\in \mcG$, which need to be estimated in a flexible manner from the data. This framework has a long history in the literature on semi-parametric inference \cite{levit1976efficiency,hasminskii1979nonparametric,ibragimov1981statistical,pfanzagl1982lecture,klaassen1987consistent,robinson1988root,van1991differentiable,bickel1993efficient,newey1994asymptotic,robins1995semiparametric,vaart,bickel1988estimating,newey1998undersmoothing,ai2003efficient,newey2004twicing,ai2007estimation,tsiatis2007semiparametric,kosorok2007introduction,ai2012semiparametric}, which  analyzes the following two-stage estimation process with sample re-use, when having access to $n$ iid samples $\{Z_1, \ldots, Z_n\}$ and as $n$ grows, treating the target parameter dimension $p$ as a constant:
\begin{enumerate}
    \item Obtain an estimate $\hat{g}\in \mcG$ of the nuisance function $g_0$ based on all the samples.
    \item Return any estimate $\hat{\theta}\in \Theta$ that satisfies:
    \begin{align}\label{eqn:crossfit-estimate}
    M_{n}(\hat{\theta}, \hat{g}) ~=~& o_p\left(n^{-1/2}\right) & & \text{ with } & M_n(\theta, g) ~\defeq~& \frac{1}{n} \sum_{i=1}^n m(Z_i; \theta, g) .
    \end{align}
\end{enumerate}
Given that the estimation of function $\hat{g}$ is a complex non-parametric problem, it will typically only satisfy slower than parametric error rates. Moreover, in high dimensional settings, when machine learning techniques are used to estimate $\hat{g}$, then the regularization bias of $\hat{g}$ will propagate to the final estimate $\hat{\theta}$, leading to non-regular estimates and the inability to construct confidence intervals. 

The literature on efficient semi-parametric estimation provides conditions on the moment function so that the influence of the estimation error of the nuisance function $\hat{g}$ is of second order importance and does not alter the distributional properties of the second stage
\cite{hasminskii1979nonparametric,bickel1988estimating,zhang2014confidence,belloni2011inference,belloni2014inference,belloni2014uniform,belloni2014pivotal,javanmard2014confidence,javanmard2014hypothesis,javanmard2018debiasing,van2014asymptotically,ning2017general,chernozhukov2015valid,neykov2018unified,ren2015asymptotic,jankova2015confidence,jankova2016confidence,jankova2018semiparametric,bradic2017uniform,zhu2017breaking,zhu2018linear}. 
This approach dates back to the classical work on doubly robust estimation and targeted maximum likelihood \cite{robins1995analysis,robins1995semiparametric,van2006targeted,van2011targeted,luedtke2016statistical,toth2016tmle}
as well as the more recent work on locally robust or Neyman orthogonal conditions on the moment function \cite{neyman1959,neyman1979c,chernozhukov2016locally,belloni2017program,chernozhukov2018double}, typically referred to as double or debiased machine learning. At a high-level, the earlier literature on semi-parametric inference shows that if the moment function satisfies some form of robustness to nuisance perturbations and, importantly, as long as the space $\mcG$ used in the first stage estimation is a relatively simple function class, in terms of statistical complexity, typically referred to as a Donsker function class, then the second stage estimate is root-$n$ consistent and asymptotically normal. Hence, one can easily construct confidence intervals for the target parameter of interest.

Crucially the recent literature on debiased machine learning alters the standard two-stage estimation algorithm to introduce the idea of sample-splitting, \cite{bickel1982adaptive,zheng2010asymptotic}. In particular, instead of estimating $\hat{g}$ on all the samples, the recent work on debiased machine learning \cite{chernozhukov2018double,chernozhukov2016locally,chernozhukov2021simple} estimates $\hat{g}$ on a separate sample, or for better sample efficiency invokes ``cross-fitting,'' where we train a nuisance model on half the data and evaluate it in the second stage on the other half and vice versa. Sample splitting avoids the Donsker conditions that were prevalent in the classic semi-parametric inference literature and only requires a root-mean-squared-error (RMSE) guarantee of the estimate $\hat{g}$ of $o_p(n^{-1/4})$.

However, sample splitting or cross-fitting still leads to poorer sample usage, as we can lose half of our data when training complex non-parametric or machine learning models, which can be problematic in small and moderate sample regimes. Our main result is to show that root-$n$ consistency and asymptotic normality of the standard algorithm, without sample splitting, can be achieved without the Donsker property, but solely if one assumes that the first stage estimation algorithm is $o(n^{-1/2})$ leave-one-out stable, a relatively widely studied property in the statistical machine learning and generalization theory literature \cite{bousquet2002stability,Kale11cross-validationand,elisseeff2003leave,Hardt2016}. As a leading example we show that our stability conditions are satisfied by bagging estimators formed with sub-sampling without replacement. 

Recent prior work of \cite{chernozhukov2020adversarial} also considered asymptotic normality based on stability conditions, but as we expand in the main text, their requirement on the stability property is much harsher than the one we derive here. Moreover, \cite{chernozhukov2020adversarial} analyzes only a special case of the class of moment problems that we consider here. For instance, in our leading example of bagging estimators, the prior result of \cite{chernozhukov2020adversarial} would require that the bias of the base estimator decays faster than $1/n$, where $n$ is the sample size, which is typically not the case. In contrast, our stability condition does not impose any explicit assumption on the bias and solely requires that the sub-sample size $m$ is $o(\sqrt{n})$. 

To simplify the regularity assumptions required for asymptotic normality, we focus on the case where $m(Z;\theta,g)$ is linear in $\theta$, i.e.
\begin{align}
    m(Z;\theta,g) = a(Z;g)\, \theta + \nu(Z;g)
\end{align}
where $a(Z;g)\in \R^{p\times p}$ is a $p\times p$ matrix and $\nu(Z;g)\in \R^p$ is a $p$-vector, and we denote with:
\begin{align}
    A(g) :=~& \E_Z[a(Z;g)] & 
    A_n(g) :=~&  \E_n[a(Z;g)]\\
    V(g) :=~& \E_Z[\nu(Z;g)] & 
    V_n(g) :=~& \E_n[\nu(Z;g)].
\end{align}
Many of the leading examples in semi-parametric problems that arise in causal inference correspond to linear moment problems. We present below a representative set of problems that are widely used in the practice of causal inference. 

\begin{example}[Partially Linear Treatment Effect \cite{robinson1988root}] \label{sleep2} If one assumes that the outcome of interest $Y$ is linear in the treatment, i.e. $Y=\theta_0'T + f_0(X) + \epsilon$, with $\E[\epsilon\mid T, X]=0$, then estimating the treatment effect $\theta_0$ boils down to solving the following linear moment:
\begin{align}
    m(Z;\theta,g) = (Y - q(X) - \theta'(T-p(X)))\, (T - p(X))
\end{align}
where the corresponding true values of $(q,p)$ are $q_0(X)=\E[Y|X]$, $p_0(X)=\E[T|X]$.
\end{example}

\begin{example}[Partially Linear IV \cite{chernozhukov2018double}] \label{sleep3} If one assumes that the outcome of interest $Y$ is linear in the treatment, i.e. $Y=\theta_0'T + f_0(X) + \epsilon$, but the treatment is endogenous (i.e. there are unobserved confounders) and one has access to a random variable $Z$, that is referred to as an instrument, which correlates with the treatment but is un-correlated with the residual in the outcome equation, i.e. satisfies that $\E[\epsilon\mid Z, X]=0$, then estimating $\theta_0$ boils down to solving the following linear moment:
\begin{align}
    m(Z;\theta,g) = (Y - q(X) - \theta'(T-p(X)))\, (Z - r(X))
\end{align}
where the true values of $(q,p,r)$ are $q_0(X)=\E[Y|X]$, $p_0(X)=\E[T|X]$, $r_0(X) = \E[Z|X]$.
\end{example}

\begin{example}[Average linear functionals of regression functions]\label{sleep1} Consider a class of moment functions of the form:
\begin{align}
    m(Z;\theta,g) = \theta - m_{b}(Z;q) - \mu(T,X)\, (Y - q(T,X))
\end{align}
where $g=(q,\mu)$, $m_{b}$ is a linear functional of $q$ and the corresponding true value $q$ is a regression function $q_0(T, X)=\E[Y\mid T, X]$ and $\mu_0$ is Riesz representer of the functional $\E[m_b(Z;q)]$ (see \cite{chernozhukov2018global} for examples). For instance, in the case of a binary treatment $T$, where we have that $Y = q(T, X) + \epsilon$ and $\E[\epsilon\mid T, X]=0$, then the average treatment effect $\theta_0=\E[q(1,X) - q(0,X)]$ is identified by a moment of the latter type, with:
\begin{align}
    m_b(Z;q) = q(1, X) - q(0, X) \tag{Average Treatment Effect (ATE)}
\end{align}
While if we have a target treatment policy $\pi: \mcX \to \{0,1\}$, and we want to estimate its average value $\theta_0$, we can identify do so with a moment of the aforementioned type, with:
\begin{align}
    m_b(Z;q) = \pi(X)\, (q(1, X) - q(0, X)) \tag{Average Policy Effect}
\end{align}
\end{example}

For completeness, we also include in Appendix~\ref{sec: nonlinear} an extension of our results to nonlinear moment problems.

\section{Asymptotic Normality without Sample Splitting}

We start by providing an asymptotic normality theorem for semi-parametric moment estimators without sample splitting and where the moment satisfies the well-studied property of Neyman orthogonality. Our theorem requires four main conditions: i) root-mean-squared-error (RMSE) rates for the nuisance function estimates of $o_p(n^{-1/4})$, ii) Neyman orthogonality of the moment with respect to the nuisances, iii) second-order smoothness of the moment with respect to the nuisance functions and, iv) stochastic equicontinuity of the Jacobian and the offset part of the linear moment function as the nuisance estimate $\hat{g}$ converges to $g_0$. For a vector $x\in \R^p$ we denote with $\|x\|_2$ the $\ell_2$ norm and for a matrix $X\in \R^{p\times p}$ we denote with $\|X\|_{op}$ the operator norm with respect to the $\ell_2$ norm.

\begin{theorem}\label{thm:main}
Suppose that the nuisance estimate $\hat{g}\in \mcG$ satisfies:
\begin{equation}\label{cons}
\|\hat{g} - g_0\|^2_2~\defeq~ \E_X\left[\|\hat{g}(X) - g_0(X)\|_2^2\right] =o_p\left(n^{-1/2}\right). \tag{Consistency Rate}
\end{equation}
Suppose that the moment satisfies the Neyman orthogonality condition: for all $g\in \mcG$
\begin{align}
    D_g M(\theta_0, g_0)[g-g_0] ~\defeq~ \frac{\partial}{\partial t} M(\theta_0, g_0 + t\, (g-g_0))\big|_{t=0} ~=~& 0 \tag{Neyman Orthogonality}
\end{align}
and a second-order smoothness condition: for all $g\in \mcG$
\begin{align}
    D_{gg} M(\theta_0, g_0)[g-g_0] ~\defeq~ \frac{\partial^2}{\partial t^2} M(\theta_0, g_0 + t\, (g-g_0))\big|_{t=0} ~=~& O\left(\|g-g_0\|_2^2\right) \tag{Smoothness}
\end{align}
and that the moment $m$ satisfy the stochastic equicontinuity conditions:
\begin{align}\label{eqn:equicont}
\begin{aligned}
    \sqrt{n} \left\|A(\hat{g}) - A(g_0) - (A_n(\hat{g}) - A_n(g_0))\right\|_{op} =~& o_p(1)\\
    \sqrt{n} \left\|V(\hat{g}) - V(g_0) - (V_n(\hat{g}) - V_n(g_0))\right\|_2 =~& o_p(1).
\end{aligned} \tag{Stochastic Equicontinuity}
\end{align}
Assume that $A(g_0)^{-1}$ exists and that for any $g,g'\in \mcG$:
\begin{equation}
\|A(g) - A(g')\|_{op}=O\left(\|g-g'\|_2\right).    
\end{equation}
Moreover, assume that for any $i,j\in [p]\times[p]$, the random variable $a_{i,j}(Z;g_0)$ has bounded variance. Then $\hat{\theta}$ is asymptotically normal:
\begin{equation}
    \sqrt{n} \left(\hat{\theta} - \theta_0\right) \xrightarrow{n\rightarrow \infty, d} N\left(0, A(g_0)^{-1} \E\left[m(Z; \theta_0, g_0)\, m(Z; \theta_0, g_0)^\top\right] A(g_0)^{-1}\right).
\end{equation}
\end{theorem}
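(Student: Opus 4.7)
The plan is to exploit linearity of $m$ in $\theta$ to reduce the analysis to that of the moment evaluated at the true parameter, and then to decompose the nuisance-perturbed moment into a CLT term, a bias term, and an empirical process term. Concretely, since $M_n(\theta, g) = A_n(g)\,\theta + V_n(g)$, the defining condition $M_n(\hat\theta, \hat g) = o_p(n^{-1/2})$ rearranges to
\begin{equation*}
A_n(\hat g)\,(\hat\theta - \theta_0) \;=\; -\,M_n(\theta_0, \hat g) + o_p(n^{-1/2}).
\end{equation*}
So the core task is to show $\sqrt{n}\, M_n(\theta_0, \hat g) = \sqrt{n}\, M_n(\theta_0, g_0) + o_p(1)$, after which Slutsky combined with the multivariate CLT for the i.i.d.\ mean $M_n(\theta_0, g_0)$ delivers the result, provided $A_n(\hat g) \to A(g_0)$ in probability.

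For the key decomposition I would write
\begin{equation*}
M_n(\theta_0, \hat g) - M_n(\theta_0, g_0) \;=\; \bigl[M(\theta_0, \hat g) - M(\theta_0, g_0)\bigr] \;+\; \Delta_n(\hat g),
\end{equation*}
where
\begin{equation*}
\Delta_n(\hat g) \;\defeq\; \bigl[M_n(\theta_0,\hat g) - M_n(\theta_0, g_0)\bigr] - \bigl[M(\theta_0,\hat g) - M(\theta_0, g_0)\bigr].
\end{equation*}
Using linearity, $\Delta_n(\hat g) = \bigl[(A_n(\hat g) - A_n(g_0)) - (A(\hat g) - A(g_0))\bigr]\theta_0 + \bigl[(V_n(\hat g) - V_n(g_0)) - (V(\hat g) - V(g_0))\bigr]$, which is exactly the object controlled by the stochastic equicontinuity hypothesis; it is therefore $o_p(n^{-1/2})$. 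For the population remainder, I would Taylor-expand $t \mapsto M(\theta_0, g_0 + t(\hat g - g_0))$ along the path from $g_0$ to $\hat g$: the first-order term vanishes by Neyman orthogonality at $g_0$, and the integral remainder is bounded by the smoothness assumption (interpreted, as is standard, uniformly along the path), giving $O_p(\|\hat g - g_0\|_2^2) = o_p(n^{-1/2})$ by the consistency rate.

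It remains to verify $A_n(\hat g) \xrightarrow{p} A(g_0)$. I would write
\begin{equation*}
A_n(\hat g) - A(g_0) \;=\; \bigl[A_n(\hat g) - A_n(g_0) - (A(\hat g) - A(g_0))\bigr] \;+\; \bigl[A(\hat g) - A(g_0)\bigr] \;+\; \bigl[A_n(g_0) - A(g_0)\bigr],
\end{equation*}
and argue: the first bracket is $o_p(n^{-1/2})$ by stochastic equicontinuity, the second is $O_p(\|\hat g - g_0\|_2) = o_p(1)$ by the Lipschitz assumption on $A$, and the third is $o_p(1)$ by the weak law of large numbers, which applies coordinatewise since each $a_{i,j}(Z;g_0)$ has finite variance. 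Invertibility of $A(g_0)$ and the continuous mapping theorem then yield $A_n(\hat g)^{-1} \xrightarrow{p} A(g_0)^{-1}$. Combining with the CLT $\sqrt{n}\, M_n(\theta_0, g_0) \xrightarrow{d} N\bigl(0, \E[m\, m^\top]\bigr)$ and Slutsky's theorem gives the claimed asymptotic distribution.

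The main obstacle, in my view, is handling the population remainder $M(\theta_0, \hat g) - M(\theta_0, g_0)$: the stated smoothness condition evaluates $D_{gg}M$ at $g_0$, but the Taylor remainder requires a bound along the whole segment $\{g_0 + t(\hat g - g_0) : t \in [0,1]\}$. Every other ingredient — linearity, stochastic equicontinuity, LLN, Slutsky — is essentially bookkeeping; the interpretation of ``second-order smoothness'' as holding uniformly along this path (which is standard in the orthogonal moments literature and should be stated as such) is what makes the bias term genuinely $o_p(n^{-1/2})$.
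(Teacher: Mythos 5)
Your proposal is correct and takes essentially the same route as the paper's proof: linearity reduces the problem to $M_n(\theta_0,\cdot)$, the stochastic equicontinuity hypothesis absorbs the centered empirical difference, Neyman orthogonality plus second-order smoothness makes the population bias $o_p(n^{-1/2})$, and the Jacobian is controlled by equicontinuity, the Lipschitz condition and the LLN before invoking the CLT and Slutsky. The only cosmetic differences are that you invert the empirical Jacobian $A_n(\hat g)$ at the end, which sidesteps the paper's intermediate step of deriving $\|\hat\theta-\theta_0\|_2=O_p(n^{-1/2})$ to absorb its $o_p(\|\hat\theta-\theta_0\|_2)$ error terms, and your reading of the smoothness condition as a bound on the second-order Taylor remainder along the segment from $g_0$ to $\hat g$ is exactly how the paper itself uses it.
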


The first three conditions are standard assumptions in the literature on debiased machine learning. The final condition (stochastic equicontinuity) is exactly where sample splitting comes very handy in the literature. To illustrate the reason why sample splitting helps with the stochastic equicontinuity condition, let us consider the first part of the condition (the reasoning is analogous for the second part). It asks that the difference of two centered empirical processes, namely $A_n(\hat{g})-A(\hat{g})$ and $A_n(g_0) - A(g_0)$, goes to zero faster than $n^{-1/2}$. We expect each empirical process to go down to zero at exactly $n^{-1/2}$ and so this condition asks, since $\hat{g}$ converges to $g_0$ is the empirical process continuous in its argument and for that reason does the difference converge to zero faster than each individual component. If the estimate $\hat{g}$ was fitted on a separate sample, then conditional on $\hat{g}$, we have that each element $t=(i,j)\in [p]\times[p]$ of $A_n(\hat{g}) - A_n(g_0)$ is an empirical average of iid random variables with mean $A(\hat{g})-A(g_0)$. Thus a simple Bernstein inequality would show that the difference of the two empirical processes would converge to zero at the order of:
\begin{align}
    O_p\left(\sqrt{\frac{\E[(a_t(Z;\hat{g}) - a_t(Z;g_0))^2]}{n}} + \frac{1}{n}\right) = O_p\left(\sqrt{\frac{\|\hat{g}-g_0\|_2^2}{n}} + \frac{1}{n}\right) = o_p(n^{-1/2})
\end{align}
where we also invoked a mean-squared-continuity property of $a_t(Z;g)$ and the fact that $\|\hat{g}-g_0\|_2=o_p(1)$. Thus, with sample splitting, no further constraint is required from $\hat{g}$, other than a convergence rate on $\|\hat{g}-g_0\|_2$. In fact, as was noted in recent work of \cite{chernozhukov2021simple}, in the above step it suffices to assume that $\E[(a_t(Z;\hat{g}) - a_t(Z;g_0))^2] = O\left(\|\hat{g}-g_0\|_2^q\right)$ for any $q<\infty$, which is a much weaker mean-squared-continuity assumption, and the property would still hold, since $\|\hat{g}-g_0\|_2^{q/2} n^{-1/2}=o(n^{-1/2})$, whenever $\|\hat{g}-g_0\|_2=o_p(1)$.
 
Without sample splitting, note that $\hat{g}$ is now correlated with the samples in the empirical averages and hence $A_n(\hat{g}) - A_n(g_0)$ is no longer an average of i.i.d. random variables. Typical approaches would try to prove a uniform stochastic equicontinuity property over the function space $\mcG$, typically referred to as a Donsker property of the function space $\mcG$. In particular, if we could show that with high probability:
\begin{align}
    \forall g\in \mcG: \left\|A({g}) - A(g_0) - (A_n({g}) - A_n(g_0))\right\|_{op} = O\left(\delta_n \|g-g_0\|_2 + \delta_n^2\right) =
    O\left(\delta_n^2 +  \|g-g_0\|_2^2\right)
\end{align}
then the above property would also hold for $\hat{g}$. Subsequently, since we know that $\|\hat{g}-g_0\|_2^2=o_p(n^{-1/2})$, by our convergence rate assumptions on $\hat{g}$, then it would suffice that $\delta_n^2 = o(n^{-1/2})$. Such localized concentration inequalities have been known to hold for Donsker classes, which are typically defined via entropy integrals, and more recently it was also noted that such inequalities are satisfied with $\delta_n$ being the critical radius of the space $\mcG$, defined via localized Rademacher complexities.

However, the latter approach is conservative as it requires a uniform control over the function space $\mcG$ and does not utilize at all the properties of the estimation algorithm itself. In particular, as we will show in the next section, the main result of our work is that this stochastic equicontinuity condition follows from $o(n^{-1/2})$ leave-one-out stability conditions on our estimation algorithm, which are typical in the machine learning literature and in the excess risk and generalization bounds literature.

\section{Stochastic Equicontinuity via Stability}

We will show that the Condition~\eqref{eqn:equicont} is satisfied, whenever the estimate $\hat{g}$ satisfies leave-one-out stability properties and the moment satisfies the weak mean-squared-continuity property of \cite{chernozhukov2021simple}. We start by some preliminary definitions required to state our stability conditions. Define $Z^{(-l)}$ as the data $Z_1,\ldots,Z_n$ with the $l$-th data point $Z_l$ replaced with an independent copy $\tilde{Z}_l$.
Define $Z^{(-l_1,-l_2)}$ as the data $Z_1,\ldots,Z_n$ with both the $l_1$-th and the $l_2$-th data points $Z_{l_1},Z_{l_2}$ replaced with independent copies $\tilde{Z}_{l_1},\tilde{Z}_{l_2}$. Define similarly for $Z^{(-l_1,-l_2,-l_3)}$ and so on.
Let $\hat g^{(-l)}$ be the estimator trained on $Z^{(-l)}$ instead of $Z_1,\ldots,Z_n.$
Similarly, let $\hat g^{(-l_1,-l_2)}$ be trained on $Z^{(-l_1,-l_2)}$ and so on. Moreover, we will always denote with $Z$ a fresh random variable drawn from the distribution of the samples, but which is not part of any training sample. For any random variable $X$, we denote with $\|X\|_1:=\E[|X|]$, with $\|X\|_2:=\sqrt{\E[X^2]}$, and with $\|X\|_p:=\left(\E[|X|^p]\right)^{1/p}$ for any $p\ge 1$ in general. 

\begin{lemma}[Main Lemma]\label{lem:stoc-eq-jac}
If the estimation algorithm satisfies the stability conditions: for all $i,j\in [p]$
\begin{align}
\max_{l\in[n]}\Big\|a_{i,j}(Z_l,\hat g)-a_{i,j}(Z_l,\hat g^{(-l)})\Big\|_1=~& o(n^{-1/2}) &
\max_{l\in[n]}\Big\|a_{i,j}(Z,\hat g)-a_{i,j}(Z,\hat g^{(-l)})\Big\|_2=~& o(n^{-1/2})\\
\max_{l\in[n]}\Big\|\nu_{i}(Z_l,\hat g)-\nu_{i}(Z_l,\hat g^{(-l)})\Big\|_1=~& o(n^{-1/2}) &
\max_{l\in[n]}\Big\|\nu_{i}(Z,\hat g)-\nu_{i}(Z,\hat g^{(-l)})\Big\|_2=~& o(n^{-1/2})
\end{align}
and the moment satisfies the mean-squared-continuity condition: for all $i,j\in [p]$
\begin{align}
    \forall g,g': \E[(a_{i,j}(Z;g) - a_{i,j}(Z;g'))^2]\leq~& L \|g-g'\|_2^q & \E[(\nu_{i}(Z;g) - \nu_{i}(Z;g'))^2]\leq L \|g-g'\|_2^q
\end{align}
for some $q<\infty$ and some $L>0$, then the Condition~\eqref{eqn:equicont} is satisfied.
\end{lemma}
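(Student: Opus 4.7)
I would reduce~\eqref{eqn:equicont} to an entry-wise statement; the bound for $V$ follows identically after replacing $a_{i,j}$ with $\nu_i$. Fixing $(i,j)$ and writing $\phi(z;g):=a_{i,j}(z;g)-a_{i,j}(z;g_0)$, the quantity to control is
\begin{equation*}
T_n ~:=~ \frac{1}{\sqrt{n}}\sum_{l=1}^n\Bigl[\phi(Z_l;\hat{g})-\E_Z[\phi(Z;\hat{g})\mid\hat{g}]\Bigr],
\end{equation*}
and the goal is $T_n=o_p(1)$. The opening move is to insert $\hat{g}^{(-l)}$ into each summand and decompose $T_n=T_n^{(1)}+T_n^{(2)}-T_n^{(3)}$, where $T_n^{(1)}$ is the same expression with $\hat{g}^{(-l)}$ in place of $\hat{g}$, $T_n^{(2)}=\tfrac{1}{\sqrt{n}}\sum_l[\phi(Z_l;\hat{g})-\phi(Z_l;\hat{g}^{(-l)})]$, and $T_n^{(3)}=\tfrac{1}{\sqrt{n}}\sum_l\E_Z[\phi(Z;\hat{g})-\phi(Z;\hat{g}^{(-l)})\mid\hat{g},\hat{g}^{(-l)}]$. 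The on-sample $L^1$ stability bounds $\|T_n^{(2)}\|_1\le\sqrt{n}\cdot o(n^{-1/2})=o(1)$, and Jensen applied to the off-sample $L^2$ stability bounds $\|T_n^{(3)}\|_2\le\sqrt{n}\cdot o(n^{-1/2})=o(1)$; both correction terms are therefore $o_p(1)$ by Markov.

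The main work is showing $\E[(T_n^{(1)})^2]=o(1)$, where $T_n^{(1)}=\tfrac{1}{\sqrt{n}}\sum_l W_l$ with $W_l:=\phi(Z_l;\hat{g}^{(-l)})-\E_Z[\phi(Z;\hat{g}^{(-l)})\mid\hat{g}^{(-l)}]$. Since $Z_l$ is independent of $\hat{g}^{(-l)}$, mean-squared-continuity directly yields $\E[W_l^2]\le 4L\,\E[\|\hat{g}^{(-l)}-g_0\|_2^q]=o(1)$ (modulo mild integrability of the consistency rate), so the diagonal contribution to $\E[(T_n^{(1)})^2]$ is $o(1)$. For the off-diagonal pairs I introduce $\hat{g}^{(-l,-l')}$ (independent of both $Z_l$ and $Z_{l'}$) and split $W_l=C_l+D_l$ (analogously $W_{l'}=C_{l'}+D_{l'}$) with
\begin{align*}
C_l&~:=~\phi(Z_l;\hat{g}^{(-l,-l')})-\E_Z[\phi(Z;\hat{g}^{(-l,-l')})\mid\hat{g}^{(-l,-l')}],\\
D_l&~:=~[\phi(Z_l;\hat{g}^{(-l)})-\phi(Z_l;\hat{g}^{(-l,-l')})]-\E_Z[\phi(Z;\hat{g}^{(-l)})-\phi(Z;\hat{g}^{(-l,-l')})\mid\hat{g}^{(-l)},\hat{g}^{(-l,-l')}].
\end{align*}
Three observations then kill the off-diagonal: (a)~$\E[C_l C_{l'}]=0$ because $C_l,C_{l'}$ are conditionally independent and mean zero given $\hat{g}^{(-l,-l')}$; (b)~since $Z_l$ is independent of both $\hat{g}^{(-l)}$ and $\hat{g}^{(-l,-l')}$, distributional exchangeability lets the off-sample $L^2$ stability bound every piece of $D_l$, giving $\|D_l\|_2=o(n^{-1/2})$ and thus $\E[D_l D_{l'}]=o(n^{-1})$ by Cauchy--Schwarz; (c)~$\E[C_l D_{l'}]=\E[D_l C_{l'}]=0$.

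The delicate step, and the main obstacle, is (c): a naive Cauchy--Schwarz only gives $\|C_l\|_2\|D_{l'}\|_2=o(1)\cdot o(n^{-1/2})$, which summed across $n^2$ pairs and divided by $n$ blows up to $o(\sqrt{n})$, not $o(1)$. The resolution exploits the fact that $D_{l'}$ is, by construction, the centered form of a function of $Z_{l'}$: conditioning on $\mc{F}:=\sigma(\{Z_k:k\ne l'\},\tilde{Z}_l,\tilde{Z}_{l'})$ makes both $\hat{g}^{(-l')}$ and $\hat{g}^{(-l,-l')}$ measurable (their training samples lie in $\mc{F}$) and leaves $Z_{l'}$ independent of $\mc{F}$, so $\E[D_{l'}\mid\mc{F}]=0$ directly from the definition of $D_{l'}$; since $C_l$ is also $\mc{F}$-measurable, $\E[C_l D_{l'}]=\E[C_l\,\E[D_{l'}\mid\mc{F}]]=0$, and symmetrically for $\E[D_l C_{l'}]$. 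Summing the $n(n-1)$ off-diagonal contributions and dividing by $n$ yields $o(1)$; combining with the diagonal bound and the $T_n^{(2)},T_n^{(3)}$ estimates shows $T_n=o_p(1)$, proving the Jacobian piece of~\eqref{eqn:equicont}. The offset piece for $V$ follows by the identical argument with $\nu_i$ in place of $a_{i,j}$.
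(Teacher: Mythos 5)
Your proposal is correct and follows essentially the same route as the paper's proof: your $T_n^{(1)},T_n^{(2)},T_n^{(3)}$ match the paper's $J_{3,n},J_{1,n},J_{2,n}$, your $W_l$ is the paper's $K_l$, and your $C_l/D_l$ split with the three vanishing cross-terms is exactly the paper's ``double centering'' via $\hat g^{(-l_1,-l_2)}$ (the paper merely packages $\E[C_lC_{l'}]+\E[C_lD_{l'}]=0$ as the single identity $\E[K_{l_1}^{(l_2)}K_{l_2}]=0$), ending with the same Cauchy--Schwarz plus off-sample stability bound. The integrability caveat you flag on the diagonal term is also implicitly present in the paper's argument, so there is no gap relative to it.
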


\begin{remark}
We show in \cref{bla} that the stability conditions are tight. We present a counter example for which $\Big\|\nu_{i}(Z,\hat g)-\nu_{i}(Z,\hat g^{(-l)})\Big\|_2$ is exactly of order $n^{-1/2}$ and for which the Stochastic Equicontinuity condition is not satisfied. 
\end{remark}

\begin{remark}
We note that prior work of \cite{chernozhukov2020adversarial} that established asymptotic normality without sample splitting via stability, required significantly stronger conditions than what we invoke here. In particular, if we let $\beta_n$ be the stability of the estimator $\hat{g}$ as measured by the quantities in Lemma~\ref{lem:stoc-eq-jac}, then the prior work of \cite{chernozhukov2020adversarial}, would require that $n \beta_n \|\hat{g}-g_0\|_2\to 0$. If we only know that $\|\hat{g}-g_0\|_2=o(n^{-1/4})$, then the above would require $\beta_n = o(n^{-3/4})$, which is much slower than $o(n^{-1/2})$. Moreover, for bagged kernel estimators that we analyze in section~\ref{sec:baggapp}, the prior work would require that if we use bags of size $m$, then the bias of the base estimator with $m$ samples, denoted as $\text{bias}(m)$ satisfies that $m \text{bias}(m)\to 0$. This would rarely be satisfied and in prior work, the only concrete case that was given was forest estimators with binary variables under strong sparsity conditions, in which case the bias decays exponentially with the sample size. For more general estimators, we expect $\text{bias}(m)=1/m^{\alpha}$, for some $\alpha$. For such settings, our work still applies and, as we show in section~\ref{sec:baggapp}, gives results for bagged $1$-nearest neighbor estimation algorithms, which do not satisfy any entropy or critical radius bound, but are stable. The key innovation that enables our improved results is a ``double centering'' approach that derives intuition from techniques invoked in the analysis of cross-validation via stability and the proof of the Efron-Stein inequality\cite{boucheron2003concentration}. This idea has already been used in the study of the cross validated risk \cite{austern2020asymptotics,bayle2020cross}.\end{remark}

\begin{proof}[Proof of Main Lemma]
We will show the first part of the lemma, i.e. that if for all $i,j\in [p]$
\begin{align}
\max_{l\in[n]}\Big\|a_{i,j}(Z_l,\hat g)-a_{i,j}(Z_l,\hat g^{(-l)})\Big\|_1=~& o(n^{-1/2}) &
\max_{l\in[n]}\Big\|a_{i,j}(Z,\hat g)-a_{i,j}(Z,\hat g^{(-l)})\Big\|_2=~& o(n^{-1/2})
\end{align}
then 
\begin{align}
  \sqrt{n} \left\|A(\hat{g}) - A(g_0) - (A_n(\hat{g}) - A_n(g_0))\right\|_{op} =~& o_p(1)
\end{align}
The analogous statement for $\nu$ and $V$, follows in an identical manner.

Since $A(g)$ and $A_n(g)$ are $p\times p$ matrices and $p=O(1)$, it suffices to show the above property for every element $(i,j)\in [p]\times [p]$, i.e. that
$$\sqrt{n} \left|A_{i,j}(\hat{g}) - A_{i,j}(g_0) - (A_{n,i,j}(\hat{g}) - A_{n,i,j}(g_0))\right|=o_p(1).$$
For this it suffices to show that:
$$J_n:=\sqrt{n} \left\|A_{i,j}(\hat{g}) - A_{i,j}(g_0) - (A_{n,i,j}(\hat{g}) - A_{n,i,j}(g_0))\right\|_1=o(1).$$

\textbf{In the remainder of the proof we look at a particular $(i,j)$ and hence for simplicity we overload notation and we let $a:=a_{i,j}$ and $A:=A_{i,j}$.}

By triangle inequality and monotonicity of $L^p$ norms we have
\begin{align}
    J_n =~& \left\|\frac{1}{\sqrt{n}}\sum_{l=1}^n \left\{a(Z_l,\hat g)-A(\hat g)-\big[a(Z_l,g_0)-A(g_0)\big]\right\}\right\|_1
   \\
    \le~& \left\|\frac{1}{\sqrt{n}}\sum_{l=1}^n \left\{ a(Z_l,\hat g) - a(Z_l,\hat g^{(-l)})\right\}\right\|_1
   +\sqrt{n}\max_{l\in[n]}\left\|A(\hat g)-A(\hat g^{(-l)})\right\|_1\\
   ~& +\left\|\frac{1}{\sqrt{n}}\sum_{l=1}^n\left\{a(Z_l,\hat g^{(-l)})-a(Z_l,g_0)-\big(A(\hat g^{(-l)})-A(g_0)\big)\right\}\right\|_2.
\end{align}

To ease notations, we denote
\begin{align}
J_{1,n}:=~&\left\|\frac{1}{\sqrt{n}}\sum_{l=1}^n \left\{ a(Z_l,\hat g) - a(Z_l,\hat g^{(-l)})\right\}\right\|_1,\\
J_{2,n}:=~& \sqrt{n}\max_{l\in[n]}\left\|A(\hat g)-A(\hat g^{(-l)})\right\|_1,\\
J_{3,n}:=~& \left\|\frac{1}{\sqrt{n}}\sum_l\left\{a(Z_l,\hat g^{(-l)})-a(Z_l,g_0)-\big(A(\hat g^{(-l)})-A(g_0)\big)\right\}\right\|_2.
\end{align}

Now we have that by triangle inequality
\begin{align}
J_{1,n}\le \frac{1}{\sqrt{n}}\sum_l\Big\|a(Z_l,\hat g)-a(Z_l,\hat g^{(-l)})\Big\|_1 \le\sqrt{n}\max_{\ell\in[n]} \Big\|a(Z_l,\hat g)-a(Z_l,\hat g^{(-l)})\Big\|_1=o(1).
\end{align}

Similarly we can handle $J_{2,n}$: 
\begin{align}
J_{2,n}\le~& \sqrt{n}\max_{l\in [n]} \Big\|a(Z,\hat g)-a(Z,\hat g^{(-l)})\Big\|_1 \\
\leq~&  \sqrt{n}\max_{l \in [n]}\Big\|a(Z,\hat g)-a(Z,\hat g^{(-l)})\Big\|_2=o(1).
\end{align}

We now aim to show that $J_{3,n}=o(1)$. Write for simplicity
$$K_l:=a(Z_l,\hat g^{(-l)})-a(Z_l,g_0)-\big(A(\hat g^{(-l)})-A(g_0)\big)$$
Now by expanding the square we obtain
\begin{align}
    J_{3,n}^2=\E\left[\left( \frac{1}{\sqrt{n}}\sum_{l=1}^n K_l\right)^2\right]
    &\leq  \max_{l\in [n]} \E\left[K_l^2\right] + (n-1) \max_{l_1\neq l_2\in [n]} \E[K_{l_1} K_{l_2}]
\end{align}

To bound the first term, we have by mean-squared continuity:
\begin{align}
    \E[K_l^2] =~&\E\left[\left(a(Z_l,\hat g^{(-l)})-a(Z_l,g_0)-\big(A(\hat g^{(-l)})-A(g_0)\big)\right)^2\right]\\
    \le~& 2\E\left[\left(a(Z_l,\hat g^{(-l)})-a(Z_l,g_0)\right)^2\right]+2\E\left[\left(A(\hat g^{(-l)})-A(g_0)\right)^2\right]\\
    =~& 2\E\left[\E\left[\left(a(Z_l,\hat g^{(-l)})-a(Z_l,g_0)\right)^2\Big| Z^{(-l)}\right]\right]+2\E\left[\E\left[\left(A(\hat g^{(-l)})-A(g_0)\right)^2\Big|Z^{(-l)} \right]\right]\\
    \le~& 2L\cdot\E\left[\left\|\hat g^{(-l)}-g_0\right\|_2^q\right]+2L\cdot\E\left[\left\|\hat g^{(-l)}-g_0\right\|_2^q\right]
    = 4L\cdot\mathbb{E}\left[\left\|\hat g-g_0\right\|_2^q\right]=o(1).
\end{align}
where we invoked the property that $\left\|\hat g-g_0\right\|_2=o_p(1)$, and the second to last equality exploited the tower law. 
Thus $\max_{l\in [n]} \E\left[K_l^2\right]=o(1)$.


\paragraph{Double centering.} We now bound the term, $(n-1) \max_{l_1\neq l_2\in [n]} \E[K_{l_1} K_{l_2}]$. Define, for simplicity, for $l_1\neq l_2$
$$K_{l_1}^{(l_2)}:=a(Z_{l_1},\hat g^{(-l_1,-l_2)})-a(Z_{l_1},g_0)-\big(A(\hat g^{(-l_1,-l_2)})-A(g_0)\big).$$
Note that $K_{l_1}^{(l_2)}$ does not depend on the $l_2$-th data point and is only a function of $Z^{(-l_2)},\tilde{Z}_{l_1}.$\\
Moreover, by the definition of $A$, noting that $Z_l$ is independent of $Z^{(-l)}$ and $Z_{l_1}$ is independent of $Z^{(-l_1,-l_2)}$: 
\begin{align}
    \E\left[K_l\Big|Z^{(-l)}\right]
    =~& \E\left[a(Z_l,\hat g^{(-l)})-A(\hat g^{(-l)})\Big|Z^{(-l)}\right]-\mathbb{E}\left[a(Z_l,g_0)-A(g_0)\Big|Z^{(-l)}\right]\\
    =~& \E\left[a(Z_l,\hat g^{(-l)})-A(\hat g^{(-l)})\Big|Z^{(-l)}\right]-\mathbb{E}\left[a(Z_l,g_0)-A(g_0)\right]=0
\end{align}
and
\begin{align}
    &\mathbb{E}\left[K_{l_1}^{(l_2)}\Big| Z^{(-l_1,-l_2)}\right]\\
    &=\mathbb{E}\left[a(Z_{l_1},\hat g^{(-l_1,-l_2)})- A(\hat g^{(-l_1,-l_2)})\Big| Z^{(-l_1,-l_2)}\right]+\mathbb{E}\left[a(Z_{l_1},g_0)-A(g_0)\Big| Z^{(-l_1,-l_2)}\right]\\
    &=\mathbb{E}\left[a(Z_{l_1},\hat g^{(-l_1,-l_2)})- A(\hat g^{(-l_1,-l_2)})\Big| Z^{(-l_1,-l_2)}\right]+\mathbb{E}\left[a(Z_{l_1},g_0)-A(g_0)\right] = 0
\end{align}

For simplicity, we show that $(n-1) \E[K_1 K_2]=o(1)$, i.e. we show that the second term vanishes for $l_1=1$ and $l_2=2$. The same exact arguments generalize to arbitrary $l_1, l_2$. We begin by writing:
\begin{align}
    (n-1) \E[K_{1} K_{2}]
    =(n-1)\mathbb{E}\left[\left(K_{1}-K_{1}^{(2)}\right)K_{2}\right],
\end{align}
since by tower law:
\begin{align}
        \mathbb{E}\left[K_{1}^{(2)}K_{2}\right]=\mathbb{E}\left[\mathbb{E}\left[K_{2}\Big|Z^{(-2)},\tilde{Z}_{1}\right]K_{1}^{(2)}\right]=\mathbb{E}\left[\mathbb{E}\left[K_2\Big|Z^{(-2)}\right]K_1^{(2)}\right]=0
\end{align}

Similarly by conditioning on $Z^{(-1)},\tilde{Z}_2$ and using tower law, we can show that
\begin{align}
    \E\left[\left(K_1-K_1^{(2)}\right)K_2^{(1)}\right]
    =~& \mathbb{E}\left[\mathbb{E}\left[K_1-K_1^{(2)}\Big| Z^{(-1)},\tilde{Z}_2\right]K_2^{(1)}\right]\\
    =~& \mathbb{E}\left[\left\{\mathbb{E}\left[K_1\Big| Z^{(-1)},\tilde{Z}_2\right]-\mathbb{E}\left[K_1^{(2)}\Big| Z^{(-1)},\tilde{Z}_2\right]\right\}K_2^{(1)}\right]\\
    =~& \mathbb{E}\left[\left\{\mathbb{E}\left[K_1\Big| Z^{(-1)}\right]-\mathbb{E}\left[K_1^{(2)}\Big| Z^{(-1,-2)}\right]\right\}K_2^{(1)}\right]=0.
\end{align}

Hence, we have
\begin{align}
(n-1)\mathbb{E}[K_1K_2]=(n-1)\mathbb{E}\left[\left(K_1-K_1^{(2)}\right)K_2\right]=(n-1)\mathbb{E}\left[\left(K_1-K_1^{(2)}\right)\left(K_2-K_2^{(1)}\right)\right]
\end{align} 
With identical arguments, the same equality holds for any indices $l_1, l_2$. By Cauchy-Schwarz:
\begin{align}
    \max_{l_1\neq l_2} (n-1)\mathbb{E}[K_{l_1} K_{l_2}] =~&  \max_{l_1\neq l_2} (n-1)\mathbb{E}\left[\left(K_{l_1}-K_{l_1}^{(l_2)}\right)\left(K_{l_2}-K_{l_2}^{(l_1)}\right)\right]\\
\leq~& \max_{l_1\neq l_2} (n-1) \left\|K_{l_1}-K_{l_1}^{(l_2)}\right\|_2 \left\|K_{l_2}-K_{l_2}^{(l_1)}\right\|_2
\end{align}
Thus for $\max_{l_1\neq l_2} (n-1)\mathbb{E}[K_{l_1} K_{l_2}]=o(1)$ it suffices that:
$\max_{l_1\neq l_2} \left\|K_{l_1}-K_{l_1}^{(l_2)}\right\|_2 = o(n^{-1/2})$.
Expanding the definitions $K_{l_1}$ and $K_{l_1}^{(l_2)}$, the above simplifies to:
\begin{align}
\left\|K_{l_1}-K_{l_1}^{(l_2)}\right\|_2 = \left\|a(Z_{l_1},\hat g^{(-l_1)})-A(\hat g^{(-l_1)})-\left(a(Z_{l_1},\hat g^{(-l_1,-l_2)})-A(\hat g^{(-l_1,-l_2)})\right)\right\|_2.
\end{align}
The latter is upper bounded by a triangle inequality and a Jensen's inequality by:
\begin{align}
    \left\|K_{l_1}-K_{l_1}^{(l_2)}\right\|_2 \leq 2 \left\|a(Z_{l_1},\hat g^{(-l_1)}) - a(Z_{l_1},\hat g^{(-l_1,-l_2)})\right\|_2.
\end{align}
If we denote with $Z$ a fresh random sample not part of the training sets, then since $Z_{l_1}$ is not part of $Z^{(-l_1)}$, we have:
\begin{align}
    \left\|a(Z_{l_1},\hat g^{(-l_1)}) - a(Z_{l_1},\hat g^{(-l_1,-l_2)})\right\|_2 =~& \left\|a(Z,\hat g^{(-l_1)}) - a(Z,\hat g^{(-l_1,-l_2)})\right\|_2\\
    =~& \left\|a(Z,\hat g) - a(Z,\hat g^{(-l_2)})\right\|_2
\end{align}
where we also used the fact that $\tilde{Z}_{l_1}$ only appears in the training sets of $\hat{g}^{(-l_1)}$ and $\hat{g}^{(-l_1, -l_2)}$ and we can simply rename it to $Z_{l_1}$, as they are identically distributed and both independent from all the other data points. Invoking the second of our stability conditions for $a$, we have that: $ 
    \max_{l_1\neq l_2}\left\|K_{l_1}-K_{l_1}^{(l_2)}\right\|_2 = o(n^{-1/2})$.
Hence, $J_{3,n}=o(1)$, which completes the proof.
\end{proof}

\subsection{Tightness of Stability Condition}\label{bla}

We present here an example that shows that, without further structural constraints (on the moment or the function space $\mcG$), the stability condition we impose is required for stochastic equicontinuity condition to hold. Let $(X_i)\overset{i.i.d}{\sim}\rm{unif}[0,1]$ be i.i.d uniform random variables. We set $Y_i:=\mathbb{I}(X_i\le 0.5)$ and set $Z_i:=(X_i,Y_i)$. For any $x\in [0,1]$, we define $c(Z_{1:n},x):=\argmin_{i\le n} |X_i-x|$ the function that returns the index of the nearest example to $x$ in $\{X_1,\dots,X_n\}$ and note that the quantity $Y_{c(Z_{1:n},x)}$ is its nearest neighbour estimator. Let $$\hat{g}(Z_{1:n})(x,y):=n^{1/6}\mathbb{I}(y\ne Y_{c(Z_{1:n},x)}),\qquad \nu(Z,g):= g(Z)^3.$$
We remark that $\nu(\cdot,\hat g)$ does not satisfy our stability conditions as  $ \|\nu(Z,\hat g)-\nu(Z,\hat g^{(-1)})\|_2$ is exactly of order $1/\sqrt{n}$, neither does it respect the stochastic equicontinuity property. 

\begin{lemma}\label{lem:countereg}
Let $(X_i)\overset{i.i.d}{\sim}\rm{unif}[0,1]$ be i.i.d uniform random variables. We set $Y_i:=\mathbb{I}(X_i\le 0.5)$ and set $Z_i:=(X_i,Y_i)$. There are constants $C,c>0$ such that 
\begin{itemize}
\item Set $g_0(Z):= 0$ then we have $\|\hat g(Z)-g_0(Z)\|^2_2\rightarrow 0$
    \item  $\frac{C}{\sqrt{n}}\ge \|\nu(Z,\hat g)-\nu(Z,\hat g^{(-1)})\|_2\ge \frac{c}{\sqrt{n}}$
    \item $\sqrt{n}|V(\hat g)-V( g_0)-\big(V_n(\hat g)-V_n(g_0)\big)|\not\xrightarrow{P}0. $
\end{itemize}
\end{lemma}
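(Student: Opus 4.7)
The proof rests on an explicit geometric description of the 1-NN error on $[0,1]$. Let $X_{(1)} < \dots < X_{(n)}$ be the sorted training covariates and let $X_{(k^*)} \le 0.5 < X_{(k^*+1)}$ denote the two points bracketing $0.5$; set $a := 0.5 - X_{(k^*)}$ and $b := X_{(k^*+1)} - 0.5$. For a fresh uniform $X$, the 1-NN errs iff $X$ and its nearest training point lie on opposite sides of $0.5$, and this error region is an interval straddling $0.5$ of total Lebesgue measure exactly $|a-b|/2$. Classical extreme-value theory for uniform order statistics gives $(na,nb) \Rightarrow (A,B)$ with $A,B$ iid $\mathrm{Exp}(1)$, so $\mathrm{err}(Z_{1:n}) := \Pr(\text{1-NN error}\mid Z_{1:n}) = |a-b|/2$ satisfies $\E[\mathrm{err}] = \Theta(1/n)$. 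Part 1 then follows immediately: since $g_0 \equiv 0$ and $\hat g(Z) = n^{1/6}\,\mathbb{I}(Y\ne Y_{N_1})$, one has $\|\hat g - g_0\|_2^2 = n^{1/3}\,\E[\mathrm{err}] = \Theta(n^{-2/3}) \to 0$.

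\textbf{Tight stability (Part 2).} Because $\mathbb{I}^3 = \mathbb{I}$, $\nu(Z,\hat g) = n^{1/2}\mathbb{I}(Y\ne Y_{N_1})$, so $\|\nu(Z,\hat g) - \nu(Z,\hat g^{(-1)})\|_2^2 = n\,\Pr(E_1\ne E_2)$, where $E_j$ is the 1-NN error indicator in sample $j$ (sample 2 obtained by swapping $Z_1 \to \tilde Z_1$). The event $\{E_1\ne E_2\}$ requires simultaneously (i) the NN identity to change under the swap, which forces $Z_1 = N_1$ or $\tilde Z_1 = N_2$ (each of probability $1/n$), and (ii) the old and new NNs to lie on opposite sides of $0.5$, which, since both are within $O(1/n)$ of $X$, forces $X$ within $O(1/n)$ of $0.5$. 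Combining these two rare events (by conditioning on $X$ and integrating) yields $\Pr(E_1\ne E_2) = O(1/n^2)$, hence $\|\nu - \nu^{(-1)}\|_2 \le C/\sqrt n$. For the matching lower bound I exhibit an explicit event $F$: for a small constant $c > 0$, require $X \in [0.5 - c/n,\,0.5]$, $X_1 \in (0.5,\,0.5+c/n]$, all of $X_2,\dots,X_n,\tilde X_1$ at distance more than $2c/n$ from $X$, and the nearest point of $\{X_2,\dots,X_n\}$ to $X$ lying on the left of $0.5$. On $F$, $X_1$ is the sample-1 NN (on the wrong side, $E_1 = 1$) while the sample-2 NN is on the left ($E_2 = 0$). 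The four constraints are essentially independent with marginal probabilities $c/n$, $c/n$, $e^{-4c}+o(1)$, and $1/2+o(1)$, so $\Pr(F) = \Omega(1/n^2)$.

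\textbf{Failure of equicontinuity (Part 3).} The 1-NN of $X_i$ inside $Z_{1:n}$ is $X_i$ itself, so $\nu(Z_i,\hat g) = 0$ for every $i$, giving $V_n(\hat g) = 0$; similarly $V(g_0) = V_n(g_0) = 0$. Therefore $\sqrt n\,|V(\hat g) - V(g_0) - (V_n(\hat g)-V_n(g_0))| = \sqrt n\,V(\hat g) = n\cdot\mathrm{err}(Z_{1:n}) = \tfrac{n}{2}|a-b|$, which by the weak-convergence statement $(na,nb)\Rightarrow(A,B)$ tends in distribution to $\tfrac{1}{2}|A-B|\sim \mathrm{Exp}(2)$, a nondegenerate random variable on $[0,\infty)$. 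This quantity therefore cannot converge to zero in probability, falsifying stochastic equicontinuity. The main technical hurdle is Part 2: sharp bookkeeping for $\Pr(E_1\ne E_2)$ requires careful joint analysis of $X$, $Z_1$, $\tilde Z_1$, and the order statistics of the remaining training points, both to localize $X$ near $0.5$ for the upper bound and to justify the independence-style decomposition in the lower bound.
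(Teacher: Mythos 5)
Your route is the same as the paper's in all essentials: identify the 1-NN error set as the interval at $0.5$ of length $\tfrac12|a-b|$, use $\big(na,nb\big)\Rightarrow (A,B)$ iid $\mathrm{Exp}(1)$ so that $n\cdot\mathrm{err}\Rightarrow \mathrm{Exp}(2)$ (this is exactly the paper's $\lambda_{X_{c_1^*,c_2^*}}$ limit), note $\nu(Z,\hat g)=\sqrt{n}\,\mathbb{I}(Y\ne Y_{NN})$, and observe $V_n(\hat g)=V(g_0)=V_n(g_0)=0$ so Part~3 reduces to the nondegenerate exponential limit. Two of your steps, however, are not yet proofs. First, $\E[\mathrm{err}]=\Theta(1/n)$ does not follow from the weak convergence alone (convergence in distribution gives no control of expectations); the paper spends its $\E[\lambda_{X_{c_1^*,c_2^*}}]=O(1)$ argument on precisely this point. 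The fix is easy and in fact simpler than the paper's: $\E[\mathrm{err}]\le \tfrac12(\E[a]+\E[b])$ and $\Pr(a>t)\le (1-t)^n$ gives $\E[a]=O(1/n)$, but you need to say it (only the upper bound is needed for Part~1). Similarly, in the upper bound of Part~2 the statement that the old and new NNs being on opposite sides ``forces $X$ within $O(1/n)$ of $0.5$'' is not a deterministic implication (NN distances are only $O_p(1/n)$), so the claim $\Pr(E_1\ne E_2)=O(1/n^2)$ is at this point a plan, not an argument; it needs the decoupling the paper performs (the index-change event has probability at most $4/n$ and is independent of the positions $(X_{c_1^*},X_{c_2^*})$, hence of the error-interval length, whose expectation is $O(1/n)$), or your conditioning-on-$X$ computation actually carried out.

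Second, your lower-bound event $F$ has a genuine hole: on $F$ you conclude ``the sample-2 NN is on the left, so $E_2=0$,'' but you only constrain the nearest point of $\{X_2,\dots,X_n\}$ to lie left of $0.5$; nothing in $F$ prevents $\tilde X_1$ from being the nearest point of the second sample $\{\tilde X_1,X_2,\dots,X_n\}$ to $X$ while lying to the right of $0.5$, in which case $E_2=1=E_1$ and the event contributes nothing to $\{E_1\ne E_2\}$. This is easily repaired --- either add to $F$ the requirement that the nearest point of $\{X_2,\dots,X_n,\tilde X_1\}$ to $X$ lies left of $0.5$ (still probability $\Omega(1)$ given the other constraints), or subtract the event that $\tilde X_1$ is closer to $X$ than all of $X_2,\dots,X_n$ (conditional probability $O(1/n)$), so $\Pr(F\cap\{E_1\ne E_2\})=\Omega(1/n^2)$ survives --- but as written the step fails. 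For comparison, the paper's lower bound avoids an explicit event by showing $\Pr\big(c_1^{*(-1)}\ne c_1^*\big)\ge \tfrac1n-\tfrac{2}{n(n+1)}$ and that, conditionally on the nearest-to-$0.5$ index changing with $c_2^*$ unchanged, the boundary shift $\tfrac12|X_{c_1^*}-X^{(-1)}_{c_1^{*(-1)}}|$ has conditional expectation of order $1/n$; your explicit-event construction is a perfectly legitimate alternative once the $\tilde X_1$ issue is patched.
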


\section{Application: Bagging Estimators}\label{sec:baggapp}
We remark that if the functions $a$ and $\nu$ satisfy certain $L^p$-Lipchitz conditions then the stability conditions of \cref{lem:stoc-eq-jac} are implied by the algorithmic stability of the estimator $\hat g$. Those conditions are only marginally stronger than the condition of mean-squared-continuity found in \cref{lem:stoc-eq-jac} \begin{corollary}
\label{simpler}
Fix any constant $r>1.$ Suppose that there is $L<\infty$ such that the estimation algorithm satisfies the following uniform $L^{2r}$-continuity condition: for all $i,j\in [p]$ and $l\in [n]$
\begin{align}&\label{morgane=bed}
    \E[(a_{i,j}(Z_l;\hat g) - a_{i,j}(Z_l;\hat{g}^{(-l)}))^{2}]\leq~ L\cdot \mathbb{E}\Big[\sup_x\|\hat g(x)-\hat g^{(-l)}(x)\|_2^{2r}\Big]^{1/r} \\& \E[(\nu_{i}(Z_l;\hat g) - \nu_{i}(Z_l;\hat{g}^{(-l)}))^{2}]\leq~ L\cdot \mathbb{E}\Big[\sup_x\|\hat g(x)-\hat g^{(-l)}(x)\|_2^{2r}\Big]^{1/r}\qquad \tag{$L^{2r}$-Continuity}
    \\&
    \E[(a_{i,j}(Z;\hat g) - a_{i,j}(Z;\hat{g}^{(-l)}))^{2}]\leq~ L\cdot \mathbb{E}\Big[\sup_x\|\hat g(x)-\hat g^{(-l)}(x)\|_2^{2r}\Big]^{1/r} \\& \E[(\nu_{i}(Z;\hat g) - \nu_{i}(Z;\hat{g}^{(-l)}))^{2}]\leq~ L\cdot \mathbb{E}\Big[\sup_x\|\hat g(x)-\hat g^{(-l)}(x)\|_2^{2r}\Big]^{1/r}.
\end{align}Suppose in addition that 
\begin{align}\label{algo_stab}
\max_{l\le n}\mathbb{E}_{Z_{1:n}}\Big[\sup_x\|\hat g(x)-\hat g^{(-l)}(x)\|_2^{2r}\Big]^{1/2r}=o\Big(n^{-1/2}\Big) \tag{Algorithmic Stability}
\end{align}
and if in addition mean-squared-continuity condition is satisfied, 
 then the conditions of \cref{lem:stoc-eq-jac} are satisfied.

\end{corollary}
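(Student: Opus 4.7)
The plan is to derive each of the four leave-one-out stability bounds appearing in \cref{lem:stoc-eq-jac} from the three hypotheses of the corollary, noting that the mean-squared-continuity assumption is shared verbatim between the two statements, so only the four stability bounds need to be verified.

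For the two $L^2$-norm stability conditions on a fresh sample $Z$ (for $a_{i,j}$ and for $\nu_i$), I would apply the corresponding $L^{2r}$-continuity hypothesis to bound $\E[(a_{i,j}(Z,\hat g)-a_{i,j}(Z,\hat g^{(-l)}))^2]$ by $L\cdot\mathbb{E}[\sup_x\|\hat g(x)-\hat g^{(-l)}(x)\|_2^{2r}]^{1/r}$, take a square root to pass to the $L^2$ norm, and then plug in the \eqref{algo_stab} rate to conclude that the result is uniformly $o(n^{-1/2})$ across $l\in[n]$. The argument for $\nu_i$ is identical.

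For the two $L^1$-norm stability conditions evaluated at a training point $Z_l$, I would first use Jensen's inequality $\|X\|_1\le\|X\|_2$ to reduce to an $L^2$ bound, and then apply the same two-step argument, this time using the $L^{2r}$-continuity bound stated at $Z_l$ rather than at a fresh $Z$. Both versions appear in the hypothesis of the corollary, so no additional work is required to handle the training-point case.

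There is no genuinely hard step here: the argument is a short chain of Jensen's inequality combined with the stated hypotheses, and the four bounds in \cref{lem:stoc-eq-jac} follow directly. The substantive content of the corollary is conceptual: by phrasing stability in terms of the sup-norm fluctuation of $\hat g$ rather than of the moment pieces $(a_{i,j},\nu_i)$ directly, it becomes checkable via standard algorithmic stability bounds, as applied in \cref{sec:baggapp} to bagged estimators, thereby decoupling the verification of stability from the specific analytic structure of the moment.
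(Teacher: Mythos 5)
Your proposal is correct and matches the paper's own (very terse) argument: the paper simply notes that by monotonicity of $L^p$ norms one plugs the \eqref{algo_stab} bound into the right-hand sides of the $L^{2r}$-continuity conditions to obtain the stability conditions of \cref{lem:stoc-eq-jac}, which is exactly your square-root-plus-Jensen chain, with mean-squared-continuity carried over verbatim. No gap; nothing further is needed.
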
 The uniform $L^{2r}$-continuity conditions are going to be satisfied by most moment functions $m(\cdot;\cdot,\cdot)$. We notably show in the appendix that \cref{sleep2}, \cref{sleep3} and \cref{sleep1} with general moment conditions satisfy our conditions. 

The condition of \cref{algo_stab} is a commonly assumed condition in recent literature~\cite{austern2020asymptotics, bayle2020cross} and is satisfied by various  regularized empirical risk minimization estimators and stochastic gradient descent estimators~\cite{bousquet2002stability, hardt2016train}. Notably, it is satisfied by bagged estimators.
We show in \cref{bag:sleepy} that under very general conditions a bagged ensemble of any machine learning estimator is stable.
Let $Z_1,\ldots,Z_n\in\mathcal{Z}$ be an independent and identically distributed (i.i.d.) sample of size $n$. We sample uniformly randomly without replacement from these observations repeatedly and independently, each time taking a sample of size $m$, for a number of $B$ times. We denote the resulting samples as
$$Z_{1:m}^b:=\left\{Z_1^b,\ldots,Z_m^b\right\},b\in[B].$$

Let $\hat{h}_m:\mathcal{Z}^m\rightarrow \mathbb{R}^P$ be a base machine learning estimator trained on $m$ observations. This base estimator can be tree, a CNN, a nearest-neighbour classifier, or any other type of machine learning estimator. The corresponding bagged estimator is
$$\hat{g}(\cdot)=\frac{1}{B}\sum_{b=1}^B\hat{h}_m(Z_{1:m}^b)(\cdot).$$

We will show that subject to $B$ and $m$ being sufficiently big and mild moment conditions on the base estimator $\hat{h}_m$, the bagged estimator satisfies condition \cref{algo_stab}.

\begin{theorem}\label{bag:sleepy}
Fix any constants $s,k\geq 2r$ such that $\frac{1}{s}+\frac{1}{k}=\frac{1}{2r}.$ Assume $B,m$ satisfy
$$m=o(\sqrt{n})\hspace{1cm}B>>m^{2/{k}}\cdot n^{1-\frac{2}{{k}}},$$
and assume the base estimator $\hat h$ has bounded moments:
$$\left\Vert
    \sup_x\left\|\hat{h}(Z_{1:m}^1)(x)\right\|_2\right\Vert_s\leq C$$
for some constant $C>0.$ Then \cref{algo_stab} is achieved:
$$\max_{l\le n}\left\|\sup_x\|\hat g(x)-\hat g^{(-l)}(x)\|_2\right\|_{2r}=o(n^{-1/2}).$$
Therefore if $a$ and $\nu $ satisfy the condition \cref{morgane=bed} then the condition (\ref{lem:stoc-eq-jac}) is satisfied.
\end{theorem}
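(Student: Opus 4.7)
The plan is to use two main observations. First, $\hat g$ and $\hat g^{(-l)}$ differ only on those bags whose index set contains $l$, and each bag contains $l$ with probability $m/n$. Second, conditional on the full sample $(Z_{1:n},\tilde Z_l)$, the bags are drawn independently, so averaging over $b\in[B]$ contributes an extra $1/\sqrt{B}$ factor. Let $S_b\subset [n]$ denote the index set of bag $b$, let $\tilde Z_{1:m}^{b}$ be its contents after replacing $Z_l$ by $\tilde Z_l$, and define
\[
H_b:=\sup_x\|\hat h_m(Z_{1:m}^{b})(x)\|_2,\qquad \tilde H_b:=\sup_x\|\hat h_m(\tilde Z_{1:m}^{b})(x)\|_2.
\]
Since $\hat h_m(Z_{1:m}^b)=\hat h_m(\tilde Z_{1:m}^b)$ whenever $l\notin S_b$, bringing the supremum inside the sum via the triangle inequality gives
\[
\sup_x\|\hat g(x)-\hat g^{(-l)}(x)\|_2\le \frac{1}{B}\sum_{b=1}^{B}Y_b,\qquad Y_b:=\mathbbm{1}[l\in S_b]\,(H_b+\tilde H_b).
\]

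I would then decompose $\tfrac{1}{B}\sum_b Y_b=\bar Y+\bigl(\tfrac{1}{B}\sum_b Y_b-\bar Y\bigr)$ with $\bar Y:=\E[Y_1\mid Z_{1:n},\tilde Z_l]$, and bound the two pieces separately. For the bias, the key step is to factor out the probability $\Pr(l\in S_1)=m/n$ at the outset:
\[
\bar Y=\tfrac{m}{n}\,\E\bigl[H_1+\tilde H_1\,\big|\,l\in S_1,\,Z_{1:n},\,\tilde Z_l\bigr],
\]
and then apply Jensen followed by the moment assumption $\|H_1\|_s\le C$ (using $s\ge 2r$ and the fact that $(Z_i)_{i\in S_1}$ conditional on $l\in S_1$ is distributed as $m$ i.i.d.\ samples). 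This yields $\|\bar Y\|_{2r}\le 2C\,m/n=o(n^{-1/2})$ under $m=o(\sqrt n)$. For the fluctuation, conditional on $(Z_{1:n},\tilde Z_l)$ the $(Y_b)_{b=1}^B$ are i.i.d., so the Marcinkiewicz--Zygmund inequality applied conditionally and then integrated via the tower property gives
\[
\Big\|\tfrac{1}{B}\textstyle\sum_b Y_b-\bar Y\Big\|_{2r}\le \frac{C_r}{\sqrt B}\,\|Y_1-\bar Y\|_{2r}\le \frac{2C_r}{\sqrt B}\,\|Y_1\|_{2r}.
\]
H\"older with exponents $k,s$ satisfying $1/s+1/k=1/(2r)$ bounds the right-hand side by $\|\mathbbm{1}[l\in S_1]\|_k\cdot\|H_1+\tilde H_1\|_s\le 2C(m/n)^{1/k}$, making the fluctuation of order $(m/n)^{1/k}/\sqrt B$, which is $o(n^{-1/2})$ exactly when $B\gg m^{2/k}n^{1-2/k}$.

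Summing the bias and fluctuation bounds gives $\big\|\sup_x\|\hat g(x)-\hat g^{(-l)}(x)\|_2\big\|_{2r}=o(n^{-1/2})$, and since this bound is independent of $l$ by exchangeability, the maximum over $l\in[n]$ also achieves it. The final sentence of the theorem is then immediate from Corollary~\ref{simpler}. The main subtlety I anticipate is the bias step: applying Jensen directly to $\|\bar Y\|_{2r}$ and then H\"older would give the weaker bound $(m/n)^{1/k}$, so that the two hypotheses $m=o(\sqrt n)$ and $B\gg m^{2/k}n^{1-2/k}$ would no longer split cleanly. Pulling the factor $m/n$ out of $\bar Y$ before applying Jensen exploits the fact that the bag indicator appears \emph{linearly} in $Y_1$ rather than inside a non-linear function, and this is precisely what produces the exact balance between the two hypotheses on $m$ and $B$.
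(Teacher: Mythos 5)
Your proposal is correct and follows essentially the same route as the paper's proof: the same decomposition of $\sup_x\|\hat g(x)-\hat g^{(-l)}(x)\|_2$ into a conditional-mean (bias) term, handled by factoring out $\Pr(l\in S_1)=m/n$ and applying Jensen with the moment bound (using $2r\le s$ and that the bag contents conditional on containing $l$ are still $m$ i.i.d.\ draws), plus a fluctuation term handled by conditional Marcinkiewicz--Zygmund across the i.i.d.\ bags and H\"older with exponents $(k,s)$ satisfying $\frac1s+\frac1k=\frac1{2r}$, yielding the bound $O(m/n)+O\bigl((m/n)^{1/k}/\sqrt{B}\bigr)$ uniformly in $l$. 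The only cosmetic difference is that you apply the triangle inequality to replace the per-bag difference by $H_b+\tilde H_b$ at the outset, whereas the paper carries the difference $\nabla\hat h$ and introduces the factor $2C$ at the moment-bound step; the resulting conditions on $m$ and $B$ are identical.
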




In particular, we can take the base machine learning estimator to be the 1-nearest neighbor estimator. This specific choice leads to a bagged estimator that can be proved to satisfy both our stability conditions and our consistency rate condition in Theorem~\ref{thm:main}, in regression settings where covariates are of small intrinsic dimension.

Let $\left\{Z_i=\left(X_i, Y_i\right)\right\}_{i=1}^n$ be a sample of size $n$, drawn independently and identically distributed from $Z=(X,Y)$. Here $X\in\mathcal{X}\subset \mathbb{R}^D$ are known covariates and $Y\in\mathcal{Y}\subset \mathbb{R}^P$ is the response variable. As in our bagging setting, let $Z_{1:m}^1,\ldots,Z_{1:m}^B$ be B independent samples of size $m$ drawn without replacement from observations $\left\{Z_i\right\}_{i=1}^n$. A bagged 1-nearest neighbor (1-NN) estimator takes the following form:
$$\hat g(x):=\frac{1}{B}\sum_{b=1}^B\sum_{i=1}^n \mathbbm{1}_{\left\{X_i= S_b(x)\right\}}Y_i,$$
where $S_b(x)$ is the 1-NN of $x$ in the set $Z_{1:m}^b.$ The estimator is used to estimate the conditional expectation $g_0(x):=\mathbb{E}[Y\mid X=x].$

\begin{lemma}[Special Case of Theorem 3 of \cite{khosravi2019non}]\label{lem: knn}
Assume that
\begin{itemize}
    \item The marginal distribution $\mu$ of $X_1$ satisfies $(C,d)$-homogeneity on the ball $B(x,r)$:
$$\mu(B(x,r))\le C \alpha^{-d}\mu(B(x,\alpha r))\hspace{1cm}\forall \alpha\in(0,1)$$
for some $C,r>0$. Here $d$ is referred to as the intrinsic dimension of the distribution.
  \item The conditional expectation $\mathbb{E}[Y\mid X=x]$ is a Lipschitz function in the coordinates $x$.
  \item The response variable $Y$ is bounded in $L^\infty$-norm:
  $\left\|Y \right\|_\infty<\infty.$
\end{itemize} 
  Then the bagged 1-NN estimator $\hat g$ with $B\ge \frac{n}{m}$ satisfies the following convergence condition:
  $$\sqrt{\mathbb{E}\left[\left\|\hat{g}(X)-g_0(X)\right\|_2^2\right]}\le O\left(m^{-1/d}\right)+O\left(\sqrt{\frac{m P \log\log(Pn/m)}{n}}\right).$$
\end{lemma}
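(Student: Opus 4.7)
The plan is to note that this lemma is quoted essentially verbatim from Theorem~3 of \cite{khosravi2019non}, so the ``proof'' really amounts to verifying that the three hypotheses above match the hypotheses of their theorem and then invoking it. To make the statement transparent, I would sketch the bias--variance decomposition that underlies their bound. Write $\bar h_m(x) := \mathbb{E}_{Z_{1:m}}[\hat h_m(Z_{1:m})(x)]$ for the expected base 1-NN estimator on an $m$-subsample, and decompose
\begin{equation}
\mathbb{E}\bigl[\|\hat g(X) - g_0(X)\|_2^2\bigr] \le 2\,\mathbb{E}\bigl[\|\hat g(X) - \bar h_m(X)\|_2^2\bigr] + 2\,\mathbb{E}\bigl[\|\bar h_m(X) - g_0(X)\|_2^2\bigr].
\end{equation}
The first term is a ``variance'' term capturing the fluctuation of the bagged average around its mean, and the second is a ``bias'' term that measures how well averaged 1-NN regression on $m$ samples approximates the true regression function.

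For the bias term, I would use the fact that under $(C,d)$-homogeneity of $\mu$, the expected distance from a fresh $X$ to its 1-NN in an i.i.d.\ subsample of size $m$ is $O(m^{-1/d})$ (this is the standard doubling-dimension bound for nearest neighbours). Combined with the Lipschitz assumption on $x\mapsto \mathbb{E}[Y\mid X=x]$, this yields $\|\bar h_m(x) - g_0(x)\|_2 = O(m^{-1/d})$ pointwise, and integration over $x$ gives the first summand of the claimed rate. For the variance term, I would view $\hat g$ as a Monte-Carlo approximation (using $B$ draws) to the expected sub-bagged estimator, which is itself a $U$-statistic of order $m$ over the $n$ training points. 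A Hoeffding decomposition shows that the variance of an $m$-order $U$-statistic built from $n$ i.i.d.\ inputs scales as $m/n$ times a single-sample variance; boundedness of $Y$ controls that single-sample variance, and a maximal inequality across the $P$ output coordinates (yielding the $\log\log(Pn/m)$ factor) then supplies the second summand. The condition $B \ge n/m$ is precisely what is needed to ensure that the additional Monte-Carlo noise from using finitely many bags is dominated by this $U$-statistic fluctuation.

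The main obstacle in reproducing the argument from scratch would be the $\log\log(Pn/m)$ factor in the variance term: this is a sharp iterated-logarithm-type refinement of the standard $U$-statistic variance bound and is the technical heart of \cite{khosravi2019non}. Since the lemma is explicitly a special case of their result, my proposed ``proof'' is simply to cite their Theorem~3 after checking that our three bullet-point hypotheses translate directly into theirs: the $(C,d)$-homogeneity assumption matches their intrinsic-dimension condition, Lipschitz $\mathbb{E}[Y\mid X=\cdot]$ matches their smoothness hypothesis on the regression function, and boundedness of $Y$ in $L^\infty$ matches their boundedness assumption on the response.
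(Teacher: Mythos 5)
Your proposal takes essentially the same route as the paper: the paper's entire proof is a one-line citation of Theorem~3 of \cite{khosravi2019non}, instantiated with the regression moment $\psi(Z;\theta):=Y-\theta(x)$, which is exactly your plan of matching hypotheses and invoking their result (your bias--variance/$U$-statistic sketch is a correct gloss of what their theorem encapsulates, but the paper does not reproduce it). The only detail worth adding is that, since their theorem is stated for sub-sampled moment/Z-estimators, one should note explicitly that 1-NN regression of $\mathbb{E}[Y\mid X=x]$ corresponds to the choice $\psi(Z;\theta)=Y-\theta(x)$.
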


In particular, we note that when $0<d<2$ and $m=O(n^{\frac{1}{2}-\epsilon})$ for some $0<\epsilon\leq\frac{1}{2}-\frac{1}{4}d,$ we immediately have that $$\sqrt{\mathbb{E}\left[\left\|\hat{g}(X)-g_0(X)\right\|_2^2\right]}=o(n^{-1/4}),$$ achieving the convergence rate assumed in Theorem~\ref{thm:main}. Moreover, provided that we choose $B$ to be sufficiently large such that
$$B>>n^{\epsilon+\frac{1}{2}} \hspace{1cm} B\geq n^{1-\frac{2\epsilon+1}{k}},$$
the required stability conditions can also be satisfied.

\section{Experimental Evaluation}

We consider a synthetic experimental evaluation of our main theoretical findings. We focus on the partially linear model with a scalar outcome $Y \in \R$, a scalar continuous treatment $T\in \R$ and many controls $X\in \R^{n_x}$, where:
\begin{align}
    T =~& p_0(X) + \eta, & \eta \sim N(0, 1)\\
    Y =~& \theta_0 T + f_0(X) + \epsilon, & \epsilon\sim N(0,1)
\end{align}
Our goal is the estimation of the treatment effect $\theta_0$, while estimating the nuisance functions $p_0(X)$ and $q_0(X)=\theta_0 p_0(X) + f_0(X)$ in a flexible manner. We will consider estimation based on the orthogonal moment presented in Example~\ref{sleep2}. 

We considered sub-sampled $1$-nearest neighbor regression (NN) and sub-sampled fully grown (with only $1$ sample minimum leaf size) random forest (RF) regression for the estimation of $p_0$ and $q_0$, regressing $T$ on $X$ and $Y$ on $X$ correspondingly. 

The true functions $p_0$ and $f_0$ are actually linear in our data generating process, with $p_0(X)=\beta_0'X$ and $f_0(X)=\gamma_0'X$ and where $\beta_0$ and $\gamma_0$ have only one non-zero coefficient ($1$-sparse), and which is the same coefficient for both $\beta_0$ and $\gamma_0$. In other words, only one of the $n_x$ potential confounding variables $X$ is actually a confounder. 

We evaluate the performance of the estimate for $\theta_0$ for a range of values of the sample size $n$ and the dimension of the controls $n_x$ and with or without cross-fitting. For the cross-fitted estimates we used $2$ splits. For each specification we draw $1000$ experimental samples to evaluate the distributional properties of the estimate. 

We considered sub-sample sizes for the nuisance regressions based on our theoretical $n^{0.49}$ specification and for larger specifications too. We find that the estimate without cross-fitting typically has almost equal bias and smaller (and always comparable) variance (due most probably to the smaller mean squared error of the nuisances, since they are trained on larger sizes), especially in smaller sample sizes, and has better coverage properties, when the sub-sample size is $m=o(n^{1/2})$. Moreover, the estimate is approximately normally distributed, even without cross-fitting as is verified qualitatively via quantile-quantile (Q-Q) plots. 

We also report the mean of the estimate of the standard error across the $1000$ experiments, to evaluate the bias in the estimation of the standard error. We find that the estimate of the standard error is more accurate without cross-fitting, potentially because the estimate of the standard error does not incorporate the extra variance that stems from the sample-splitting process. This inaccuracy of the standard error estimate is most probably the reason for the worst coverage properties of the confidence intervals with cross-fitting. 

In summary, we verify experimentally that for stable estimators, with the theoretically required level of stability, sample splitting or cross-fitting is not needed to maintain asymptotic normality, small bias and nominal coverage.

As expected however, we do verify that as the sub-sample size becomes $m=O(n)$, where the estimate is both not stable and also does not optimize over hypothesis spaces with small critical radius or that satisfy the Donsker property, then the estimate without cross-fitting breaks down, while the estimate with cross-fitting maintains a decent performance, despite the high-variance of the nuisance estimate.

The experiments were run on a normal PC laptop with Processor Intel(R) Core(TM) i7-8650U CPU @ 1.90GHz, 2112 Mhz, 4 Core(s), 8 Logical Processor(s), 16GB RAM. It took around 1.5 hours to run all the experiments.

\newpage

\begin{figure}[H]
\scriptsize
\centering
\begin{subfigure}{0.48\textwidth}
\centering
\begin{tabular}{llrrrr}
\toprule
              &      &  bias &   std &  std\_est &  cov95 \\
\midrule
$n$=50, $n_x$=1 & cv=1 & 0.021 & 0.151 &    0.140 &  0.917 \\
              & cv=2 & 0.022 & 0.173 &    0.139 &  0.866 \\
$n$=50, $n_x$=2 & cv=1 & 0.042 & 0.149 &    0.143 &  0.920 \\
              & cv=2 & 0.046 & 0.170 &    0.142 &  0.863 \\
$n$=100, $n_x$=1 & cv=1 & 0.012 & 0.104 &    0.100 &  0.931 \\
              & cv=2 & 0.014 & 0.119 &    0.100 &  0.884 \\
$n$=100, $n_x$=2 & cv=1 & 0.030 & 0.110 &    0.101 &  0.911 \\
              & cv=2 & 0.032 & 0.121 &    0.101 &  0.878 \\
$n$=500, $n_x$=1 & cv=1 & 0.007 & 0.046 &    0.045 &  0.943 \\
              & cv=2 & 0.007 & 0.049 &    0.045 &  0.920 \\
$n$=500, $n_x$=2 & cv=1 & 0.015 & 0.045 &    0.045 &  0.927 \\
              & cv=2 & 0.015 & 0.048 &    0.045 &  0.917 \\
$n$=1000, $n_x$=1 & cv=1 & 0.003 & 0.032 &    0.032 &  0.946 \\
              & cv=2 & 0.002 & 0.034 &    0.032 &  0.922 \\
$n$=1000, $n_x$=2 & cv=1 & 0.010 & 0.031 &    0.032 &  0.944 \\
              & cv=2 & 0.010 & 0.033 &    0.032 &  0.930 \\
\bottomrule
\end{tabular}
\caption{Sub-sampled $1$-NN with $m=n^{0.49}$}
\end{subfigure}
\ \ \ \ \ \ 
\begin{subfigure}{0.48\textwidth}
\begin{tabular}{llrrrr}
\toprule
              &      &  bias &   std &  std\_est &  cov95 \\
\midrule
$n$=50, $n_x$=1 & cv=1 & 0.008 & 0.186 &    0.135 &  0.826 \\
              & cv=2 & 0.011 & 0.196 &    0.136 &  0.811 \\
$n$=50, $n_x$=2 & cv=1 & 0.011 & 0.193 &    0.139 &  0.834 \\
              & cv=2 & 0.020 & 0.196 &    0.138 &  0.831 \\
$n$=100, $n_x$=1 & cv=1 & 0.001 & 0.132 &    0.098 &  0.838 \\
              & cv=2 & 0.004 & 0.143 &    0.098 &  0.812 \\
$n$=100, $n_x$=2 & cv=1 & 0.003 & 0.135 &    0.098 &  0.824 \\
              & cv=2 & 0.011 & 0.141 &    0.099 &  0.828 \\
$n$=500, $n_x$=1 & cv=1 & 0.002 & 0.056 &    0.045 &  0.881 \\
              & cv=2 & 0.002 & 0.062 &    0.045 &  0.829 \\
$n$=500, $n_x$=2 & cv=1 & 0.003 & 0.056 &    0.045 &  0.878 \\
              & cv=2 & 0.003 & 0.060 &    0.045 &  0.851 \\
$n$=1000, $n_x$=1 & cv=1 & 0.001 & 0.039 &    0.032 &  0.883 \\
              & cv=2 & 0.000 & 0.043 &    0.032 &  0.849 \\
$n$=1000, $n_x$=2 & cv=1 & 0.001 & 0.039 &    0.032 &  0.891 \\
              & cv=2 & 0.000 & 0.043 &    0.032 &  0.853 \\
\bottomrule
\end{tabular}
\caption{Sub-sampled $1$-NN with $m=n^{10/11}$}
\end{subfigure}

\ \\
\ \\

\begin{subfigure}{0.48\textwidth}
\begin{tabular}{llrrrr}
\toprule
              &      &  bias &   std &  std\_est &  cov95 \\
\midrule
$n$=50, $n_x$=1 & cv=1 & 0.500 & 0.000 &    0.000 &  0.000 \\
              & cv=2 & 0.010 & 0.200 &    0.135 &  0.805 \\
$n$=50, $n_x$=2 & cv=1 & 0.500 & 0.000 &    0.000 &  0.000 \\
              & cv=2 & 0.017 & 0.204 &    0.138 &  0.795 \\
$n$=100, $n_x$=1 & cv=1 & 0.500 & 0.000 &    0.000 &  0.000 \\
              & cv=2 & 0.003 & 0.148 &    0.097 &  0.794 \\
$n$=100, $n_x$=2 & cv=1 & 0.500 & 0.000 &    0.000 &  0.000 \\
              & cv=2 & 0.008 & 0.146 &    0.098 &  0.797 \\
$n$=500, $n_x$=1 & cv=1 & 0.500 & 0.000 &    0.000 &  0.000 \\
              & cv=2 & 0.002 & 0.066 &    0.045 &  0.798 \\
$n$=500, $n_x$=2 & cv=1 & 0.500 & 0.000 &    0.000 &  0.000 \\
              & cv=2 & 0.002 & 0.064 &    0.045 &  0.838 \\
$n$=1000, $n_x$=1 & cv=1 & 0.500 & 0.000 &    0.000 &  0.000 \\
              & cv=2 & 0.000 & 0.046 &    0.031 &  0.817 \\
$n$=1000, $n_x$=2 & cv=1 & 0.500 & 0.000 &    0.000 &  0.000 \\
              & cv=2 & 0.001 & 0.045 &    0.032 &  0.829 \\
\bottomrule
\end{tabular}
\caption{Sub-sampled $1$-NN with $m=n$}
\end{subfigure}
\caption{Comparison of bias, variance and coverage properties, with (cv=2) and without (cv=1) cross-fitting (sample splitting), for the estimation of the treatment effect in the partially linear model, when a sub-sampled $1$-NN estimation is used for the nuisance function estimation. $n$ is the number of samples and $n_x$ the number of controls.}
\end{figure}

\begin{figure}[H]
\scriptsize
\centering
\begin{subfigure}{0.48\textwidth}
\centering
\begin{tabular}{llrrrr}
\toprule
               &      &  bias &   std &  std\_est &  cov95 \\
\midrule
$n$=50, $n_x$=5 & cv=1 & 0.102 & 0.149 &    0.144 &  0.873 \\
               & cv=2 & 0.103 & 0.165 &    0.143 &  0.836 \\
$n$=50, $n_x$=10 & cv=1 & 0.098 & 0.136 &    0.134 &  0.861 \\
               & cv=2 & 0.099 & 0.148 &    0.133 &  0.846 \\
$n$=100, $n_x$=5 & cv=1 & 0.066 & 0.102 &    0.101 &  0.894 \\
               & cv=2 & 0.064 & 0.109 &    0.101 &  0.877 \\
$n$=100, $n_x$=10 & cv=1 & 0.074 & 0.099 &    0.097 &  0.873 \\
               & cv=2 & 0.072 & 0.106 &    0.097 &  0.846 \\
$n$=500, $n_x$=5 & cv=1 & 0.020 & 0.044 &    0.045 &  0.930 \\
               & cv=2 & 0.021 & 0.046 &    0.045 &  0.919 \\
$n$=500, $n_x$=10 & cv=1 & 0.027 & 0.045 &    0.044 &  0.908 \\
               & cv=2 & 0.026 & 0.046 &    0.044 &  0.896 \\
$n$=1000, $n_x$=5 & cv=1 & 0.012 & 0.031 &    0.032 &  0.923 \\
               & cv=2 & 0.013 & 0.032 &    0.032 &  0.925 \\
$n$=1000, $n_x$=10 & cv=1 & 0.015 & 0.033 &    0.032 &  0.910 \\
               & cv=2 & 0.016 & 0.034 &    0.032 &  0.896 \\
\bottomrule
\end{tabular}
\caption{Sub-sampled Random Forest with $m=n^{0.49}$}
\end{subfigure}
\ \ \ \ \ \ 
\begin{subfigure}{0.48\textwidth}
\begin{tabular}{llrrrr}
\toprule
               &      &  bias &   std &  std\_est &  cov95 \\
\midrule
$n$=50, $n_x$=5 & cv=1 & 0.022 & 0.176 &    0.148 &  0.896 \\
               & cv=2 & 0.011 & 0.191 &    0.144 &  0.845 \\
$n$=50, $n_x$=10 & cv=1 & 0.026 & 0.167 &    0.145 &  0.904 \\
               & cv=2 & 0.002 & 0.180 &    0.140 &  0.869 \\
$n$=100, $n_x$=5 & cv=1 & 0.016 & 0.116 &    0.103 &  0.909 \\
               & cv=2 & 0.013 & 0.125 &    0.101 &  0.877 \\
$n$=100, $n_x$=10 & cv=1 & 0.018 & 0.114 &    0.103 &  0.908 \\
               & cv=2 & 0.016 & 0.128 &    0.100 &  0.870 \\
$n$=500, $n_x$=5 & cv=1 & 0.012 & 0.050 &    0.046 &  0.908 \\
               & cv=2 & 0.008 & 0.053 &    0.045 &  0.900 \\
$n$=500, $n_x$=10 & cv=1 & 0.013 & 0.048 &    0.046 &  0.923 \\
               & cv=2 & 0.011 & 0.051 &    0.045 &  0.911 \\
$n$=1000, $n_x$=5 & cv=1 & 0.010 & 0.035 &    0.033 &  0.924 \\
               & cv=2 & 0.006 & 0.036 &    0.032 &  0.912 \\
$n$=1000, $n_x$=10 & cv=1 & 0.012 & 0.035 &    0.033 &  0.909 \\
               & cv=2 & 0.008 & 0.038 &    0.032 &  0.893 \\
\bottomrule
\end{tabular}

\caption{Sub-sampled Random Forest with $m=n^{10/11}$}
\end{subfigure}
\caption{Comparison of bias, variance and coverage properties, with (cv=2) and without (cv=1) cross-fitting (sample splitting), for the estimation of the treatment effect in the partially linear model, when a sub-sampled Random Forest estimation is used for the nuisance function estimation. $n$ is the number of samples and $n_x$ the number of controls.}
\end{figure}

\begin{figure}[H]
\centering
\includegraphics[scale=.4]{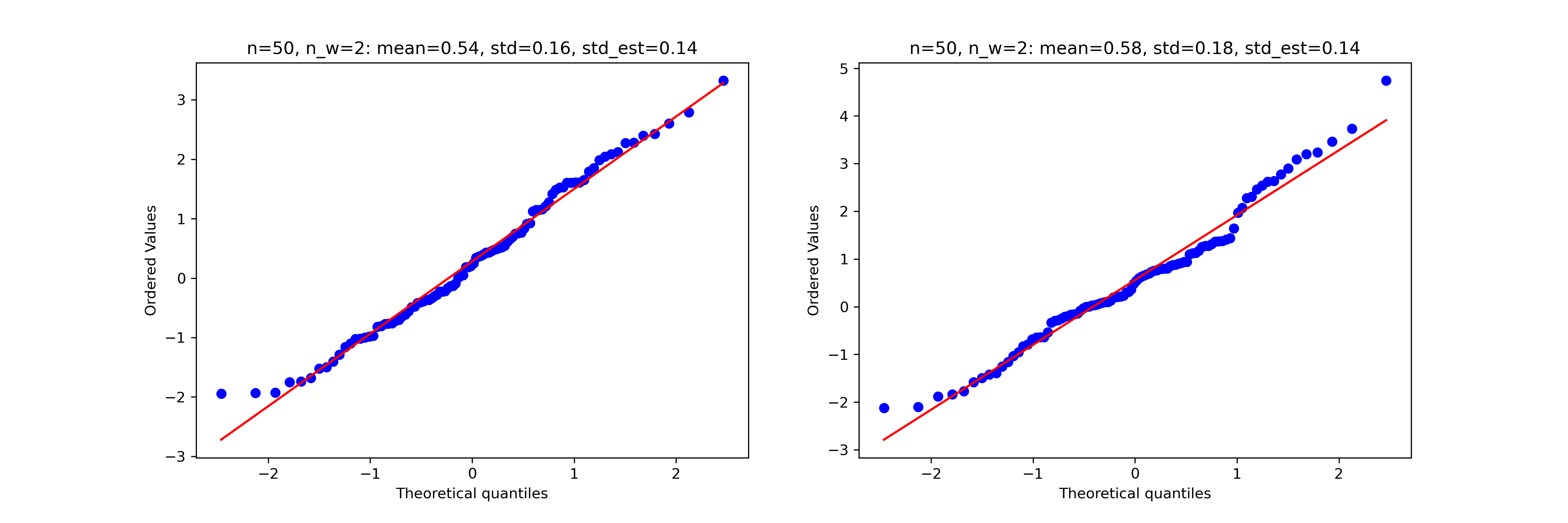}
\caption{Quantile-Quantile (Q-Q) plots for sub-sampled $1$-NN, with (right) and without (left) cross-fitting.}
\end{figure}

\begin{figure}[H]
\centering
\includegraphics[scale=.4]{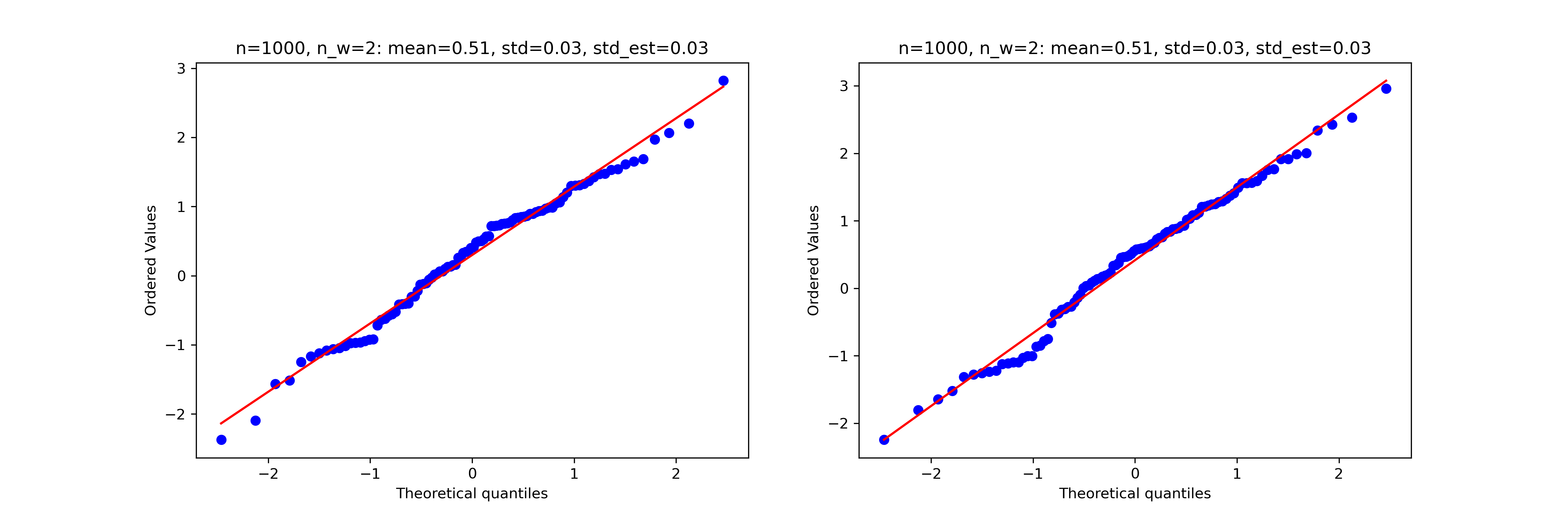}
\caption{Quantile-Quantile (Q-Q) plots for sub-sampled $1$-NN, with (right) and without (left) cross-fitting.}
\end{figure}

\begin{figure}[H]
\centering
\includegraphics[scale=.4]{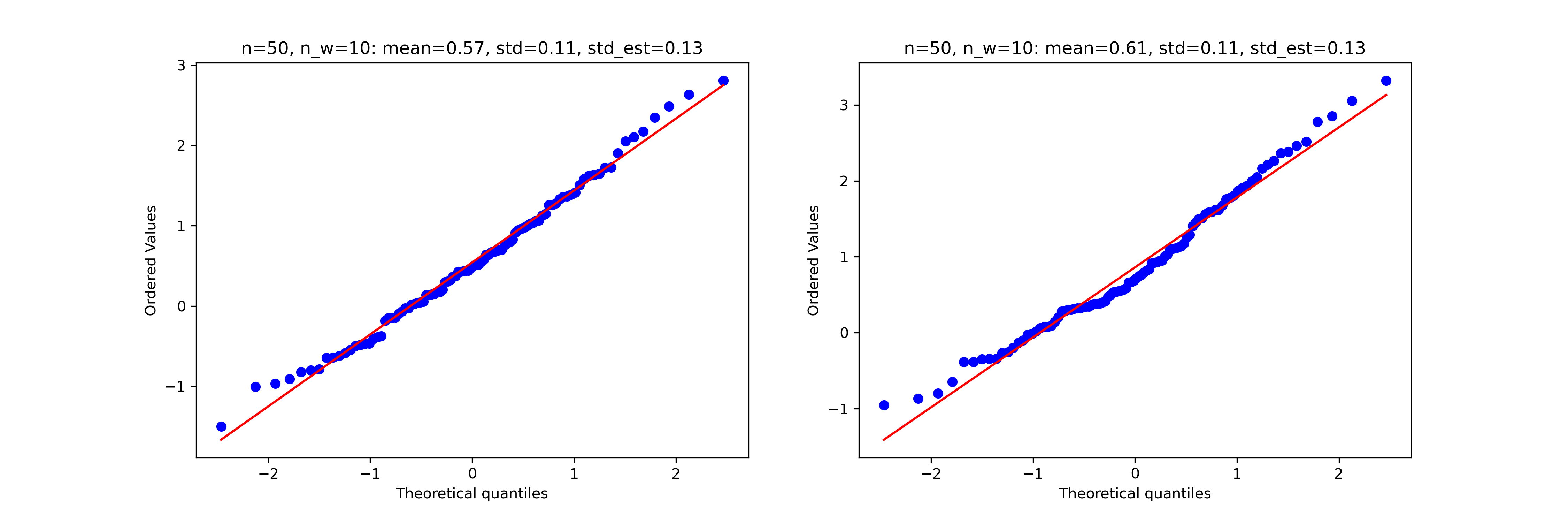}
\caption{Quantile-Quantile (Q-Q) plots for sub-sampled Random Forest, with (right) and without (left) cross-fitting.}
\end{figure}

\begin{figure}[H]
\centering
\includegraphics[scale=.4]{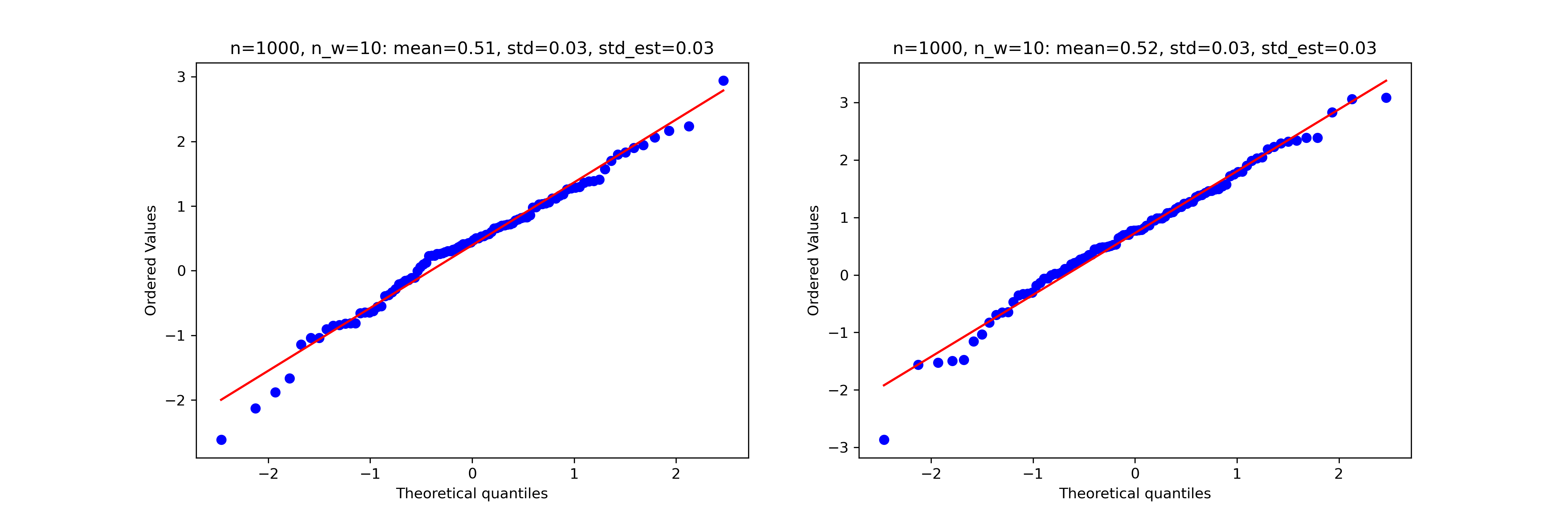}
\caption{Quantile-Quantile (Q-Q) plots for sub-sampled Random Forest, with (right) and without (left) cross-fitting.}
\end{figure}

\bibliography{refs}
\bibliographystyle{plain}
\newpage
\appendix



\section{Proof of Theorem~\ref{thm:main}}

\begin{proof}
For any $g\in \mcG$, by the linearity of the moment with respect to $\theta$:
\begin{align}
    A(g) \left(\hat{\theta}-\theta_0\right) =~&  M(\hat{\theta}, g) - M(\theta_0, g)\\
    =~& M(\hat{\theta}, g) - M_n(\hat{\theta}, g) + M(\theta_0, g_0) - M(\theta_0, g) + M_n(\hat{\theta}, g).
\end{align}
Moreover, for any $g$, with $\|g-g_0\|_2=o_p(1)$:
\begin{align}
    A( g) \left(\hat{\theta}-\theta_0\right) =~&  A(g_0) \left(\hat{\theta}-\theta_0\right) + \left(A(g) - 
    A(g_0)\right)\left(\hat{\theta}-\theta_0\right)\\
    =~& A(g_0) \left(\hat{\theta}-\theta_0\right) + O\left(\|g-g_0\|_2\, \|\hat{\theta}-\theta_0\|_2\right)\\
    =~& A(g_0) \left(\hat{\theta}-\theta_0\right) + o_p\left(\|\hat{\theta}-\theta_0\|_2\right).
\end{align}
Thus for any $g$, with $\|g-g_0\|_2=o_p(1)$:
\begin{align}
    A(g_0) \left(\hat{\theta}-\theta_0\right) =~& M(\hat{\theta}, g) - M_n(\hat{\theta}, g) + M(\theta_0, g_0) - M(\theta_0, g) + M_n(\hat{\theta}, g) + o_p(\|\hat{\theta}-\theta_0\|_2).
\end{align}
Let $G_n(\theta, g) := M(\theta, g) - M_n(\theta, g)$, then we have:
\begin{align}
    A( g_0) \left(\hat{\theta}-\theta_0\right) =~& G_n(\hat{\theta}, g) + M(\theta_0, g_0) - M(\theta_0, g) + M_n(\hat{\theta}, g) + o_p(\|\hat{\theta}-\theta_0\|_2).
\end{align}
Applying the above for $g=\hat{g}$ and by the definition of $\hat{\theta}$, we have: 
\begin{align}
    A(g_0) \left(\hat{\theta}-\theta_0\right) =~& G_n(\hat{\theta}, \hat{g}) + M(\theta_0, g_0) - M(\theta_0, \hat{g}) + M_n(\hat{\theta}, \hat{g}) + o_p(\|\hat{\theta}-\theta_0\|_2)\\
    =~& G_n(\hat{\theta}, \hat{g}) + M(\theta_0, g_0) - M(\theta_0, \hat{g}) + o_p(n^{-1/2} + \|\hat{\theta}-\theta_0\|_2).
\end{align}
Applying Neyman orthogonality and bounded second derivative of the moment with respect to $g$:
\begin{align}
    M(\theta_0, g_0) - M(\theta_0, g) = D_g\, M(\theta_0, g_0) [g_0 - g] + O\left(\|g-g_0\|_2^2\right) = O\left(\|g-g_0\|_2^2\right) = o_p(n^{-1/2}).
\end{align}
Thus we have that:
\begin{align}
    A(g_0) \left(\hat{\theta}-\theta_0\right) =~& G_n(\hat{\theta}, \hat{g}) + o_p(n^{-1/2} + \|\hat{\theta}-\theta_0\|_2).
\end{align}
Now we decompose the empirical process part into an asymptotically normal component and asymptotically equicontinuous parts that converge to zero in probability:
\begin{align}
G_n(\hat{\theta}, \hat{g}) =~& G_n(\theta_0, g_0) + \left(G_n(\hat{\theta}, \hat{g}) - G_n(\theta_0, \hat{g})\right) + \left(G_n(\theta_0, \hat{g}) - G_n(\theta_0, g_0)\right).
\end{align}

By the linearity of the moment, the middle term can be written as:
\begin{align}
    G_n(\hat{\theta}, \hat{g}) - G_n(\theta_0, \hat{g}) =~& \left(A(\hat{g}) - A_n(\hat{g})\right)'(\hat{\theta}-\theta_0).
\end{align}
Note that by a triangle inequality:
\begin{align}
   \left\|A(\hat{g})-A_n(\hat{g})\right\|_{op} \leq~& \left\|A(g_0)-A_n(g_0)\right\|_{op} + \left\|A(\hat{g}) - A(g_0) - (A_n(\hat{g}) - A_n(g_0))\right\|_{op}
\end{align}
Note that the first quantity is a simple centered empirical process and hence assuming that $a_{i,j}(Z;g_0)$ has bounded variance, by classical results in empirical process theory we have that:
\begin{align}
    \left\|A(g_0)-A_n(g_0)\right\|_{op} = o_p(1)
\end{align}
Moreover, by our stochastic equicontinuity condition we have that:
\begin{align}
    \left\|A(\hat{g}) - A(g_0) - (A_n(\hat{g}) - A_n(g_0))\right\|_{op} = o_p(n^{-1/2}) = o_p(1).
\end{align}
Thus we get that $\|A(\hat{g}) - A_n(\hat{g})\|_{op}=o_p(1)$, and therefore:
\begin{align}
    G_n(\hat{\theta}, \hat{g}) - G_n(\theta_0, \hat{g}) =~& o_p\left(\|\hat{\theta}-\theta_0\|_2\right).
\end{align}

Moreover, since by our stochastic equicontinuity conditions:
\begin{align}
    \sqrt{n} \left\|A(\hat{g}) - A(g_0) - (A_n(\hat{g}) - A_n(g_0))\right\|_{op} = o_p(1)\\
    \sqrt{n} \left\|V(\hat{g}) - V(g_0) - (V_n(\hat{g}) - V_n(g_0))\right\|_2 = o_p(1)
\end{align}
we have by triangle inequality, the definition of the operator norm, and the fact that $\|\theta_0\|_2=O(1)$ that:
\begin{align}
    \left\|G_n(\theta_0, \hat{g}) - G_n(\theta_0, g_0)\right\|_2\le~&\left\|A(\hat{g}) - A(g_0) - (A_n(\hat{g}) - A_n(g_0))\right\|_{op}\, \|\theta_0\|_2 \\
    ~& + \left\|V(\hat{g}) - V(g_0) - (V_n(\hat{g}) - V_n(g_0))\right\|_2\\
    =~& o_p(n^{-1/2}). 
\end{align}

Thus we can conclude that:
\begin{align}
    A(g_0) \left(\hat{\theta}-\theta_0\right) =~& G_n(\theta_0, g_0) + o_p(n^{-1/2} + \|\hat{\theta}-\theta_0\|_2).
\end{align}
Assuming that the inverse $A(g_0)^{-1}$ exists, we can re-arrange to:
\begin{equation}
    \hat{\theta}-\theta_0 = A(g_0)^{-1} G_n(\theta_0, g_0) + o_p\left(n^{-1/2} + \|\hat{\theta}-\theta_0\|_2\right).
\end{equation}

Since $G_n(\theta_0,g_0)$ is a mean-zero empirical process, we have that $\|G_n(\theta_0,g_0)\|_2 = O_p(n^{-1/2})$. Thus the above equation implies that $\|\hat{\theta}-\theta_0\|_2 = O_p(n^{-1/2})$. Thus we get:
\begin{equation}
    \hat{\theta}-\theta_0 = A(g_0)^{-1} G_n(\theta_0, g_0) + o_p\left(n^{-1/2}\right)
\end{equation}
or equivalently that:
\begin{equation}
    \sqrt{n} \left(\hat{\theta}-\theta_0\right) = \sqrt{n} A(g_0)^{-1} G_n(\theta_0, g_0) + o_p\left(1\right).
\end{equation}
The first term converges in distribution to the claimed normal limit by invoking the Central Limit Theorem. Thus the theorem follows by Slutsky's theorem.
\end{proof}

\section{Proof of Lemma~\ref{lem:countereg}}

\begin{proof}
Before diving into the proof, recall that $c(Z_{1:n},x) = \argmin_{i \leq n} \abs{X_i - x}$, and define:
\begin{align}
    c^*_1 &:= c(Z_{1:n},\frac{1}{2}), \\
    c^*_2 &:= \begin{cases}
        \displaystyle\argmin_{i\le n \text{ s.t } X_i\le \frac{1}{2}} \abs{\frac{1}{2}-X_i} &\text{if }X_{c^*_1}> \frac{1}{2},\\
        \displaystyle\argmin_{i\le n \text{ s.t } X_i> \frac{1}{2}} \abs{\frac{1}{2}-X_i} &\text{if } X_{c^*_1}\le \frac{1}{2}.
    \end{cases}.
\end{align}
That is, we let $c_1^*$ be the index of the nearest example in $\{X_1,\ldots, X_n\}$ to $\frac{1}{2}$, and let $c_2^*$ be the index of the nearest example to $\frac{1}{2}$ on the other side of $\frac{1}{2}$ from.

A new observation $Z=(X, Y)$, where $X\sim\mathrm{unif}[0,1]$ and $Y=\mathbb{I}(X\le 0.5)$,  will be misclassified if $Y_{c(Z_{1:n},X)}$ is different from $Y$.
Therefore it is mislabeled if it falls in the following set:
\begin{equation}
 \mathcal{E}\Bigl(X_{c_1^*},X_{c_2^*}\Bigr) := \begin{cases}
    \bigl[\frac{1}{2}, \frac{1}{2}\bigl(X_{c_1^*}+X_{c_2^*}\bigr)\bigr] & \text{if } X_{c_1^*}\le \frac{1}{2},\\
    \bigl[\frac{1}{2}\bigl(X_{c_1^*}+X_{c_2^*}\bigr),\frac{1}{2} \bigr] & \text{otherwise.}
    \end{cases}
\end{equation}
For a given pair of random variables $X_{i_1},X_{i_2}$, we write:
\begin{equation}
    \lambda_{X_{i_1,i_2}}:= n\mathbb{P}\big(X\in \mathcal{E}(X_{i_1},X_{i_2}) \mid X_{i_1}, X_{i_2}\big).
\end{equation}

We note $B_1:=\{i\le n|X_i\le 1/2\}$. 

Now we remark that if $X_{c_1^*}\le 0.5$ then $\Big|1-\Big[X_{c_1^*}+X_{c_2^*}\Big]\Big|=X_{c_1^*}+X_{c_2^*}-1$ and $$X_{c_2^*}-(1-X_{c_1^*})|X_{c_1^*}\sim \min_{i\in[n]\setminus B_1 } U_i(X_{c_1^*})$$ where $U_i(X_{c_1^*})|X_{c_1^*}\sim_{i.i.d}\rm{unif}[0,X_{c_1^*}].$ Here $\sim$ means "has the same distribution as."

Therefore we have
\begin{align}
    \mathbb{P}\Big(X_{c_1^*}+X_{c_2^*}-1\le \frac{2t}{n}\Big|X_{c_1^*}\le 0.5, X_{c_1^*}\Big)
    &=\mathbb{P}\Big(\min_{i\in[n]\setminus B_1 } U_i(X_{c_1^*})\le \frac{2t}{n}\Big|X_{c_1^*}\le 0.5,X_{c_1^*}\Big)
    \\&=1-\Big(1-\frac{2t}{nX_{c_1^*}}\Big)^{n-|B_1|},
\end{align}
where $|B_1|$ denotes the cardinality of set $B_1.$

Therefore as $X_{c_1^*}\rightarrow 0.5$ and $|B_1|\rightarrow n/2$ we have
\begin{align}
    P\Big(X_{c_1^*}+X_{c_2^*}-1\le \frac{2t}{n}\Big|X_{c_1^*}\le 0.5, X_{c_1^*}\Big)
    \rightarrow 1-e^{-2t}.
\end{align}Similarly we remark that if $X_{c_1^*}>0.5$ then $\Big|1-\Big[X_{c_1^*}+X_{c_2^*}\Big]\Big|=1-X_{c_1^*}-X_{c_2^*}$ and $$1-X_{c_1^*}-X_{c_2^*}|X_{c_1^*}\sim \min_{i\in B_1 } U_i^{bis}(X_{c_1^*})$$ where $U_i^{bis}(X_{c_1^*})|X_{c_1^*}\sim_{i.i.d}\rm{unif}[0,1-X_{c_1^*}]$.

Therefore we have
\begin{align}
    \mathbb{P}\Big(1-X_{c_1^*}-X_{c_2^*}\le \frac{2t}{n}\Big|X_{c_1^*}>0.5,X_{c_1^*}\Big)&=\mathbb{P}\Big(\min_{i\in B_1 } U_i^{bis}(X_{c_1^*})\le \frac{2t}{n}\Big|X_{c_1^*}>0.5,X_{c_1^*}\Big)
    \\&=1-\Big(1-\frac{2t}{n(1-X_{c_1^*})}\Big)^{|B_1|}.
\end{align}

Therefore as $X_{c_1^*}\rightarrow 0.5$ and $|B_1|\rightarrow n/2$ we have
\begin{align}
P\Big(1-X_{c_1^*}-X_{c_2^*}\le \frac{2t}{n}\Big|X_{c_1^*}>0.5, X_{c_1^*}\Big)
    \rightarrow 1-e^{-2t}.
\end{align}

This directly implies that \begin{equation}
    \begin{split}\label{blan}
        \lambda_{X_{c^*_1,c^*_2}}=\frac{n}{2}\Big|1-\Big[X_{c_1^*}+X_{c_2^*}\Big]\Big| \xrightarrow{d} \mathrm{Exp}(2),
    \end{split}
 \end{equation}
where $\mathrm{Exp}(2)$ denotes an exponential distribution with rate parameter $2.$


Moreover, we now also show that the expectation $\mathbb{E}\left[\lambda_{X_{c^*_1,c^*_2}}\right]=O(1).$

As a first step, we note that we can write
\begin{align}
    &\mathbb{E}\left[\lambda_{X_{c^*_1,c^*_2}}\right]=n\mathbb{P}\left(X\in\mathcal{E}\Bigl(X_{c_1^*},X_{c_2^*}\Bigr)\right)\\
    &\le n\mathbb{P}\left(X\in\mathcal{E}\Bigl(X_{c_1^*},X_{c_2^*}\Bigr),\left| \frac{|B_1|}{n}-\frac{1}{2} \right|\le \frac{1}{4}\right)+n\mathbb{P}\left(\left| \frac{|B_1|}{n}-\frac{1}{2} \right|> \frac{1}{4}\right).
\end{align}

By Azuma's concentration inequality, we know that
$$n\mathbb{P}\left(\left| \frac{|B_1|}{n}-\frac{1}{2} \right|> \frac{1}{4}\right)\le 2ne^{-\frac{32}{n}}\to 0 \mbox{ as $n\to\infty$}.$$

To treat the other term in the sum, we have that
\begin{align}
    &n\mathbb{P}\left(X\in\mathcal{E}\Bigl(X_{c_1^*},X_{c_2^*}\Bigr),\left| \frac{|B_1|}{n}-\frac{1}{2} \right|\le \frac{1}{4}\right)\\
    &=n\mathbb{E}\left[\mathbb{P}\left(X\in\mathcal{E}\Bigl(X_{c_1^*},X_{c_2^*}\Bigr),\left| \frac{|B_1|}{n}-\frac{1}{2} \right|\le \frac{1}{4}\Bigg| X_{1:n}\right)\right]\\
    &=n\mathbb{E}\left[\mathbb{P}\left(X\in\mathcal{E}\Bigl(X_{c_1^*},X_{c_2^*}\Bigr)\Bigg| X_{1:n}\right)\mathbb{I}\left(\left| \frac{|B_1|}{n}-\frac{1}{2} \right|\le \frac{1}{4}\right)\right]\\
    &=n\mathbb{E}\left[\left|\mathcal{E}\Bigl(X_{c_1^*},X_{c_2^*}\Bigr)\right|\mathbb{I}\left(\left| \frac{|B_1|}{n}-\frac{1}{2} \right|\le \frac{1}{4}\right)\right]
\end{align}
where $X_{1:n}:=(X_1,\ldots,X_n)$ and  $\left|\mathcal{E}\Bigl(X_{c_1^*},X_{c_2^*}\Bigr)\right|$ denotes the length of the interval $\mathcal{E}\Bigl(X_{c_1^*},X_{c_2^*}\Bigr).$

Now by triangle inequality
\begin{align}
    &n\mathbb{E}\left[\left|\mathcal{E}\Bigl(X_{c_1^*},X_{c_2^*}\Bigr)\right|\mathbb{I}\left(\left| \frac{|B_1|}{n}-\frac{1}{2} \right|\le \frac{1}{4}\right)\right]\\
    &=n\mathbb{E}\left[\frac{1}{2}\left|1-\left(X_{c_1^*}+X_{c_2^*}\right)\right|\mathbb{I}\left(\left| \frac{|B_1|}{n}-\frac{1}{2} \right|\le \frac{1}{4}\right)\right]\\&\le \frac{n}{2}\mathbb{E}\left[\left|\frac{1}{2}-X_{c_1^*}\right|\mathbb{I}\left(\left| \frac{|B_1|}{n}-\frac{1}{2} \right|\le \frac{1}{4}\right)\right]+\frac{n}{2}\mathbb{E}\left[\left|\frac{1}{2}-X_{c_2^*}\right|\mathbb{I}\left(\left| \frac{|B_1|}{n}-\frac{1}{2} \right|\le \frac{1}{4}\right)\right]\\
    &=
    \frac{n}{2}\mathbb{E}\left[\min_{i\in B_1}U_i\cdot\mathbb{I}\left(\left| \frac{|B_1|}{n}-\frac{1}{2} \right|\le \frac{1}{4}\right)\right]+\frac{n}{2}\mathbb{E}\left[\min_{i\in [n]\setminus B_1}U_i\cdot\mathbb{I}\left(\left| \frac{|B_1|}{n}-\frac{1}{2} \right|\le \frac{1}{4}\right)\right]\\
    &=n\mathbb{E}\left[\min_{i\in B_1}U_i\cdot\mathbb{I}\left(\left| \frac{|B_1|}{n}-\frac{1}{2} \right|\le \frac{1}{4}\right)\right]\\
    &\le n\mathbb{E}\left[\min_{i\in B_1}U_i\cdot\mathbb{I}\left(\left| B_1 \right|\ge \frac{n}{4}\right)\right].
\end{align}
where $U_i:=\left|\frac{1}{2}-X_i\right|\sim_{\rm{i.i.d.}}\mathrm{unif}[0,0.5],$ and the penultimate line follows from symmetry.

We can express the expectation in terms of integrals of tail probabilities as
\begin{align}
    &n\mathbb{E}\left[\min_{i\in B_1}U_i\cdot\mathbb{I}\left(\left| B_1 \right|\ge \frac{n}{4}\right)\right]=n\mathbb{E}\left[\mathbb{E}\left[\min_{i\in B_1}U_i\cdot\mathbb{I}\left(\left| B_1 \right|\ge \frac{n}{4}\right)\ \Big|\ |B_1|\right]\right]\\
    &=n\mathbb{E}\left[\int_{0}^\infty\mathbb{P}\left(\min_{i\in B_1}U_i\cdot\mathbb{I}\left(\left| B_1 \right|\ge \frac{n}{4}\right)\ge t\ \Big|\ |B_1|\right)dt\right]\\
    &=n\mathbb{E}\left[\frac{1}{|B_1|}\int_{0}^\infty\mathbb{P}\left(\min_{i\in B_1}U_i\cdot\mathbb{I}\left(\left| B_1 \right|\ge \frac{n}{4}\right)\ge \frac{t}{|B_1|}\ \Big|\ |B_1|\right)dt\right].
\end{align}

Here 
\begin{align}
    &\mathbb{P}\left(\min_{i\in B_1}U_i\cdot\mathbb{I}\left(\left| B_1 \right|\ge \frac{n}{4}\right)\ge \frac{t}{|B_1|}\ \Big|\ |B_1|\right)\\
    &=\left(1-\frac{2t}{|B_1|}\right)^{|B_1|}\cdot\mathbb{I}\left(\left| B_1 \right|\ge \frac{n}{4}\right)\le e^{-2t}\cdot\mathbb{I}\left(\left| B_1 \right|\ge \frac{n}{4}\right)
\end{align}
where we have used the inequality that $1-x\le e^{-x}$ for all $x$.

Hence, we have
\begin{align}
    &n\mathbb{E}\left[\frac{1}{|B_1|}\int_{0}^\infty\mathbb{P}\left(\min_{i\in B_1}U_i\cdot\mathbb{I}\left(\left| B_1 \right|\ge \frac{n}{4}\right)\ge \frac{t}{|B_1|}\ \Big|\ |B_1|\right)dt\right]\\
    &\le n\mathbb{E}\left[\frac{1}{|B_1|}\int_{0}^\infty e^{-2t}\cdot\mathbb{I}\left(\left| B_1 \right|\ge \frac{n}{4}\right)dt\right]=\frac{n}{2}\mathbb{E}\left[\frac{1}{|B_1|}\cdot\mathbb{I}\left(\left| B_1 \right|\ge \frac{n}{4}\right)\right]\\
    &\le \frac{n}{2}\mathbb{E}\left[\frac{4}{n}\cdot\mathbb{I}\left(\left| B_1 \right|\ge \frac{n}{4}\right)\right]=2\mathbb{P}\left(\left| B_1 \right|\ge \frac{n}{4}\right)\le 2.
\end{align}

Altogether, we have shown that
\begin{equation}\label{lambda}
    \mathbb{E}\left[\lambda_{X_{c^*_1,c^*_2}}\right]=O(1).
\end{equation}

The key point is to note that
\begin{align}
    \|\hat g(Z_{1:n})(Z)\|_2^2&=n^{1/3}\mathbb{P}(Y\ne Y_{c(Z_{1:n},X)})
  \\&\le  n^{1/3}\mathbb{P}(X\in\mathcal{E}(X_{c_1^*},X_{c_2^*}))\xrightarrow{(a)} 0.
\end{align}
where (a) comes from realizing that 
$$\mathbb{P}(X\in\mathcal{E}(X_{c_1^*},X_{c_2^*}))=\frac{1}{n}\mathbb{E}\left[\lambda_{X_{c^*_1,c^*_2}}\right].$$
Therefore we proved the first point.
Moreover if we denote \begin{align}
    c^{*(-1)}_1 &:= c(Z^{(-1)}_{1:n},\frac{1}{2}), \\
    c^{*(-1)}_2 &:= \begin{cases}
        \displaystyle\argmin_{i\le n \text{ s.t } X^{(-1)}_i\le \frac{1}{2}} \abs{\frac{1}{2}-X^{(-1)}_i} &\text{if }X^{(-1)}_{c^{*(-1)}_1}> \frac{1}{2},\\
        \displaystyle\argmin_{i\le n \text{ s.t } X^{(-1)}_i> \frac{1}{2}} \abs{\frac{1}{2}-X^{(-1)}_i} &\text{if } X^{(-1)}_{c^{*(-1)}_1}\le \frac{1}{2}
    \end{cases},
\end{align} 
where $Z_{1:n}^{(-1)}$ is $Z_{1:n}$ with the first observation replaced with an independent copy $\tilde{Z}_1=(\tilde{X}_1,\tilde{Y}_1)$, then we remark that
\begin{align}
    &\mathbb{P}\big( c^{*(-1)}_1\ne c^*_1 ~\rm{or}~ c^{*(-1)}_2\ne c^*_2)\\
    &\le \mathbb{P}\big( c^{*(-1)}_1\ne c^*_1) +\mathbb{P}( c^{*(-1)}_2\ne c^*_2)\\
    &\le \mathbb{P}(1=c^*_1)+\mathbb{P}(1=c^{*(-1)}_1)+\mathbb{P}(1=c^*_2)+\mathbb{P}(1=c^{*(-1)}_2)\overset{(b)}{\le }\frac{4}{n}
\end{align}
where (b) comes from symmetry (each observation has an equal chance of being $c^*_1$, for example). Moreover we note that if neither $c^{*(-1)}_1\ne c^*_1 ~\rm{nor}~ c^{*(-1)}_2\ne c^*_2$ then we have $\mathcal{E}(X^{(-1)}_{c^{*(-1)}_1}, X^{(-1)}_{c^{*(-1)}_2})=\mathcal{E}(X_{c_1^*}, X_{c_2^*})$. 

Now for ease of notation denote
$$X_{c_1^*,c_2^*}:=\left(X_{c_1^*}, X_{c_2^*}\right),$$

$$X_{c^{*(-1)}_1,c^{*(-1)}_2}:=\left(X^{(-1)}_{c^{*(-1)}_1}, X^{(-1)}_{c^{*(-1)}_2}\right),$$

$$X_{c_1^*,c_2^*,c^{*(-1)}_1,c^{*(-1)}_2}:=\left(X_{c_1^*}, X_{c_2^*},X^{(-1)}_{c^{*(-1)}_1}, X^{(-1)}_{c^{*(-1)}_2}\right),$$
    and
    $$\mathcal{E}\left(X_{c_1^*,c_2^*,c^{*(-1)}_1,c^{*(-1)}_2}\right):=\mathcal{E}(X_{c_1^*}, X_{c_2^*}
    )\triangle \mathcal{E}\Big(X^{(-1)}_{c^{*(-1)}_1}, X^{(-1)}_{c^{*(-1)}_2}\Big).$$

Then we have that there is $C<\infty$ such that \begin{align}
    &\|\nu(Z,\hat g)-\nu(Z,\hat g^{(-1)})\|_2=\sqrt{n}\sqrt{\mathbb{P}\Big(X\in \mathcal{E}\left(X_{c_1^*,c_2^*,c^{*(-1)}_1,c^{*(-1)}_2}\right)\Big)}\\
    &=\sqrt{n\mathbb{E}\left[\mathbb{P}\left(X\in \mathcal{E}\left(X_{c_1^*,c_2^*,c^{*(-1)}_1,c^{*(-1)}_2}\right)\Big| X_{c_1^*,c_2^*,c^{*(-1)}_1,c^{*(-1)}_2}\right)\mathbb{I}\left(c^{*(-1)}_1\ne c^*_1 ~\rm{or}~ c^{*(-1)}_2\ne c^*_2\right)\right]}\\
    &\le\sqrt{n\mathbb{E}\left[\left(\left| \mathcal{E}(X_{c_1^*,c_2^*}
    )\right|
    +
    \left|\mathcal{E}\Big(X_{c^{*(-1)}_1,c^{*(-1)}_2}\Big)\right|
    \right)\mathbb{I}\left(c^{*(-1)}_1\ne c^*_1 ~\rm{or}~ c^{*(-1)}_2\ne c^*_2\right)\right]}
    \\
    &\le\sqrt{2n\mathbb{E}\left[\left| \mathcal{E}(X_{c_1^*,c_2^*}
    )\right|
    \mathbb{I}\left(c^{*(-1)}_1\ne c^*_1 ~\rm{or}~ c^{*(-1)}_2\ne c^*_2\right)\right]} \mbox{ by symmetry}
    \\
    &\le \sqrt{2n\mathbb{E}\left[\left| \mathcal{E}(X_{c_1^*,c_2^*}
    )\right|
    \left(
    \mathbb{I}\left(
    c^*_1=1\right)
    + \mathbb{I}\left(
    c^{*(-1)}_1=1 \right)
    + \mathbb{I}\left(
    c^*_2=1 \right)
    + \mathbb{I}\left(
    c^{*(-1)}_2=1
    \right)
    \right)
    \right]}\\
    &\overset{(c)}{=} \sqrt{2n
    \mathbb{P}\big(X\in \mathcal{E}(X_{c_1^*,c_2^*})\big)
    \left(
    \mathbb{P}\left(
    c^*_1=1\right)
    + \mathbb{P}\left(
    c^{*(-1)}_1=1 \right)
    + \mathbb{P}\left(
    c^*_2=1 \right)
    + \mathbb{P}\left(
    c^{*(-1)}_2=1
    \right)
    \right)
    }
    \\&= \frac{ \sqrt{8}}{\sqrt{n}} \sqrt{\mathbb{E}\left[\lambda_{X_{c_1^*,c_2^*}}\right]}\overset{(d)}{\le}\frac{C}{\sqrt{n}}
\end{align}
where to get (c) we exploited independence of $(X_{c_1^*}, X_{c_2^*})$ and the events $\{c^*_1=1\}$, $\{c^*_2=1\}$, $\{c^{*(-1)}_1=1\}$, $\{c^{*(-1)}_2=1\}$ and where to get (d) we exploited \cref{lambda}.


Moreover we also notice that 
\begin{align}
    & \mathbb{P}\left( c^{*(-1)}_1\ne c^*_1\right)\\
    &\ge \mathbb{P}\left( c^*_1=1,\tilde{X}_1\not\in[
    \frac{1}{2}-\left|\frac{1}{2}-X_{c_3^*}\right|,
    \frac{1}{2}+\left|\frac{1}{2}-X_{c_3^*}\right|
    ]\right)\\
    &=\mathbb{P}\left( c^*_1=1\right)\mathbb{P}\left(
    \tilde{X}_1\not\in[
    \frac{1}{2}-\left|\frac{1}{2}-X_{c_3^*}\right|,
    \frac{1}{2}+\left|\frac{1}{2}-X_{c_3^*}\right|
    ]\right) \mbox{ by independence}\\
    &=\frac{1}{n}\mathbb{E}\left[1-2\left|\frac{1}{2}-X_{c_3^*}\right|\right]=\frac{1}{n}-\frac{2}{n}\mathbb{E}\left[\left|\frac{1}{2}-X_{c_3^*}\right|\right]
\end{align}
where $c_3^*:= \arg\min_{i\in[n]\setminus\{c_1^*\}}\left|
X_i-\frac{1}{2}
\right|$ is the index of the second nearest neighbor of $\frac{1}{2}$ among $X_{1:n}.$ Note that $c_3^*$ is not necessarily equal to $c_2^*$, since the definition of $c_2^*$ requires $X_{c_2^*}$ to be on the other side of $\frac{1}{2}$ from $X_{c_1^*}$ while that of $c_3^*$ does not.

By our knowledge of the expectation of the second order statistic among i.i.d. uniform random variables, we obtain that
$$\frac{1}{n}-\frac{2}{n}\mathbb{E}\left[\left|\frac{1}{2}-X_{c_3^*}\right|\right]=\frac{1}{n}-\frac{2}{n}\cdot \frac{1}{2}\cdot\frac{2}{n+1}=\frac{1}{n}-\frac{2}{n(n+1)}.$$



Therefore, similarly to before, we also have that there is a constant $\tilde{c},c>0$ such that
\begin{align}
    &\|\nu(Z,\hat g)-\nu(Z,\hat g^{(-1})\|_2=\sqrt{n}\sqrt{\mathbb{P}\Big(X\in \mathcal{E}(X_{c_1^*}, X_{c_2^*})\triangle \mathcal{E}\Big(X^{(-1)}_{c^{*(-1)}_1}, X^{(-1)}_{c^{*(-1)}_2}\Big)\Big)}\\
    &\ge \sqrt{n}\sqrt{\mathbb{P}\Big(X\in \mathcal{E}(X_{c_1^*}, X_{c_2^*})\triangle \mathcal{E}\Big(X^{(-1)}_{c^{*(-1)}_1}, X^{(-1)}_{c^{*(-1)}_2}\Big)\Big|c^{*(-1)}_1\ne c^*_1\Big)\mathbb{P}\Big(c^{*(-1)}_1\ne c^*_1\Big)}\\
    &\ge \sqrt{n}\sqrt{\mathbb{P}\Big(X\in \mathcal{E}(X_{c_1^*}, X_{c_2^*})\triangle \mathcal{E}\Big(X^{(-1)}_{c^{*(-1)}_1}, X^{(-1)}_{c^{*(-1)}_2}\Big)\Big|c^{*(-1)}_1\ne c^*_1, c^{*(-1)}_2= c^*_2 \Big)\mathbb{P}\Big(c^{*(-1)}_1\ne c^*_1\Big)}\\
    &=\sqrt{n}\sqrt{\mathbb{E}\Big[
    \frac{1}{2}\left|X_{c_1^*}-X^{(-1)}_{c^{*(-1)}_1}\right|
    \Big|c^{*(-1)}_1\ne c^*_1, c^{*(-1)}_2= c^*_2 \Big]\mathbb{P}\Big(c^{*(-1)}_1\ne c^*_1\Big)}\\
    &\ge \tilde{c} \sqrt{\mathbb{P}\Big(c^{*(-1)}_1\ne c^*_1\Big)}\ge\frac{c}{\sqrt{n}}.
\end{align}
Therefore we proved the second point. The third point follows because
\begin{align}
\sqrt{n}\Big[V_n(\hat g)-V_n(g_0)-\big(V(\hat g)-V(g_0)\big)\Big] &{=}-\sqrt{n}V(\hat g){=}-\sqrt{n}V(\hat g)= \lambda_{X_{c_1^*,c_2^*}}
\xrightarrow{d} \rm{Exp}(0.5)
\end{align}
where $V_n(g_0)=V(g_0)=0$ by definition, and $V_n(\hat g)$ since the nearest neighbor estimator evaluated at a training data point never misclassifies the point.
\end{proof}

\section{Proof of Corollary~\ref{simpler}}
\begin{proof}
We note that by monotonicity of $L^p$ norms, plugging the bound in \cref{algo_stab} into the right hand side terms of \cref{morgane=bed} gives the stability conditions in \cref{lem:stoc-eq-jac}. Corollary~\ref{simpler} then immediately follows.
\end{proof}

\section{Proof of Theorem~\ref{bag:sleepy}}
\begin{proof}

Denote ${Z}_{1:m,(-l)}^b, b\in\{1,\ldots,B\}$ as the corresponding bagged samples when the $l$-th data point $Z_l$ is replaced with an independent copy $\tilde{Z}_l.$
We have that for $l\in[n]:$
\begin{align}
    &\left\|\sup_x\|\hat g(x)-\hat g^{(-l)}(x)\|_2\right\|_{2r}\\
    &= \left\|\sup_x\left\|\frac{1}{B}\sum_{b=1}^B\left(\hat{h}(Z_{1:m}^b)(x)-\hat{h}({Z}_{1:m,(-l)}^b)(x)\right)\right\|_2\right\|_{2r}\\
    &= \left\|\sup_x\left\|\frac{1}{B}\sum_{b=1}^B\left(\hat{h}(Z_{1:m}^b)(x)-\hat{h}({Z}_{1:m,(-l)}^b)(x)\right)\mathbbm{1}_{\{\exists t\leq m \mbox{ s.t. }Z_t^b=Z_l\} }\right\|_2\right\|_{2r}.
\end{align}

\noindent The last equality was because
$$\hat{h}(Z_{1:m}^b)(x)-\hat{\theta}({Z}_{1:m,(-l)}^b)(x)\neq 0$$
only if
$$\exists t\leq m \mbox{ s.t. }Z_t^b=Z_l.$$

\noindent Fix any $l\in[n]$. To simplify notations, let
$$\nabla\hat{h}(Z^b)(x):=\hat{h}(Z_{1:m}^b)(x)-\hat{h}({Z}_{1:m,(-l)}^b)(x),$$
and let $A_b$ be the event $\{\exists t\leq m \mbox{ s.t. }Z_t^b=Z_l\}$. 

Then
$$\left\|\sup_x\|\hat g(x)-\hat g^{(-l)}(x)\|_2\right\|_{2r}=\left\|\sup_x\left\|\frac{1}{B}\sum_{b=1}^B\nabla\hat{h}(Z^b)(x)\mathbbm{1}_{A_b}\right\|_2\right\|_{2r}.$$


\noindent By triangle inequality and symmetry of distributions
\begin{align}
    &\left\|\sup_x\|\hat g(x)-\hat g^{(-l)}(x)\|_2\right\|_{2r}=\left\|\sup_x\left\|\frac{1}{B}\sum_{b=1}^B\nabla\hat{h}(Z^b)(x)\mathbbm{1}_{A_b}\right\|_2\right\|_{2r}\\
    &\le \frac{1}{B} \left\|\sum_{b=1}^B\sup_x\left\|\nabla\hat{h}(Z^b)(x)\mathbbm{1}_{A_b}\right\|_2\right\|_{2r}\\
    &\leq \frac{1}{B} \left\|\sum_{b=1}^B\left\{\sup_x\left\|\nabla\hat{h}(Z^b)(x)\mathbbm{1}_{A_b}\right\|_2-\mathbb{E}\left[\sup_x\left\|\nabla\hat{h}(Z^b)(x)\mathbbm{1}_{A_b}\right\|_2\Big|Z_1,\ldots,Z_n, \tilde{Z}_l\right]
    \right\}\right\|_{2r}\\
    &+\frac{1}{B}\sum_{b=1}^B\left\|\mathbb{E}\left[\sup_x\left\|\nabla\hat{h}(Z^b)(x)\mathbbm{1}_{A_b}\right\|_2\Big|Z_1,\ldots,Z_n, \tilde{Z}_l\right]\right\|_{2r}\\
    &\leq \frac{1}{B} \left\|\sum_{b=1}^B\left\{\sup_x\left\|\nabla\hat{h}(Z^b)(x)\mathbbm{1}_{A_b}\right\|_2-\mathbb{E}\left[\sup_x\left\|\nabla\hat{h}(Z^b)(x)\mathbbm{1}_{A_b}\right\|_2\Big|Z_1,\ldots,Z_n, \tilde{Z}_l\right]
    \right\}\right\|_{2r}\\
    &+\left\|\mathbb{E}\left[\sup_x\left\|\nabla\hat{h}(Z^1)(x)\mathbbm{1}_{A_1}\right\|_2\Big|Z_1,\ldots,Z_n, \tilde{Z}_l\right]\right\|_{2r}.
\end{align}

For ease of notations, denote
$$Z_{(l)}:=\left(Z_1,\ldots,Z_n, \tilde{Z}_l\right)$$
and denote
$$R_b:=\sup_x\left\|\nabla\hat{h}(Z^b)(x)\mathbbm{1}_{A_b}\right\|_2-\mathbb{E}\left[\sup_x\left\|\nabla\hat{h}(Z^b)(x)\mathbbm{1}_{A_b}\right\|_2\Big|Z_{(l)}\right].$$

\noindent For the first term, we have by tower law that
\begin{align}
    &\frac{1}{B} \left\|\sum_{b=1}^B\left\{\sup_x\left\|\nabla\hat{h}(Z^b)(x)\mathbbm{1}_{A_b}\right\|_2-\mathbb{E}\left[\sup_x\left\|\nabla\hat{h}(Z^b)(x)\mathbbm{1}_{A_b}\right\|_2\Big|Z_{(l)}\right]
    \right\}\right\|_{2r}
    =\frac{1}{B} \left\|\sum_{b=1}^BR_b\right\|_{2r}\\
    &=\frac{1}{B}\mathbb{E}\left[\left(\sum_{b=1}^BR_b\right)^{2r}\right]^{\frac{1}{2r}}=\frac{1}{B}\mathbb{E}\left[\mathbb{E}\left[\left(\sum_{b=1}^BR_b\right)^{2r}\Bigg|Z_{(l)}\right]\right]^{\frac{1}{2r}}.
\end{align}

\noindent To further simplify, we use the following lemma:
\begin{lemma}[Marcinkiewicz-Zygmund inequality~\cite{rio2009moment}]\label{lem:lpnorm}
Let $p\geq 1$. If $X_1,\ldots,X_n$ are i.i.d. random variables such that $\mathbb{E}[X_1]=0$, then there exists a constant $C_p$ such that
$$\Vert\frac{1}{\sqrt{n}}\sum_{i=1}^n X_i\Vert_p\leq C_p\Vert X_i\Vert_p.$$
\end{lemma}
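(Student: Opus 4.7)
The plan is to prove this via the classical symmetrization-plus-Khintchine argument, focusing on the case $p\geq 2$ which is what the surrounding application needs (the paper uses $p=2r$ for a constant $r>1$). The case $1\leq p<2$ reduces to $p=2$ since $\|Y\|_p\leq \|Y\|_2$ on a probability space, and the $p=2$ case is immediate by orthogonality of centered independent summands.

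First I would symmetrize. Let $X_1',\dots,X_n'$ be an independent copy of the sample, and set $Y_i:=X_i-X_i'$. Because $\E[X_i']=0$, Jensen's inequality applied to the conditional expectation over the primed copy yields $\|\sum_{i=1}^n X_i\|_p \leq \|\sum_{i=1}^n Y_i\|_p$. Since each $Y_i$ is symmetric, $(Y_1,\dots,Y_n)$ has the same distribution as $(\veps_1 Y_1,\dots,\veps_n Y_n)$ for iid Rademacher signs $\veps_i$ independent of everything, so
\begin{equation}
\Big\|\sum_{i=1}^n X_i\Big\|_p \;\leq\; \Big\|\sum_{i=1}^n \veps_i Y_i\Big\|_p.
\end{equation}

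Next I would condition on $(Y_1,\dots,Y_n)$ and apply the scalar Khintchine inequality: there is a constant $B_p<\infty$ with $\E_\veps\big[|\sum \veps_i a_i|^p\big]\leq B_p^p \big(\sum a_i^2\big)^{p/2}$ for every deterministic $a\in\R^n$. Taking expectations and using the tower rule, this gives $\|\sum \veps_i Y_i\|_p^p \leq B_p^p\, \E\big[(\sum Y_i^2)^{p/2}\big]$. For $p\geq 2$ the function $t\mapsto t^{p/2}$ is convex, so Jensen applied to the uniform average over $i$ yields
\begin{equation}
\E\Big[\Big(\tfrac{1}{n}\sum_{i=1}^n Y_i^2\Big)^{p/2}\Big] \leq \tfrac{1}{n}\sum_{i=1}^n \E[|Y_i|^p] = \E[|Y_1|^p],
\end{equation}
and multiplying by $n^{p/2}$ gives $\E[(\sum Y_i^2)^{p/2}]\leq n^{p/2}\|Y_1\|_p^p$. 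Finally, $\|Y_1\|_p\leq 2\|X_1\|_p$ by the triangle inequality. Chaining the three inequalities and dividing by $\sqrt{n}$ yields the claim with $C_p = 2B_p$.

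The only real subtlety is the Khintchine step, which I would invoke as a black box with an explicit constant $B_p$ (it admits several standard proofs — generating function for Gaussians plus Gaussian comparison, or a direct moment computation combined with hypercontractivity). The rest of the argument is routine: symmetrization via Jensen, Rademacher-randomization via symmetry of $Y_i$, and a one-line Jensen bound on $\E[(\tfrac{1}{n}\sum Y_i^2)^{p/2}]$. For $1\leq p\leq 2$, I would simply observe $\|\tfrac{1}{\sqrt{n}}\sum X_i\|_p\leq \|\tfrac{1}{\sqrt{n}}\sum X_i\|_2 = \|X_1\|_2$ by orthogonality, and — under the implicit assumption $\|X_1\|_2<\infty$ used throughout the paper's application — this suffices; otherwise one replaces the Jensen step with the sub-additivity $(\sum Y_i^2)^{p/2}\leq \sum |Y_i|^p$ valid for $p\leq 2$ and concludes by the same symmetrization–Khintchine chain.
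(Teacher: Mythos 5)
The paper itself does not prove this lemma at all: it is quoted as an external result (Rio, 2009) and used as a black box inside the proof of Theorem~4 with $p=2r$. Your symmetrization--Khintchine argument is a correct, self-contained derivation of the case that is actually needed: for $p\ge 2$, the chain consisting of Jensen symmetrization $\|\sum_i X_i\|_p\le\|\sum_i(X_i-X_i')\|_p$, Rademacher randomization via the symmetry of $Y_i=X_i-X_i'$, conditional Khintchine, the power-mean/Jensen bound $\E[(\sum_i Y_i^2)^{p/2}]\le n^{p/2}\,\E|Y_1|^p$, and $\|Y_1\|_p\le 2\|X_1\|_p$, gives the stated inequality with $C_p=2B_p$, and every step is valid. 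So where the paper relies on a citation, you supply an elementary proof; the only price is invoking scalar Khintchine with an unspecified constant, which is standard. One caveat: your handling of $1\le p<2$ does not (and cannot) recover the lemma as literally stated. The reduction $\|\cdot\|_p\le\|\cdot\|_2$ yields the bound $\|X_1\|_2$ rather than $C_p\|X_1\|_p$, and the sub-additivity variant $(\sum_i Y_i^2)^{p/2}\le\sum_i|Y_i|^p$ only produces the $n^{1/p}$-normalized inequality, leaving a divergent factor $n^{1/p-1/2}$ after dividing by $\sqrt n$. Indeed the $\sqrt n$-normalized statement is false for $p<2$ without a second moment (take $X_i$ symmetric $\alpha$-stable with $p<\alpha<2$), so this is an imprecision in the lemma as stated rather than a defect of your argument, and it is harmless here since the application only uses $p=2r>2$.
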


\noindent Since $R_b,b\in[B]$ are i.i.d. conditional on $Z_{(l)}$, we then have that
\begin{align}
    &\frac{1}{B}\mathbb{E}\left[\mathbb{E}\left[\left(\sum_{b=1}^BR_b\right)^{2r}\Bigg|Z_{(l)}\right]\right]^{\frac{1}{2r}}\\
    & \leq \frac{1}{B}\cdot
    \mathbb{E}\left[\left(\sqrt{B}C_{2r}\right)^{2r}\mathbb{E}\left[R_b^{2r}\Bigg|Z_{(l)}\right]\right]^{1/2r}=\frac{C_{2r}}{\sqrt{B}}\cdot\left\Vert R_b\right\Vert_{2r} \mbox{ by tower law}.
\end{align}

\noindent Now also
\begin{align}
    &\left\|\mathbb{E}\left[\sup_x\left\|\nabla\hat{h}(Z^1)(x)\mathbbm{1}_{A_1}\right\|_2\Big|Z_1,\ldots,Z_n, \tilde{Z}_l\right]\right\|_{2r}\\
    &=\left\Vert\mathbb{P}\left(A_1\Big|Z_1,\ldots,Z_n, \tilde{Z}_l\right)
    \mathbb{E}\left[\sup_x\left\|\nabla\hat{h}(Z^1)(x)\right\|_2\Big|Z_1,\ldots,Z_n, \tilde{Z}_l,A_1\right]\right\Vert_{2r}\\
    &\hspace{3cm}\mbox{since $\nabla\hat{h}(Z^1)(x)=0$ for all $x$ on $A_1^c$}\\
    &=\left\Vert\mathbb{P}\left(A_1\right)
    \mathbb{E}\left[\sup_x\left\|\nabla\hat{h}(Z^1)(x)\right\|_2\Big|Z_1,\ldots,Z_n, \tilde{Z}_l,A_1\right]\right\Vert_{2r}\\
    &\hspace{3cm}\mbox{since $A_1$ is independent of $Z_1,\ldots,Z_n, \tilde{Z}_l$}\\
    &=\mathbb{P}\left(A_1\right)\left\Vert
    \mathbb{E}\left[\sup_x\left\|\nabla\hat{h}(Z^1)(x)\right\|_2\Big|Z_1,\ldots,Z_n, \tilde{Z}_l,A_1\right]\right\Vert_{2r}\\
   &\le \mathbb{P}\left(A_1\right)\left\Vert
    \sup_x\left\|\nabla\hat{h}(Z^1)(x)\right\|_2\right\Vert_{2r} \mbox{ by Jensen's inequality and tower law}\\
    &\leq \mathbb{P}\left(A_1\right)\cdot 2C \mbox{ by moment condition, since $2r\le s$.}
\end{align}

\noindent By union bound,
$$\mathbb{P}\left(A_1\right)\leq\sum_{t=1}^m \mathbb{P}\left(Z_t^1=Z_l\right)=\frac{m}{n}.$$

\noindent Therefore, altogether we obtain
\begin{align}
     &\left\|\sup_x\|\hat g(x)-\hat g^{(-l)}(x)\|_2\right\|_{2r}\\
    &\leq \frac{C_{2r}}{\sqrt{B}}\cdot\left\Vert R_b\right\Vert_{2r} + 2C\cdot\frac{m}{n}\\
    &\leq \frac{C_{2r}}{\sqrt{B}}\cdot\left\Vert\sup_x\left\|\nabla\hat{h}(Z^1)(x)\mathbbm{1}_{A_1}\right\|_2\right\Vert_{2r}+\frac{C_{2r}}{\sqrt{B}}\cdot\left\Vert\mathbb{E}\left[\sup_x\left\|\nabla\hat{h}(Z^b)(x)\mathbbm{1}_{A_b}\right\|_2\Big|Z_{(l)}\right]\right\Vert_{2r} + 2C\cdot\frac{m}{n}\\
    &\leq \frac{C_{2r}}{\sqrt{B}}\cdot\left\Vert\sup_x\left\|\nabla\hat{h}(Z^1)(x)\mathbbm{1}_{A_1}\right\|_2\right\Vert_{2r}+\frac{C_{2r}}{\sqrt{B}}\cdot2C\cdot\frac{m}{n}+2C\cdot\frac{m}{n}\\
    &\leq \frac{2C\cdot C_{2r}}{\sqrt{B}}\cdot\left\Vert\mathbbm{1}_{A_1 }\right\Vert_{k}+\frac{C_{2r}}{\sqrt{B}}\cdot2C\cdot\frac{m}{n}+2C\cdot\frac{m}{n}\\
    &= \frac{2C\cdot C_{2r}}{\sqrt{B}}\cdot\left(\mathbb{P}(A_1 )\right)^{1/k}+\frac{C_{2r}}{\sqrt{B}}\cdot2C\cdot\frac{m}{n}+2C\cdot\frac{m}{n}\\
    &\leq \frac{2C\cdot C_{2r}}{\sqrt{B}}\cdot\left(\frac{m}{n}\right)^{1/{k}}+\frac{C_{2r}}{\sqrt{B}}\cdot2C\cdot\frac{m}{n}+2C\cdot\frac{m}{n}.
\end{align}

Hence, we also have
$$\max_{l\le n}\left\|\sup_x\|\hat g(x)-\hat g^{(-l)}(x)\|_2\right\|_{2r}\leq \frac{2C\cdot C_{2r}}{\sqrt{B}}\cdot\left(\frac{m}{n}\right)^{1/{k}}+\frac{C_{2r}}{\sqrt{B}}\cdot2C\cdot\frac{m}{n}+2C\cdot\frac{m}{n}.$$

\noindent Assuming 
$$m=o(\sqrt{n})$$
and
$$B>>m^{2/{k}}\cdot n^{1-\frac{2}{{k}}},$$
this upper bound is of order $o(n^{-1/2})$:
$$\max_{l\le n}\left\|\sup_x\|\hat g(x)-\hat g^{(-l)}(x)\|_2\right\|_{2r}=o(n^{-1/2}).$$
\end{proof}

\begin{remark}
We could in fact relax the conditions in Theorem~\ref{bag:sleepy} by using Rosenthal's inequality instead of Marcinkiewicz-Zygmund inequality in the proof. Moreover, we can relax the bounded moments condition to restrict on $L^{2r}$ norm instead of on $L^{s}$ norm. This gives the following theorem.
\end{remark}

\begin{theorem}\label{thm:relaxbag}
Assume $B,m$ satisfy
$$m=o(\sqrt{n})\hspace{1cm}B>>m^{\frac{1}{2r-1}}\cdot n^{\frac{r-1}{2r-1}},$$
and assume the base estimator $\hat h$ has bounded moments:
$$\max_{l\le n}\left\Vert
    \sup_x\left\|\hat{h}(Z_{1:m}^1)(x)\right\|_2\right\Vert_{2r}\leq C$$
for some constant $C>0.$ Then \cref{algo_stab} is achieved:
$$\max_{l\le n}\left\|\sup_x\|\hat g(x)-\hat g^{(-l)}(x)\|_2\right\|_{2r}=o(n^{-1/2}).$$
Therefore if $a$ and $\nu $ satisfy the condition \cref{morgane=bed} then the condition (\ref{lem:stoc-eq-jac}) is satisfied.
\end{theorem}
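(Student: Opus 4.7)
The plan is to follow the proof of Theorem~\ref{bag:sleepy} line by line, replacing the single invocation of Marcinkiewicz--Zygmund (Lemma~\ref{lem:lpnorm}) with Rosenthal's inequality, which for i.i.d.\ mean-zero $R_1,\ldots,R_B$ and $p=2r$ separates the contributions of the $L^{2r}$ and $L^2$ norms as $\|\sum_b R_b\|_{2r}\le C_{2r}(B^{1/(2r)}\|R_1\|_{2r}+B^{1/2}\|R_1\|_2)$. In our bagging setting, the $L^2$ norm of the centered summand carries an extra factor of $(m/n)^{(r-1)/(2r)}$ over its $L^{2r}$ norm---precisely because each bag contains $Z_l$ only with probability $m/n$---and this discrepancy is what permits the relaxed requirements on $B$ and on the base-estimator moment relative to Theorem~\ref{bag:sleepy}.

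Concretely, I set $X_b:=\sup_x\|\hat h(Z_{1:m}^b)(x)-\hat h(Z_{1:m,(-l)}^b)(x)\|_2\,\mathbbm{1}_{A_b}$, where $A_b$ is the event that the $b$-th bag contains $Z_l$, and $R_b:=X_b-\mathbb{E}[X_b\mid Z_{(l)}]$; since bags are i.i.d.\ given the data, the $X_b$ (and hence the $R_b$) are i.i.d.\ given $Z_{(l)}$. Triangle inequality reduces the task to bounding $\|\mathbb{E}[X_1\mid Z_{(l)}]\|_{2r}+\|\tfrac{1}{B}\sum_b R_b\|_{2r}$. For the ``mean'' term, I would reuse the argument from the proof of Theorem~\ref{bag:sleepy}: since the bagging is independent of $Z_{(l)}$, $\mathbb{E}[X_1\mid Z_{(l)}]=(m/n)\,\mathbb{E}[Y\mid Z_{(l)},A_1]$ with $Y:=\sup_x\|\hat h(Z_{1:m}^1)(x)-\hat h(Z_{1:m,(-l)}^1)(x)\|_2$, and Jensen plus the $L^{2r}$ moment condition yield $\|\mathbb{E}[X_1\mid Z_{(l)}]\|_{2r}\le 2C(m/n)$, which is $o(n^{-1/2})$ under $m=o(\sqrt n)$.

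For the fluctuation term, I apply Rosenthal's inequality conditionally on $Z_{(l)}$ and then take outer expectation, obtaining
\begin{equation}
\mathbb{E}\Bigl[\Bigl|\tfrac{1}{B}\sum_b R_b\Bigr|^{2r}\Bigr]\le \frac{C_{2r}}{B^{2r}}\Bigl(B\,\mathbb{E}[|R_1|^{2r}]+B^r\,\mathbb{E}\bigl[\mathbb{E}[R_1^2\mid Z_{(l)}]^r\bigr]\Bigr).
\end{equation}
The $L^{2r}$ piece is direct: $\mathbb{E}[|R_1|^{2r}]\le (4C)^{2r}$ from the triangle inequality and the moment condition on $\hat h$. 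The $L^2$ piece is where the indicator pays off: from $\mathbb{E}[R_1^2\mid Z_{(l)}]\le 4\,\mathbb{E}[X_1^2\mid Z_{(l)}]=4(m/n)\,\mathbb{E}[Y^2\mid Z_{(l)},A_1]$, conditional Jensen for $x\mapsto x^r$ together with the tower-law identity $\mathbb{E}[\mathbb{E}[Y^{2r}\mid Z_{(l)},A_1]]=(n/m)\,\mathbb{E}[Y^{2r}]$---which holds because $Y\equiv 0$ on $A_1^c$---produces $\mathbb{E}[\mathbb{E}[R_1^2\mid Z_{(l)}]^r]\le 4^r(2C)^{2r}(m/n)^{r-1}$. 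Substituting and dividing by $B$ yields a bound of the form $\tfrac{1}{B}\|\sum_b R_b\|_{2r}\le C'\,\bigl(B^{-(2r-1)/(2r)}+B^{-1/2}(m/n)^{(r-1)/(2r)}\bigr)$, and checking that both terms are $o(n^{-1/2})$ under $B\gg m^{1/(2r-1)} n^{(r-1)/(2r-1)}$ and $m=o(\sqrt n)$ completes the bound on the stability modulus.

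The main obstacle is the $L^2$ bookkeeping at the end: one must carefully extract the factor of $m/n$ from the indicator $\mathbbm{1}_{A_b}$ via the tower law while simultaneously using $Y\equiv 0$ on $A_1^c$ to recover the unconditional moment bound $\mathbb{E}[Y^{2r}]\le (2C)^{2r}$ after the intermediate conditioning on $A_1$. Everything else reduces to routine bookkeeping of exponents, and the implication for Lemma~\ref{lem:stoc-eq-jac} then follows from Corollary~\ref{simpler}.
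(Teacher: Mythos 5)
Your overall strategy is the same as the paper's (center the indicator-weighted bag increments, apply Rosenthal conditionally on $Z_{(l)}$, and treat the mean term via $\mathbb{P}(A_1)\le m/n$), but the exponent bookkeeping in the two Rosenthal terms is lossy in a way that breaks the final step: your bound $C'\bigl(B^{-(2r-1)/(2r)}+B^{-1/2}(m/n)^{(r-1)/(2r)}+m/n\bigr)$ is \emph{not} $o(n^{-1/2})$ under the stated hypotheses. Concretely, your first term requires $B\gg n^{r/(2r-1)}$, whereas the theorem only assumes $B\gg m^{1/(2r-1)}n^{(r-1)/(2r-1)}$, which is strictly weaker whenever $m\ll n$; e.g.\ with $r=2$, $m\asymp n^{0.4}$, $B\asymp n^{1/2}$ the hypotheses hold but your first term is of order $n^{-3/8}$ and your second of order $n^{-2/5}$, neither $o(n^{-1/2})$. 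The source of the loss is the step where you write $\mathbb{E}\bigl[\mathbb{E}[Y^{2r}\mid Z_{(l)},A_1]\bigr]=(n/m)\,\mathbb{E}[Y^{2r}]$ and then bound $\mathbb{E}[Y^{2r}]\le(2C)^{2r}$: you pay the factor $(n/m)$ and then discard the compensating factor $m/n$ hidden inside $\mathbb{E}[Y^{2r}]$. The same omission appears in your $L^{2r}$ piece, where you bound $\mathbb{E}[|R_1|^{2r}]\le(4C)^{2r}$ without extracting any $(m/n)$ factor from the indicator at all.

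The fix, which is what the paper's proof does, is to observe that conditional on $A_1$ the bag $Z_{1:m}^1$ (and its perturbed copy $Z_{1:m,(-l)}^1$) is still distributed as $m$ i.i.d.\ draws from the population, because the subsample indices are chosen independently of the data values; hence $\mathbb{E}[Y^{2r}\mid A_1]\le(2C)^{2r}$ directly, equivalently $\mathbb{E}[Y^{2r}]=\mathbb{P}(A_1)\,\mathbb{E}[Y^{2r}\mid A_1]\le(m/n)(2C)^{2r}$. Feeding this into your two Rosenthal pieces gives $\mathbb{E}[|R_1|^{2r}]\lesssim (m/n)C^{2r}$ and $\mathbb{E}\bigl[\mathbb{E}[R_1^2\mid Z_{(l)}]^r\bigr]\lesssim (m/n)^{r}C^{2r}$, so the fluctuation bound becomes $C'\bigl(B^{-(2r-1)/(2r)}(m/n)^{1/(2r)}+B^{-1/2}(m/n)^{1/2}\bigr)$, matching the paper. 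With these exponents the condition $B\gg m^{1/(2r-1)}n^{(r-1)/(2r-1)}$ is exactly what makes the first term $o(n^{-1/2})$, the second is $o(n^{-1/2})$ as soon as $B\gg m$ (implied by $m=o(\sqrt n)$ and the condition on $B$), and the mean term $2C\,m/n$ is handled by $m=o(\sqrt n)$, completing the argument. Everything else in your write-up (the i.i.d.-given-$Z_{(l)}$ structure, the conditional Rosenthal application, the mean-term bound, and the appeal to Corollary~\ref{simpler}) is correct and coincides with the paper's route.
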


\begin{proof}[Proof of Theorem~\ref{thm:relaxbag}]
We follow the proof of Theorem~\ref{bag:sleepy} and obtain
\begin{align}
    &\left\|\sup_x\|\hat g(x)-\hat g^{(-l)}(x)\|_2\right\|_{2r}=\left\|\sup_x\left\|\frac{1}{B}\sum_{b=1}^B\nabla\hat{h}(Z^b)(x)\mathbbm{1}_{A_b}\right\|_2\right\|_{2r}\\
    &\leq \frac{1}{B} \left\|\sum_{b=1}^BR_b\right\|_{2r}+2C\cdot \frac{m}{n}\\
    &=\frac{1}{B}\cdot\mathbb{E}\left[\mathbb{E}\left[\left(\sum_{b=1}^BR_b\right)^{2r}\Bigg|Z_{(l)}\right]\right]^{1/2r}+2C\cdot \frac{m}{n} \hspace{0.5cm}\mbox{ by tower law}.
\end{align}

We then use the following lemma.
\begin{lemma}[Rosenthal's inequality~\cite{ibragimov1998exact}]
Let $p\geq 1$. If $X_1,\ldots,X_n$ are i.i.d. random variables such that $\mathbb{E}[X_1]=0$, then there exists a constant $C_p$ such that
$$\Vert\frac{1}{\sqrt{n}}\sum_{i=1}^n X_i\Vert_p\leq C_p\left(\Vert X_i\Vert_2+n^{\frac{1}{p}-\frac{1}{2}}\Vert X_i\Vert_p\right).$$
\end{lemma}

Since $R_b,b\in[B]$ are i.i.d. conditional on $Z_{(l)}$, we then have that by symmetry of distributions
\begin{align}
    &\frac{1}{B}\cdot\mathbb{E}\left[\mathbb{E}\left[\left(\sum_{b=1}^BR_b\right)^{2r}\Bigg|Z_{(l)}\right]\right]^{1/2r}\\
    & \leq \frac{1}{B}\cdot
    \mathbb{E}\left[\left(\sqrt{B}C_{2r}\right)^{2r}\left(\left(\mathbb{E}\left[R_1^2\Big|Z_{(l)}\right]\right)^{1/2}+B^{\frac{1}{2r}-\frac{1}{2}}\left( \mathbb{E}\left[R_1^{2r}\Big|Z_{(l)}\right] \right)^{1/2r}\right)^{2r}\right]^{1/2r}
    \\
    &= \frac{C_{2r}}{\sqrt{B}}\cdot\left\Vert \left(\mathbb{E}\left[R_1^2\Big|Z_{(l)}\right]\right)^{1/2}+B^{\frac{1}{2r}-\frac{1}{2}}\left( \mathbb{E}\left[R_1^{2r}\Big|Z_{(l)}\right] \right)^{1/2r} \right\Vert_{2r}.
\end{align}

By triangle inequality, we have
\begin{align}
    &\frac{C_{2r}}{\sqrt{B}}\cdot\left\Vert \left(\mathbb{E}\left[R_1^2\Big|Z_{(l)}\right]\right)^{1/2}+B^{\frac{1}{2r}-\frac{1}{2}}\left( \mathbb{E}\left[R_1^{2r}\Big|Z_{(l)}\right] \right)^{1/2r} \right\Vert_{2r}\\
    &\leq \frac{C_{2r}}{\sqrt{B}}\cdot\left\Vert \left(\mathbb{E}\left[R_1^2\Big|Z_{(l)}\right]\right)^{1/2} \right\Vert_{2r}
    +
    \frac{C_{2r}}{\sqrt{B}}\cdot\left\Vert B^{\frac{1}{2r}-\frac{1}{2}}\left( \mathbb{E}\left[R_1^{2r}\Big|Z_{(l)}\right] \right)^{1/2r} \right\Vert_{2r}.
\end{align}

For further ease of notations let
$$R_{1,(1)}:=\sup_x\left\|\nabla\hat{h}(Z^1)(x)\mathbbm{1}_{A_1}\right\|_2,$$

$$R_{1,(2)}:=\mathbb{E}\left[\sup_x\left\|\nabla\hat{h}(Z^1)(x)\mathbbm{1}_{A_1}\right\|_2\Big|Z_1,\ldots,Z_n, \tilde{Z}_l\right].$$

Note that $$R_1=R_{1,(1)}-R_{1,(2)}.$$

Then also by triangle inequality and Jensen's inequality
\begin{align}
    &\frac{C_{2r}}{\sqrt{B}}\cdot\left\Vert \left(\mathbb{E}\left[R_1^2\Big|Z_{(l)}\right]\right)^{1/2} \right\Vert_{2r}\\
    &\le\frac{C_{2r}}{\sqrt{B}}\cdot\left\Vert \left(\mathbb{E}\left[R_{1,(1)}^2\Big|Z_{(l)}\right]\right)^{1/2} \right\Vert_{2r}+\frac{C_{2r}}{\sqrt{B}}\cdot\left\Vert \left(\mathbb{E}\left[R_{1,(2)}^2\Big|Z_{(l)}\right]\right)^{1/2} \right\Vert_{2r}\\
    &\leq \frac{2C_{2r}}{\sqrt{B}}\cdot\left\Vert \left(\mathbb{E}\left[R_{1,(1)}^2\Big|Z_{(l)}\right]\right)^{1/2} \right\Vert_{2r}\\
    &=\frac{2C_{2r}}{\sqrt{B}}\cdot\left\Vert \left(\mathbb{E}\left[\left(
    \sup_x\left\|\nabla\hat{h}(Z^1)(x)\mathbbm{1}_{A_1}\right\|_2
    \right)^2\Big|Z_{(l)}\right]\right)^{1/2} \right\Vert_{2r}\\
    &=\frac{2C_{2r}}{\sqrt{B}}\cdot\left\Vert \left(\mathbb{E}\left[
    \sup_x\left\|\nabla\hat{h}(Z^1)(x)\right\|_2^2\cdot \mathbbm{1}_{A_1}\Big|Z_{(l)}\right]\right)^{1/2} \right\Vert_{2r}.
\end{align}

Further, we rewrite this term as

\begin{align}
    &\frac{2C_{2r}}{\sqrt{B}}\cdot\left\Vert \left(\mathbb{E}\left[
    \sup_x\left\|\nabla\hat{h}(Z^1)(x)\right\|_2^2\cdot \mathbbm{1}_{A_1}\Big|Z_{(l)}\right]\right)^{1/2} \right\Vert_{2r}\\
    &=\frac{2C_{2r}}{\sqrt{B}}\cdot\left\Vert \left(\ 
    \mathbb{P}\left(A_1\Big|Z_{(l)}\right)
    \cdot\mathbb{E}\left[
    \sup_x\left\|\nabla\hat{h}(Z^1)(x)\right\|_2^2\Big|Z_{(l)}, A_1\right]\right)^{1/2} \right\Vert_{2r}\\
    &\hspace{3cm}\mbox{since $\nabla\hat{h}(Z^1)(x)=0$ for all $x$ on $A_1^c$}\\
    &=\frac{2C_{2r}}{\sqrt{B}}\cdot\left\Vert \left(\ 
    \mathbb{P}\left(A_1\right)
    \cdot\mathbb{E}\left[
    \sup_x\left\|\nabla\hat{h}(Z^1)(x)\right\|_2^2\Big|Z_{(l)}, A_1\right]\right)^{1/2} \right\Vert_{2r}\\
    &\hspace{3cm}\mbox{since $A_1$ is independent of $Z_{(l)}$}\\
    &=\frac{2C_{2r}}{\sqrt{B}}\cdot\left(\mathbb{P}\left(A_1\right) \right)^{1/2} \left\Vert \left(\ 
    \mathbb{E}\left[
    \sup_x\left\|\nabla\hat{h}(Z^1)(x)\right\|_2^2\Big|Z_{(l)}, A_1\right]\right)^{1/2} \right\Vert_{2r}.
\end{align}

By Jensen's inequality and tower law, we have
\begin{align}
    &\frac{2C_{2r}}{\sqrt{B}}\cdot\left(\mathbb{P}\left(A_1\right) \right)^{1/2} \left\Vert \left(\ 
    \mathbb{E}\left[
    \sup_x\left\|\nabla\hat{h}(Z^1)(x)\right\|_2^2\Big|Z_{(l)}, A_1\right]\right)^{1/2} \right\Vert_{2r}\\
    &\leq \frac{2C_{2r}}{\sqrt{B}}\cdot\left(\mathbb{P}\left(A_1\right) \right)^{1/2}
    \left\Vert 
    \sup_x\left\|\nabla\hat{h}(Z^1)(x)\right\|_2 \right\Vert_{2r}\\
    &\leq \frac{2C_{2r}}{\sqrt{B}}\cdot\left(\frac{m}{n} \right)^{1/2}\cdot 2C.
\end{align}

Similarly, by replacing the powers of $2$ and $1/2$ with $2r$ and $1/2r$, we can show that 
\begin{align}
    &\frac{C_{2r}}{\sqrt{B}}\cdot\left\Vert B^{\frac{1}{2r}-\frac{1}{2}}\left( \mathbb{E}\left[R_1^{2r}\Big|Z_{(l)}\right] \right)^{1/2r} \right\Vert_{2r}\\
    &\leq 2C_{2r} \cdot B^{\frac{1}{2r}-1}\left(\frac{m}{n} \right)^{1/2r}\cdot 2C.
\end{align}

Altogether, we have
\begin{align}
    &\max_{l\le n}\left\|\sup_x\|\hat g(x)-\hat g^{(-l)}(x)\|_2\right\|_{2r}=\left\|\sup_x\|\hat g(x)-\hat g^{(-l)}(x)\|_2\right\|_{2r}\\
    &\leq \frac{2C_{2r}}{\sqrt{B}}\cdot\left(\frac{m}{n} \right)^{1/2}\cdot 2C+2C_{2r} \cdot B^{\frac{1}{2r}-1}\left(\frac{m}{n} \right)^{1/2r}\cdot 2C+2C\cdot\frac{m}{n}.
\end{align}

\noindent Assuming 
$$m=o(\sqrt{n})$$
and
$$B>>m^{\frac{1}{2r-1}}\cdot n^{\frac{r-1}{2r-1}},$$
this upper bound is of order $o(n^{-1/2})$:
$$\max_{l\le n}\left\|\sup_x\|\hat g(x)-\hat g^{(-l)}(x)\|_2\right\|_{2r}=o(n^{-1/2}).$$

\end{proof}

\section{Proof of Lemma~\ref{lem: knn}}
\begin{proof}
Lemma~\ref{lem: knn} follows from Theorem 3 of \cite{khosravi2019non} by taking
$\psi(Z;\theta):=Y-\theta(x).$
\end{proof}

\section{Establishing \texorpdfstring{$L^{2r}$}{L\^2r}-Continuity and Mean-Squared-Continuity for Examples~\ref{sleep2}, \ref{sleep3}, and \ref{sleep1}}

Recall that
\begin{align}
    m(Z;\theta,g) = a(Z;g)\theta + \nu(Z;g).
\end{align}

We will establish $L^{2r}$-continuity and mean-squared-continuity for Examples~\ref{sleep2}, \ref{sleep3}, and \ref{sleep1} in this section.
\subsection{Establishing for Example~\ref{sleep2}}
For this example, we have
$$a(Z;g)=(T-p(X))(T-p(X))'$$
$$\nu(Z;g)=(Y-q(X))(T-p(X))'.$$

Let $v\ge 1$ be the constant such that $1 =\frac{1}{r}+\frac{1}{v}$. Denote $T_{i},i\in[p]$ as the $i$-th coordinate of $T$. Denote $Y_{i},i\in[p]$ as the $i$-th coordinate of $Y$. For any function $p$ denote $p_{i}(X),i\in[p]$ as the $i$-th coordinate of $p(X)$. We will show that subject to
$$\left\|T_i\right\|_{2v}, \left\|\hat p_i(X)\right\|_{2v}, \left\|\hat p_i(X_l)\right\|_{2v},\left\|Y\right\|_{2v},\left\|\hat q(X)\right\|_{2v}$$

being finite for all $i\in[p]$
then $L^{2r}$-continuity conditions hold. 

Moreover, we will show that if further
$$\left\|T_i\right\|_\infty, \left\| p_i(X)\right\|_\infty, \left\| p_i'(X)\right\|_\infty, \left\|Y\right\|_\infty, \left\| q'(X)\right\|_\infty$$
are finite for all $i\in[p]$ then mean-squared-continuity conditions hold with $q=2$.

We will illustrate for function $a$ and for function $\nu$ separately.


\underline{For function $a$:}\\
We first verify $L^{2r}$-continuity for function $a$. We have that for any $i,j\in[p]$
\begin{align}
&a_{i,j}(Z;\hat g)-a_{i,j}(Z;\hat g^{(-l)})\\
&=\left(\hat p_i^{(-l)}(X)-\hat p_i(X)\right)T_{j}+T_{i}\left(\hat p^{(-l)}_j(X)-\hat p_j(X)\right)\\
&+\frac{1}{2}\left(\hat p_i^{(-l)}(X)-\hat p_i(X)\right)\left(\hat p_j^{(-l)}(X)+\hat p_j(X)\right)\\
&+\frac{1}{2}\left(\hat p^{(-l)}_i(X)+\hat p_i(X)\right)\left(\hat p_j^{(-l)}(X)-\hat p_j(X)\right).
\end{align}

Then by triangle inequality and H\"older's inequality, we obtain
\begin{align}
&\left\|a_{i,j}(Z;\hat g)-a_{i,j}(Z;\hat g^{(-l)})\right\|_{2}\\
&\leq\left\|\sup_x\left|\hat p^{(-l)}_i(x)-\hat p_i(x)\right|\cdot\left|T_j\right|\right\|_2+\left\|\sup_x\left|\hat p^{(-l)}_j(x)-\hat p_j(x)\right|\cdot\left|T_i\right|\right\|_2\\
&+\frac{1}{2}\left\|\sup_x\left|\hat p^{(-l)}_i(x)-\hat p_i(x)\right|\cdot\left(\hat p^{(-l)}_j(X)+\hat p_j(X)\right)\right\|_2\\
&+\frac{1}{2}\left\|\sup_x\left|\hat p^{(-l)}_j(x)-\hat p_j(x)\right|\cdot\left(\hat p^{(-l)}_i(X)+\hat p_i(X)\right)\right\|_2\\
&\le \left\|\sup_x\left|\hat p^{(-l)}_i(x)-\hat p_i(x)\right|\right\|_{2r}\cdot\left\|T_j\right\|_{2v}+\left\|\sup_x\left|\hat p^{(-l)}_j(x)-\hat p_j(x)\right|\right\|_{2r}\cdot\left\|T_i\right\|_{2v}\\
&+\frac{1}{2}\left\|\sup_x\left|\hat p^{(-l)}_i(x)-\hat p_i(x)\right|\right\|_{2r}\cdot\left\|\hat p^{(-l)}_j(X)+\hat p_j(X)\right\|_{2v}\\
&+\frac{1}{2}\left\|\sup_x\left|\hat p^{(-l)}_j(x)-\hat p_j(x)\right|\right\|_{2r}\cdot\left\|\hat p^{(-l)}_i(X)+\hat p_i(X)\right\|_{2v}\\
&\overset{(a)}{\le} L_1\cdot\left\|\sup_x\left\|\hat g^{(-l)}(x)-\hat g(x)\right\|_2\right\|_{2r}
\end{align}
where 
\begin{align}
    L_1&:=\left\|T_j\right\|_{2v}+\left\|T_i\right\|_{2v}+\frac{1}{2}\left\|\hat p^{(-l)}_i(X)\right\|_{2v} +\frac{1}{2}\left\|\hat p_i(X)\right\|_{2v}+\frac{1}{2}\left\|\hat p^{(-l)}_j(X)\right\|_{2v} +\frac{1}{2}\left\|\hat p_j(X)\right\|_{2v}\\
    &\overset{(b)}{=}\left\|T_j\right\|_{2v}+\left\|T_i\right\|_{2v}+\left\|\hat p_i(X)\right\|_{2v}+\left\|\hat p_j(X)\right\|_{2v}.
\end{align}

Here for (a) we have used the triangle inequality, and for (b) we have used the fact that $Z_l$ and $\tilde{Z}_l$ have the same distribution.

By replacing $Z$ with $Z_l$, we can similarly show that
\begin{align}
&\left\|a_{i,j}(Z_l;\hat g)-a_{i,j}(Z_l;\hat g^{(-l)})\right\|_{2}\\
& \le L_2\cdot \left\|\sup_x\left\|\hat g^{(-l)}(x)-\hat g(x)\right\|_2\right\|_{2r}
\end{align}
where $$L_2:=\left\|T_j\right\|_{2v}+\left\|T_i\right\|_{2v}+\frac{1}{2}\left\|\hat p_i(X)\right\|_{2v} +\frac{1}{2}\left\|\hat p_i(X_l)\right\|_{2v}+\frac{1}{2}\left\|\hat p_j(X)\right\|_{2v} +\frac{1}{2}\left\|\hat p_j(X_l)\right\|_{2v}.$$

Hence, the $L^{2r}$-continuity conditions hold for function $a$ provided that all the aforementioned $L^{2v}$-norm quantities are finite.

Now we check the mean-squared-continuity conditions for function $a.$

For any $g,g'$, any $i,j\in[p]$, we have
\begin{align}
&a_{i,j}(Z;g)-a_{i,j}(Z; g')\\
&=\left(p_i'(X)- p_i(X)\right)T_{j}+T_{i}\left( p_j'(X)-p_j(X)\right)\\
&+\frac{1}{2}\left( p_i'(X)- p_i(X)\right)\left( p_j'(X)+ p_j(X)\right)\\
&+\frac{1}{2}\left( p_i'(X)+ p_i(X)\right)\left( p_j'(X)- p_j(X)\right).
\end{align}

Then by triangle inequality and H\"older's inequality, we obtain
\begin{align}
&\left\|a_{i,j}(Z;g)-a_{i,j}(Z; g')\right\|_{2}\\
&\leq\left\|p_i'(X)- p_i(X)\right\|_2\cdot\|T_j\|_\infty+\left\|p_j'(X)- p_j(X)\right\|_2\cdot\|T_i\|_\infty\\
&+\frac{1}{2}\left\|p_i'(X)- p_i(X)\right\|_2\cdot\|p_j'(X)+ p_j(X)\|_\infty+\frac{1}{2}\left\|p_j'(X)- p_j(X)\right\|_2\cdot\|p_i'(X)+ p_i(X)\|_\infty\\
&\le L_3\cdot \|g-g'\|_2
\end{align}
where
$$L_3:=\|T_j\|_\infty+\|T_i\|_\infty+\frac{1}{2}\|p_j'(X)\|_\infty+\frac{1}{2}\|p_j(X)\|_\infty+\frac{1}{2}\|p_i'(X)\|_\infty+\frac{1}{2}\|p_i(X)\|_\infty.$$

Provided all these $L^\infty$-norm quantities are finite, mean-squared-continuity conditions hold for function $a$ with $q=2.$

\underline{For function $\nu$:}\\
We have
\begin{align}
&\nu_{i}(Z;\hat g)-\nu_{i}(Z;\hat g^{(-l)})=\left(\hat q^{(-l)}(X)-\hat q(X)\right)T_i+Y\left(\hat p_i^{(-l)}(X)-\hat p_i(X)\right)\\
&-\left(\hat q^{(-l)}(X)-\hat q(X)\right)\hat p_i(X)-\hat q^{(-l)}(X)\left(\hat p_i^{(-l)}(X)-\hat p_i(X)\right).
\end{align}

Hence, by triangle inequality and H\"older's inequality, we similarly obtain that

\begin{align}
&\left\|\nu_{i}(Z;\hat g)-\nu_{i}(Z;\hat g^{(-l)})\right\|_2\\
&\le\left\|\sup_x\left|\hat q^{(-l)}(x)-\hat q(x)\right|\right\|_{2r}\cdot \left\{ \left\|T_i\right\|_{2v}+\left\|\hat p_i(X)\right\|_{2v} \right\}\\
&+\left\|\sup_x\left|\hat p^{(-l)}_i(x)-\hat p_i(x)\right|\right\|_{2r}\cdot \left\{ \left\|Y\right\|_{2v}+\left\|\hat q^{(-l)}(X)\right\|_{2v} \right\}\\
&\le \left\|\sup_x\left\|\hat g^{(-l)}(x)-\hat g(x)\right\|_2\right\|_{2r}\cdot\left\{\left\|T_i\right\|_{2v}+\left\|\hat p_i(X)\right\|_{2v} + \left\|Y\right\|_{2v}+\left\|\hat q(X)\right\|_{2v} \right\}.
\end{align}

By replacing $Z$ with $Z_l$, we can similarly show that

\begin{align}
&\left\|\nu_{i}(Z_l;\hat g)-\nu_{i}(Z_l;\hat g^{(-l)})\right\|_2\\
&\le \left\|\sup_x\left\|\hat g^{(-l)}(x)-\hat g(x)\right\|_2\right\|_{2r}\cdot\left\{\left\|T_i\right\|_{2v}+\left\|\hat p_i(X_l)\right\|_{2v} + \left\|Y\right\|_{2v}+\left\|\hat q(X)\right\|_{2v} \right\}.
\end{align}

Therefore, the $L^{2r}$-continuity conditions hold for function $\nu$ provided that all these $L^{2v}$-norm quantities are finite.

As for mean-squared-continuity, note that we have that 
\begin{align}
&\nu_{i}(Z; g)-\nu_{i}(Z;g')\\
&=\left(q'(X)- q(X)\right)T_i+Y\left(p_i'(X)- p_i(X)\right)\\
&-\left(q'(X)-q(X)\right) p_i(X)-q'(X)\left(p_i'(X)- p_i(X)\right).
\end{align}

Hence, by triangle inequality and H\"older's inequality, we derive that
\begin{align}
    &\| \nu_{i}(Z; g)-\nu_{i}(Z;g') \|_2\\
    &\le \|q'(X)- q(X)\|_2\cdot \| T_i\|_\infty+\|p_i'(X)- p_i(X)\|_2\cdot \| Y\|_\infty\\
    &+\|q'(X)- q(X)\|_2\cdot \| p_i(X)\|_\infty+\|p_i'(X)- p_i(X)\|_2\cdot \| q'(X)\|_\infty\\
    &\le \|g-g'\|_2\cdot\left\{\| T_i\|_\infty+ \| Y\|_\infty+ \| p_i(X)\|_\infty+\| q'(X)\|_\infty \right\}.
\end{align}

Provided all these $L^\infty$-norm quantities are finite, mean-squared-continuity conditions hold for function $\nu$ with $q=2.$

\subsection{Establishing for Example~\ref{sleep3}}

For this example, we have
$$a(Z;g)=(Z-r(X))(T-p(X))'$$
$$\nu(Z;g)=(Y-q(X))(Z-r(X))'.$$

Denote $Z_{i},i\in[p]$ as the $i$-th coordinate of $Z$. Denote $ r_{i}(X),i\in[p]$ as the $i$-th coordinate of $ r(X)$. Analogously to Example~\ref{sleep2}, replacing all functions $p_i$ and their estimates with $r_i$ and their corresponding estimates, replacing all $T_i$ with $Z_i$ in the analysis of $a_{i,j}$ and $\nu_i$, we can show that subject to
$$\left\|T_i\right\|_{2v},\left\|Z_i\right\|_{2v}, \left\|\hat p_i(X)\right\|_{2v}, \left\|\hat p_i(X_l)\right\|_{2v},\left\|\hat r_i(X)\right\|_{2v}, \left\|\hat r_i(X_l)\right\|_{2v},\left\|Y\right\|_{2v},\left\|\hat q(X)\right\|_{2v}$$
being finite for all $i\in[p]$
then $L^{2r}$-continuity conditions hold. Moreover,
if further
$$\left\|T_i\right\|_\infty,\left\|Z_i\right\|_\infty, \left\| p_i(X)\right\|_\infty, \left\| p_i'(X)\right\|_\infty,\left\| r_i(X)\right\|_\infty, \left\| r_i'(X)\right\|_\infty, \left\|Y\right\|_\infty, \left\| q'(X)\right\|_\infty$$
are finite for all $i\in[p]$ then mean-squared-continuity conditions hold with $q=2$.

\subsection{Establishing for Example~\ref{sleep1}}
For this example, we have
$$a(Z;g)\equiv 1$$
$$\nu(Z;g)=-m_b(Z;q)-\mu(T,X)(Y-q(T,X)).$$

The $L^{2r}$-continuity and mean-squared-continuity conditions trivially hold for function $a$. For function $\nu$, we have

\begin{align}
&\nu(Z;\hat g)-\nu(Z;\hat g^{(-l)})=(m_b(Z;\hat q^{(-l)})-m_b(Z;\hat q))\\
&+\hat \mu^{(-l)}(T,X)(\hat q(T,X)-\hat q^{(-l)}(T,X))-(Y-\hat q(T,X))(\hat \mu(T,X)-\hat \mu^{(-l)}(T,X))
.
\end{align}

Hence, by triangle inequality and H\"older's inequality we obtain
\begin{align}
    &\|\nu(Z;\hat g)-\nu(Z;\hat g^{(-l)})\|_2\\
    &\leq \| m_b(Z;\hat q^{(-l)})-m_b(Z;\hat q)\|_2\\
    &+\|\hat \mu^{(-l)}(T,X)(\hat q(T,X)-\hat q^{(-l)}(T,X))\|_2+\| (Y-\hat q(T,X))(\hat \mu(T,X)-\hat \mu^{(-l)}(T,X))\|_2\\
    &\le \| m_b(Z;\hat q^{(-l)})-m_b(Z;\hat q)\|_2\\
    &+\|\hat \mu^{(-l)}(T,X)\|_{2v}\cdot\|\hat q(T,X)-\hat q^{(-l)}(T,X)\|_{2r}\\
    &+\| Y-\hat q(T,X)\|_{2v}\cdot\|\hat \mu(T,X)-\hat \mu^{(-l)}(T,X)\|_{2r}\\
    &\le \| m_b(Z;\hat q^{(-l)})-m_b(Z;\hat q)\|_2\\
    &+\|\hat \mu(T,X)\|_{2v}\cdot\|\sup_{t,x}|\hat q(t,x)-\hat q^{(-l)}(t,x)|\|_{2r}
    \\
    &+(\| Y\|_{2v}+\|\hat q(T,X)\|_{2v})\|\sup_{t,x}|\hat \mu(t,x)-\hat \mu^{(-l)}(t,x)|\|_{2r}.
\end{align}

Since $m_b$ is a linear functional, there exists $L_m>0$ such that
$$\| m_b(Z;\hat q^{(-l)})-m_b(Z;\hat q)\|_2\leq L_m\cdot\|\sup_{t,x}|\hat q(t,x)-\hat q^{(-l)}(t,x)|\|_{2r}.$$

Hence, we have
\begin{align}
    &\|\nu(Z;\hat g)-\nu(Z;\hat g^{(-l)})\|_2\\
    &\leq \|\sup_{t,x}|\hat g(t,x)-\hat g^{(-l)}(t,x)|\|_{2r}\cdot \left\{L_m+\|\hat \mu(T,X)\|_{2v}+\| Y\|_{2v}+\|\hat q(T,X)\|_{2v}\right\}.
\end{align}

Analogously, by replacing $Z$ with $Z_l$, we can show that
\begin{align}
    &\|\nu(Z_l;\hat g)-\nu(Z_l;\hat g^{(-l)})\|_2\\
    &\leq \|\sup_{t,x}|\hat g(t,x)-\hat g^{(-l)}(t,x)|\|_{2r}\cdot \left\{\tilde{L}_m+\|\hat \mu(T,X)\|_{2v}+\| Y\|_{2v}+\|\hat q(T_l,X_l)\|_{2v}\right\}
\end{align}
for some constant $\tilde{L}_m>0$.

Therefore, subject to
$$L_m,\tilde{L}_m,\|\hat \mu(T,X)\|_{2v},\| Y\|_{2v},\|\hat q(T,X)\|_{2v}, \|\hat q(T_l,X_l)\|_{2v}$$
being finite for all $i\in[p]$
then $L^{2r}$-continuity conditions also hold for function $\nu$.

Moreover, we have
\begin{align}
    &\|\nu(Z;g)-\nu(Z; g')\|_2\\
    &\leq \| m_b(Z;q')-m_b(Z; q)\|_2\\
    &+\|\mu'(T,X)(q(T,X)- q'(T,X))\|_2+\| (Y- q(T,X))( \mu(T,X)- \mu'(T,X))\|_2\\
    &\le \|g-g'\|_2\cdot\left\{L_b+\| \mu'(T,X)\|_\infty+ \| Y\|_\infty+\| q(T,X)\|_\infty \right\}
\end{align}
where since $m_b$ is a linear functional, there exists $L_b>0$ such that
$$\| m_b(Z;q')-m_b(Z;q)\|_2\leq L_b\cdot\|q(t,x)-q'(t,x)\|_2.$$

Provided that $$L_b,\| \mu'(T,X)\|_\infty, \| Y\|_\infty,\| q(T,X)\|_\infty $$
are finite, mean-squared-continuity conditions hold for function $\nu$ with $q=2.$

\section{Extension to Nonlinear Moments}\label{sec: nonlinear}

In this section, we extend our results to the case where the moment function $m(Z;\theta,g)$ is not necessarily linear in the target parameter $\theta.$ For simplicity, we assume that the nuisance estimator $\hat g$ is symmetric in each of the training data points $Z_1,\ldots,Z_n.$ Moreover, we will denote with $Z$ a fresh random draw from the distribution.

We introduce some notation.
We denote with $\|\cdot\|_{2,2}$ the norm of a random vector defined as: $\|Z\|_{2,2}=\sqrt{\E\left[\sum_{i} Z_i^2\right]}$, which can also be thought as taking the $L_2$ norm of each coordinate and and then taking the $\ell_2$ vector norm of this vector, or equivalently taking the $L_2$ norm of the random variable defined as the $\ell_2$ norm of the random vector. For clarity, for any random vector $Z$ we will denote with $\|Z\|_{v,2}$ the random variable that corresponds to the $\ell_2$ vector norm of the random vector, i.e. $\|Z\|_{v,2}=\sqrt{\sum_i Z_i^2}$. For any random variable $V$, denote with $\|V\|_{2}$ its $\ell_2$ norm $\sqrt{\E[V^2]}$. Note that for any random vector $Z$, we have $\|Z\|_{2,2} = \left\| \|Z\|_{v,2} \right\|_2$.

Firstly, we establish a consistency lemma for $\hat\theta$. We note that in the linear moment case, such a separate proof of consistency was not required and a single step proof of asymptotic normality was feasible due to linearity. In the non-linear case, as is typical for moment based estimators, we first need to show that the estimate will eventually lie in a small ball around $\theta_0$, and then argue normality. This is what the consistency lemma achieves.

\begin{lemma}[Consistency]\label{lem:nonlinear}
Assume that 
\begin{enumerate}
    \item The parameter space $\Theta\subset\mathbb{R}^p$ for target parameter $\theta$ is compact.
    \item $\theta_0$ is the unique solution of $\theta$ to the equation $M(\theta, g_0)=0.$
    \item The moment function $m(z;\theta, g)$ is uniformly continuous in $\theta$ over all $\Theta$ and a suficciently small $\ell_2$ ball $B_2(g_0)\subseteq \mathcal{G}$ around $g_0$. That is, $\forall \epsilon>0$, $ \exists\delta>0$ such that for any $\tilde{\theta}_1,\tilde{\theta}_2\in\Theta$ with $\|\tilde{\theta}_1-\tilde{\theta}_2\|_2<\delta$, $\forall g\in B_2(g_0)$, $\forall z$, we have
    $$\|m(z;\tilde{\theta}_1,g)-m(z;\tilde{\theta}_2,g)\|_2<\epsilon.$$
    \item The moment function $m(Z;\theta, g)$ is mean-squared-continuous in $g$, uniformly in $\theta$, i.e. $\exists L>0$ and $q>0$ such that:  
    $$\max_{\theta\in\Theta} \| m(Z;\theta, g_1)-m(Z;\theta, g_2)\|_{2,2}\le L\cdot \|g_1(Z)-g_2(Z)\|_{2,2}^{q}.$$
    \item Estimator $\hat g$ of the nuisance function is consistent: as $n\to\infty$
     $$\left\|\hat g(Z)-g_0(Z)\right\|_{2,2}
    =o(1).$$
    \item The moment function $m$ and estimator $\hat{g}$ satisfies the following $o(1)$ leave-one-out stability condition:
    $$\max_{\theta\in\Theta} \|m(Z_1;\theta,\hat g)-m(Z_1;\theta, \hat g^{(-1)})\|_{2,2}=o(1)$$
    as $n\to\infty.$
\end{enumerate}
Then any estimator $\hat{\theta}$ that satisfies that $M_n(\hat{\theta}, \hat{g})=o_p(1)$, also satisfies that $\hat{\theta} \overset{p}{\to} \theta_0$.
\end{lemma}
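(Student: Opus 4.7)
The plan is to run the classical consistency argument for M-estimators, with the leave-one-out stability replacing the usual sample-splitting or Donsker hypothesis. Concretely, the target is to show the uniform convergence $\sup_{\theta\in\Theta}\|M_n(\theta,\hat g)-M(\theta,g_0)\|_{v,2}=o_p(1)$, and then combine this with $M_n(\hat\theta,\hat g)=o_p(1)$, continuity of $\theta\mapsto M(\theta,g_0)$, compactness of $\Theta$, and uniqueness of the zero $\theta_0$ to conclude via a standard ``well-separated minimum'' argument: since $\theta_0$ is the unique root and $M(\cdot,g_0)$ is continuous on a compact set, $\inf_{\|\theta-\theta_0\|_2\geq \epsilon}\|M(\theta,g_0)\|_{v,2}>0$, which forces $\|\hat\theta-\theta_0\|_2<\epsilon$ with probability tending to one.

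For pointwise convergence at a fixed $\theta$, I would decompose
\begin{align*}
M_n(\theta,\hat g)-M(\theta,g_0) = \bigl[M_n(\theta,\hat g)-M_n(\theta,g_0)\bigr] + \bigl[M_n(\theta,g_0)-M(\theta,g_0)\bigr].
\end{align*}
The second bracket is $o_p(1)$ by the ordinary weak law of large numbers for i.i.d. vectors. For the first bracket, by symmetry of $\hat g$ in the training data and a triangle/Markov argument it suffices to bound $\|m(Z_1;\theta,\hat g)-m(Z_1;\theta,g_0)\|_{2,2}$. I split this as
\begin{align*}
\|m(Z_1;\theta,\hat g)-m(Z_1;\theta,\hat g^{(-1)})\|_{2,2}+\|m(Z_1;\theta,\hat g^{(-1)})-m(Z_1;\theta,g_0)\|_{2,2}.
\end{align*}
The first summand is $o(1)$ by the $o(1)$ leave-one-out stability hypothesis (assumption 6). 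For the second summand I condition on $\hat g^{(-1)}$; since $\hat g^{(-1)}$ is independent of $Z_1$, the inner expectation is exactly $\|m(Z;\theta,\hat g^{(-1)})-m(Z;\theta,g_0)\|_{2,2}^2$ at the realized $\hat g^{(-1)}$, which by mean-squared-continuity (assumption 4) is bounded by $L^2\|\hat g^{(-1)}-g_0\|_{2,2}^{2q}$; taking outer expectation and using symmetry plus the consistency hypothesis (assumption 5) gives $o(1)$.

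To upgrade pointwise convergence to uniform convergence on $\Theta$, I would use the uniform continuity of $m(z;\theta,g)$ in $\theta$ together with compactness of $\Theta$. Given $\epsilon>0$, pick $\delta$ from assumption 3, and cover $\Theta$ by finitely many $\delta$-balls centered at $\theta^{(1)},\ldots,\theta^{(K)}$. On the event $\{\hat g\in B_2(g_0)\}$, which has probability tending to one by assumption 5, uniform continuity yields $\sup_{\theta\in B_\delta(\theta^{(k)})}\|M_n(\theta,\hat g)-M_n(\theta^{(k)},\hat g)\|_{v,2}\le \epsilon$ and similarly $\sup_{\theta\in B_\delta(\theta^{(k)})}\|M(\theta,g_0)-M(\theta^{(k)},g_0)\|_{v,2}\le \epsilon$, so the pointwise convergence at the $K$ centers plus a union bound gives $\sup_\theta \|M_n(\theta,\hat g)-M(\theta,g_0)\|_{v,2}\le 3\epsilon$ with probability approaching one.

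The main obstacle is the first decomposition step: controlling $M_n(\theta,\hat g)-M_n(\theta,g_0)$ without sample splitting. Here the stability hypothesis is what breaks the circularity — it lets us swap $\hat g$ for the leave-one-out version $\hat g^{(-i)}$, whose independence from $Z_i$ then enables the mean-squared-continuity bound via conditioning and Jensen. Everything else (the ULLN-style covering argument, and the well-separated-zero step converting $M(\hat\theta,g_0)=o_p(1)$ into $\hat\theta\overset{p}{\to}\theta_0$) is routine given compactness, continuity, and unique identification.
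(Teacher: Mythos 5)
Your proposal is correct and follows essentially the same route as the paper's proof: the decomposition of $M_n(\theta,\hat g)-M(\theta,g_0)$ into an LLN term and a term handled by symmetry of $\hat g$, the leave-one-out swap that lets independence of $\hat g^{(-1)}$ from $Z_1$ activate mean-squared-continuity and nuisance consistency, the finite $\delta$-cover of $\Theta$ via uniform continuity, and the well-separated-zero step from compactness and unique identification. No substantive differences to flag.
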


\begin{proof}

Fix any $\epsilon>0$. Since $m(z;\theta, g)$ is uniformly continuous in $\theta\in\Theta$ for a sufficiently small $\ell_2$-ball $B_2(g_0)$ around $g_0$, we have that, $\exists\delta>0$ such that for any $\tilde{\theta}_1,\tilde{\theta}_2\in\Theta$ with $\|\tilde{\theta}_1-\tilde{\theta}_2\|_2<\delta$, $\forall g\in B_2(g_0)$, $\forall z$, we have
$$\|m(z;\tilde{\theta}_1,g)-m(z;\tilde{\theta}_2,g)\|_2<\epsilon/6.$$
Then
$$\|M(\tilde{\theta}_1,g)-M(\tilde{\theta}_2,g)\|_2\le \mathbb{E}\left[ \|m(Z;\tilde{\theta}_1,g)-m(Z;\tilde{\theta}_2,g)\|_{v,2}\right] \le \epsilon/6.$$
(that is, $M(\cdot,g_0)$ is uniformly continuous) and
\begin{align}
    \|M_n(\tilde{\theta}_1,g)-M_n(\tilde{\theta}_2,g)\|_{v,2}\le \frac{1}{n}\sum_{i=1}^n\left\|m(Z_i;\tilde{\theta}_1,g)-m(Z_i;\tilde{\theta}_2,g)\right\|_{v,2}\le \epsilon/6.
\end{align}

\noindent Since the parameter space $\Theta$ is compact, there exist $\theta_j, j=1,\ldots,J$ such that
$$\Theta\subset\cup_{j=1}^J\mathcal{B}(\theta_j,\delta).$$

By Law of Large Numbers, we have $\forall j,$ as $n\to\infty$
$$M_n(\theta_j,g_0)-M(\theta_j,g_0)\overset{p}{\to}0.$$
Hence, $\forall \eta>0$, for every $j$ there exists $n_j$ such that $\forall n>n_j$
$$\mathbb{P}\left(\|M_n(\theta_j,g_0)-M(\theta_j,g_0)\|_{v,2}>\frac{\epsilon}{3}\right)<\frac{\eta}{3J}.$$

Then $\forall n>\max_j n_j$, we have
$$\mathbb{P}\left(\max_j\|M_n(\theta_j,g_0)-M(\theta_j,g_0)\|_{v,2}>\frac{\epsilon}{3}\right)<\frac{\eta}{3}.$$

Moreover, for any $j\in J$, we have that:
\begin{align}
    &~\left\|M_n(\theta_j, \hat g)-M_n(\theta_j, g_0)\right\|_{2,2}\\
    \le~& \max_\theta\left\|M_n(\theta, \hat g)-M_n(\theta, g_0)\right\|_{2,2}\\
    =~&  \max_{\theta}\left\|\frac{1}{n}\sum_{i=1}^n \left\{m(Z_i;\theta, \hat g)-m(Z_i;\theta, g_0)\right\}\right\|_{2,2}\\
    \le~& \frac{1}{n}\sum_{i=1}^n\max_\theta \left\|m(Z_i;\theta,\hat g)-m(Z_i;\theta, g_0)\right\|_{2,2} \tag{triangle inequality}\\
    =~&  \max_\theta \left\|m(Z_1;\theta,\hat g)-m(Z_1;\theta, g_0)\right\|_{2,2} \tag{symmetry of estimator}\\
    \leq~& \max_\theta \left\| m(Z_1;\theta,\hat g)-m(Z_1;\theta, \hat{g}^{(-1)})\right\|_{2,2} + \max_\theta \left\|m(Z_1;\theta,\hat g^{(-1)})-m(Z_1;\theta, g_0)\right\|_{2,2}\\
    \leq~&  \max_\theta \left\|m(Z_1;\theta,\hat g)-m(Z_1;\theta, \hat{g}^{(-1)})\right\|_{2,2} + \max_\theta \left\| m(Z;\theta,\hat g^{(-1)})-m(Z;\theta, g_0)\right\|_{2,2}\\
    \leq~&  \max_\theta \left\|m(Z_1;\theta,\hat g)-m(Z_1;\theta, \hat{g}^{(-1)})\right\|_{2,2} + L\cdot\left\|\hat g^{(-1)}(Z)- g_0(Z)\right\|_{2,2}^q\\
    =~& o(1).
\end{align}
Thus we have that $M_n(\theta_j, \hat g)-M_n(\theta_j, g_0)=o_p(1)$. Which means that $\forall \eta>0$, there exists $n_j$ such that for every $n>n_j$:
\begin{align}
    \mathbb{P}\left(\left\|M_n(\theta_j, \hat g)-M_n(\theta_j, g_0)\right\|_{v,2} > \frac{\epsilon}{3}\right) < \frac{\eta}{3J}.
\end{align}
Then $\forall n>\max_{j\in J}n_j$:
\begin{align}
    \mathbb{P}\left(\max_{j\in [J]} \left\|M_n(\theta_j, \hat g)-M_n(\theta_j, g_0)\right\|_{v,2} > \frac{\epsilon}{3}\right) < \frac{\eta}{3}.
\end{align}

Now $\forall\theta\in\Theta,$ since $\Theta\subset\cup_{j=1}^J\mathcal{B}(\theta_j,\delta)$, there exists $k\in\{1,\ldots,J\}$ such that
$\|\theta-\theta_k\|_2<\delta$. Then for $n$ sufficiently large, such that $\hat{g}\in B_2(g_0)$:
\begin{align}
    &\|M_n(\theta,\hat{g})-M(\theta,g_0)\|_{v,2}\\
    \le~& \|M_n(\theta_k, \hat{g})-M(\theta_k,g_0)\|_{v,2}+\|M_n(\theta, \hat{g})-M_n(\theta_k,\hat{g})\|_{v,2}+\|M(\theta,g_0)-M(\theta_k,g_0)\|_{v,2}\\
    \le~& \max_j\|M_n(\theta_j,\hat{g})-M(\theta_j,g_0)\|_{v,2}  + 2\epsilon/6.
\end{align}
Hence, we obtain
\begin{align}
    \mathbb{P}\left(\max_\theta\|M_n(\theta, \hat{g})-M(\theta,g_0)\|_{v,2}>\epsilon\right)
    \le~& \mathbb{P}\left(\max_j\|M_n(\theta_j,\hat{g})-M(\theta_j,g_0)\|_{v,2}>\frac{2\epsilon}{3}\right).
\end{align}
Moreover, note that by the triangle inequality:
\begin{align}
    &\max_j\|M_n(\theta_j,\hat{g})-M(\theta_j,g_0)\|_{v,2} \\
    \leq~& \max_j\|M_n(\theta_j,\hat{g})-M_n(\theta_j,g_0)\|_{v,2} + \max_j\|M_n(\theta_j, g_0)-M(\theta_j,g_0)\|_{v,2}.
\end{align}
Thus:
\begin{align}
    &\mathbb{P}\left(\max_j\|M_n(\theta_j,\hat{g})-M(\theta_j,g_0)\|_{v,2}>\frac{2\epsilon}{3}\right)\\
    \leq~& \mathbb{P}\left(\max_j\|M_n(\theta_j,\hat{g})-M_n(\theta_j,g_0)\|_{v,2}>\frac{\epsilon}{3}\right) + \mathbb{P}\left(\max_j\|M_n(\theta_j, g_0)-M(\theta_j,g_0)\|_{v,2}>\frac{\epsilon}{3}\right)\\
    \leq~& \frac{2\eta}{3} \leq \eta.
\end{align}
And we conclude that:
\begin{align}
    \mathbb{P}\left(\max_\theta\|M_n(\theta, \hat{g})-M(\theta,g_0)\|_{v,2}>\epsilon\right)
    \le~& \eta.
\end{align}
\noindent This shows that
$$\max_{\theta\in \Theta}\|M_n(\theta,\hat{g})-M(\theta,g_0)\|_{v,2}\overset{p}{\to}0$$
as $n\to\infty.$

In particular, this implies that
$$\|M_n(\hat\theta, \hat{g})-M(\hat\theta,g_0)\|_{v,2}\overset{p}{\to}0$$
as $n\to\infty.$

Hence, by triangle inequality and the fact that $M_n(\hat\theta,\hat g)=o_p(1)$, we obtain
$$\|M(\hat{\theta},g_0)\|_{v,2} \leq \|M_n(\hat\theta, \hat g)\|_{v,2} + \|M_n(\hat\theta, \hat g) - M(\hat{\theta}, g_0)\|_{v,2} =o_p(1).$$

Hence, $M(\hat\theta, g_0) = o_p(1)$.

It remains to show that $$\hat\theta\overset{p}{\to}\theta_0$$
as $n\to\infty.$

To achieve this, again fix any $\epsilon>0.$
Then since $\Theta$ is compact, $B(\theta_0,\epsilon)^c$ is also compact as a closed subset of $\Theta$. By continuity of $\theta\mapsto \|M(\theta,g_0)\|_2$ and the fact that $\theta_0$ is the unique solution to $M(\theta_0,g_0)=0$, we must have that $\|M(\theta,g_0)\|_{v,2}$ is bounded away from zero on $B(\theta_0,\epsilon)^c$. That is, $\exists \eta>0$ such that for any $\theta$ with $\|\theta-\theta_0\|_{v,2}\geq \epsilon$,
$$\|M(\theta,g_0)\|_{v,2}>\eta.$$

Then since $M(\hat{\theta},g_0)=o_p(1)$, there exists $N\in\mathbb{N}$ such that $\forall n>N$ $$\mathbb{P}\left(\|M(\hat\theta,g_0)\|_{v,2}>\eta\right)<\epsilon.$$

Then $\forall n>N$
$$\mathbb{P}\left(\|\hat\theta-\theta_0\|_{v,2}\geq \epsilon\right)\le \mathbb{P}\left(\|M(\hat\theta,g_0)\|_{v,2}>\eta\right)<\epsilon.$$
This establishes consistency of $\hat\theta.$
\end{proof}

Now we extend Theorem~1 to nonlinear moments.

\begin{theorem}
Let $A(\theta,g):=\partial_{\theta} M(\theta, g)$ denote the Jacobian of the moment vector, with respect to $\theta$ and $H_i(\theta,g):=\partial_{\theta}^2 M_i(\theta,g)$ denote the Hessian of the $i$-th moment coordinate. Suppose that the moment $m$ is twice differentiable with $a(z; \theta,g):= \partial_{\theta} m(z;\theta, g)$ and $h_i(z;\theta, g):=\partial_\theta^2 m_i(z;\theta ,g)$. Let $A_n(\theta, g):=\frac{1}{n}\sum_{i=1}^n a(Z_i; \theta, g)$ and $H_{i,n}:=\frac{1}{n}\sum_{i=1}^n h(Z_i; \theta, g)$ denote the empirical counterparts of $A, H_i$.

Suppose that the nuisance estimate $\hat{g}\in \mcG$ satisfies:
\begin{equation}\label{cons}
\|\hat{g} - g_0\|^2_2~\defeq~ \E_X\left[\|\hat{g}(X) - g_0(X)\|_2^2\right] =o_p\left(n^{-1/2}\right). \tag{Consistency Rate}
\end{equation}
Suppose that the moment satisfies the Neyman orthogonality condition: for all $g\in \mcG$
\begin{align}
    D_g M(\theta_0, g_0)[g-g_0] ~\defeq~ \frac{\partial}{\partial t} M(\theta_0, g_0 + t\, (g-g_0))\big|_{t=0} ~=~& 0 \tag{Neyman Orthogonality}
\end{align}
and a second-order smoothness condition: for all $g\in \mcG$
\begin{align}
    D_{gg} M(\theta_0, g_0)[g-g_0] ~\defeq~ \frac{\partial^2}{\partial t^2} M(\theta_0, g_0 + t\, (g-g_0))\big|_{t=0} ~=~& O\left(\|g-g_0\|_2^2\right) \tag{Smoothness}
\end{align}
Assume that $A(\theta_0, g_0)^{-1}$ exists and that for any $g,g'\in \mcG$:
\begin{equation}
\|A(\theta_0, g) - A(\theta_0, g')\|_{op}=O\left(\|g-g'\|_2\right).    \tag{Lipschitz}
\end{equation}
Suppose that the moment $m$ and estimator $\hat{g}$ satisfy the stochastic equicontinuity conditions:
\begin{align}\label{eqn:equicont-nonlin}
\begin{aligned}
    \left\|A(\theta_0,\hat g)-A(\theta_0,g_0) -\left(A_n(\theta_0,\hat g)-A_n(\theta_0,g_0)\right)\right\|_{op}=o_p(1)\\
    \sqrt{n} \left\|M(\theta_0, \hat{g}) - M(\theta_0, g_0) - (M_n(\theta_0, \hat{g}) - M_n(\theta_0, g_0))\right\|_{2,2} =~& o_p(1)
\end{aligned} \tag{NonLin. Stoc. Equi.}
\end{align}
Suppose that the conditions in Lemma~\ref{lem:nonlinear} are true. Moreover, assume that for any $i,j\in [p]\times[p]$, the random variable $a_{i,j}(Z;\theta_0, g_0)$ has bounded variance and that $\|\theta_0\|_2=O(1)$. Suppose that $\exists$ open neighborhood $W$ of $g_0$ such that
\begin{align}
\sup_{\theta\in \Theta,g\in W, i\in [p]}\left\|H_i(\theta,g)\right\|_{op}, \left\|H_{i,n}(\theta,g)\right\|_{op}<\infty \tag{Bounded Hessian}
\end{align}
Let $\hat{\theta}$ denote any approximate solution to the plug-in empirical moment equation that satisfies $M_n(\hat{\theta}, \hat{g})=o_p(n^{-1/2})$. Then $\hat{\theta}$ is asymptotically normal:
\begin{equation}
    \sqrt{n} \left(\hat{\theta} - \theta_0\right) \xrightarrow{n\rightarrow \infty, d} N\left(0, A(\theta_0, g_0)^{-1} \E\left[m(Z; \theta_0, g_0)\, m(Z; \theta_0, g_0)^\top\right] A(\theta_0, g_0)^{-1}\right).
\end{equation}
\end{theorem}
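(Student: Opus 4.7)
The plan is to reduce the nonlinear case to the linear one via a second-order Taylor expansion of $M_n(\cdot,\hat g)$ around $\theta_0$, and then follow the same sequence of moves as in the proof of Theorem~\ref{thm:main}. First I would invoke Lemma~\ref{lem:nonlinear} to conclude $\hat\theta\xrightarrow{p}\theta_0$. Consistency is crucial because it guarantees that any point on the segment between $\theta_0$ and $\hat\theta$ eventually lies in a neighborhood of $\theta_0$, so that the bounded-Hessian hypothesis can be applied to the Taylor remainder, and simultaneously $\hat g \in W$ with probability tending to one.

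For each coordinate $i\in[p]$ a second-order Taylor expansion then gives some $\tilde\theta_i$ on the segment $[\theta_0,\hat\theta]$ with
\[
M_{n,i}(\hat\theta,\hat g) = M_{n,i}(\theta_0,\hat g) + [A_n(\theta_0,\hat g)(\hat\theta-\theta_0)]_i + \tfrac12(\hat\theta-\theta_0)^\top H_{i,n}(\tilde\theta_i,\hat g)(\hat\theta-\theta_0).
\]
Using $\|M_n(\hat\theta,\hat g)\|_2=o_p(n^{-1/2})$ together with the bounded-Hessian bound on $H_{i,n}$, rearranging yields
\[
A_n(\theta_0,\hat g)(\hat\theta-\theta_0) = -M_n(\theta_0,\hat g) + o_p(n^{-1/2}) + O_p(\|\hat\theta-\theta_0\|_2^2).
\]

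I would then control the Jacobian and the moment in parallel to the proof of Theorem~\ref{thm:main}. Writing $A_n(\theta_0,\hat g)-A(\theta_0,g_0)$ as $\{A_n(\theta_0,\hat g)-A_n(\theta_0,g_0) - (A(\theta_0,\hat g)-A(\theta_0,g_0))\} + \{A(\theta_0,\hat g)-A(\theta_0,g_0)\} + \{A_n(\theta_0,g_0)-A(\theta_0,g_0)\}$, the three pieces vanish in probability by the stochastic equicontinuity assumption on the Jacobian, Lipschitz continuity of $A(\theta_0,\cdot)$ with $\|\hat g-g_0\|_2=o_p(1)$, and a bounded-variance law of large numbers, respectively. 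An analogous decomposition for $M_n(\theta_0,\hat g)$, combined with the nonlinear stochastic equicontinuity condition and the standard second-order bound $M(\theta_0,\hat g)-M(\theta_0,g_0)=O(\|\hat g-g_0\|_2^2)=o_p(n^{-1/2})$ (Neyman orthogonality plus smoothness), gives $M_n(\theta_0,\hat g)=M_n(\theta_0,g_0)+o_p(n^{-1/2})$.

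Plugging these back in leaves $(A(\theta_0,g_0)+o_p(1))(\hat\theta-\theta_0) = -M_n(\theta_0,g_0)+o_p(n^{-1/2})+O_p(\|\hat\theta-\theta_0\|_2^2)$. Since $\|M_n(\theta_0,g_0)\|_2=O_p(n^{-1/2})$ by the multivariate CLT and $A(\theta_0,g_0)^{-1}$ exists, a standard bootstrapping argument first extracts $\|\hat\theta-\theta_0\|_2=O_p(n^{-1/2})$, which promotes the Hessian remainder to $O_p(n^{-1})=o_p(n^{-1/2})$. A final Slutsky step then delivers $\sqrt{n}(\hat\theta-\theta_0) = -A(\theta_0,g_0)^{-1}\sqrt{n}\,M_n(\theta_0,g_0)+o_p(1)$, and hence the stated asymptotic normality. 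The main obstacle I expect is precisely this bootstrapped absorption of the Taylor remainder: the Hessian term cannot be handled directly at rate $o_p(n^{-1/2})$ without first deriving the root-$n$ rate from the linearized equation, which in turn forces one to first establish consistency via Lemma~\ref{lem:nonlinear} in order to know that $\tilde\theta_i$ and $\hat g$ jointly sit in the region where the bounded-Hessian condition is available.
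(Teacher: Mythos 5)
Your proposal is correct and follows essentially the same route as the paper: consistency via Lemma~\ref{lem:nonlinear}, a Taylor expansion whose quadratic remainder is absorbed using the bounded-Hessian condition, replacement of the Jacobian and the moment at $(\theta_0,\hat g)$ by their values at $(\theta_0,g_0)$ via the stochastic equicontinuity, Lipschitz, law-of-large-numbers, and Neyman-orthogonality-plus-smoothness arguments, and finally the bootstrapped $O_p(n^{-1/2})$ rate followed by the CLT and Slutsky. The only cosmetic difference is that you Taylor-expand the empirical moment $M_n(\cdot,\hat g)$ directly (using the bound on $H_{i,n}$), whereas the paper expands the population moment $M(\cdot,\hat g)$ and then handles the centered empirical process $G_n(\theta,g)=M(\theta,g)-M_n(\theta,g)$ separately; both organizations use exactly the stated hypotheses and yield the same conclusion.
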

\begin{proof}
Let $A_i(\theta, g)$ denote the $i$-th column of the Jacobian matrix. By Taylor's expansion, we have $\forall g$ and $i\in [p]$
\begin{align}
    M_i(\hat\theta,g)-M_i(\theta_0,g)
     =~& A_i(\theta_0, g)'(\hat\theta-\theta_0)
    +(\hat\theta-\theta_0)'H_i(\tilde{\theta}_i,g)(\hat\theta-\theta_0)
\end{align}
for some $\tilde{\theta}_i$ between $\hat\theta$ and $\theta_0$.\\

By Lemma~\ref{lem:nonlinear}, we know that $$\|\hat\theta-\theta_0\|_2=o_p(1).$$

Further, by the bounded Hessian condition and consistency of $\hat\theta$, we know that uniformly for $g\in W$
$$(\hat\theta-\theta_0)'H_i(\tilde{\theta}_i,g)(\hat\theta-\theta_0)=O_p\left(\|\hat\theta-\theta_0\|_2^2\right)=o_p(\|\hat\theta-\theta_0\|_2).$$

Moreover, for any $g$ with $\|g-g_0\|_2=o_p(1)$ we have
\begin{align}
    A(\theta_0,g)\cdot (\hat\theta-\theta_0)=~& A(\theta_0,g_0)\cdot (\hat\theta-\theta_0)
    +\left(A(\theta_0,g) - A(\theta_0,g_0) \right)\cdot (\hat\theta-\theta_0)\\
    =~& A(\theta_0,g_0)\cdot (\hat\theta-\theta_0)
    +O(\|g-g_0\|_2\ \|\hat\theta-\theta_0\|_2)\\
    =~& A(\theta_0,g_0)\cdot (\hat\theta-\theta_0)
    +o_p(\|\hat\theta-\theta_0\|_2),
\end{align}
where the second to last equality uses the Lipschitz condition.

By consistency, we have $\|\hat g-g_0\|_2=o_p(1)$ and that $\hat g\in W$ for $n$ large enough.

Hence, we obtain that for large enough $n$,
\begin{align}
    A(\theta_0,g_0)\cdot (\hat\theta-\theta_0)&= M(\hat\theta,\hat g)-M(\theta_0,\hat g) +o_p(\|\hat\theta-\theta_0\|_2)\\
    &= M(\hat\theta,\hat g)-M_n(\hat\theta,\hat g)+M(\theta_0,g_0)-M(\theta_0,\hat g) +M_n(\hat\theta,\hat g)+o_p(\|\hat\theta-\theta_0\|_2)\\
    &= M(\hat\theta,\hat g)-M_n(\hat\theta,\hat g)+M(\theta_0,g_0)-M(\theta_0,\hat g) +o_p(n^{-1/2}+\|\hat\theta-\theta_0\|_2),
\end{align}
where the last line follows because by definition $M_n(\hat\theta,\hat g)=o_p(n^{-1/2}).$

By Neyman orthogonality and the boundness condition on the second derivative of $M$ with respect to $g$, for any $g$ we have
\begin{align}
    M(\theta_0, g_0) - M(\theta_0, g) = D_g\, M(\theta_0, g_0) [g_0 - g] + O\left(\|g-g_0\|_2^2\right) = O\left(\|g-g_0\|_2^2\right).
\end{align}

Plugging in $g=\hat g$, noting that $\|g-g_0\|_2^2=o_p(n^{-1/2})$ we obtain
$$M(\theta_0, g_0) - M(\theta_0, \hat g)=o_p(n^{-1/2}).$$

Let $G_n(\theta,g):=M(\theta,g)-M_n(\theta,g).$ Thus we have that
$$A(\theta_0,g_0) \cdot (\hat\theta-\theta_0)=G_n(\hat\theta,\hat g)+o_p(n^{-1/2}+\|\hat\theta-\theta_0\|_2).$$

We decompose $G_n(\hat\theta,\hat g)$ into the following sum:
\begin{align}
    G_n(\hat\theta,\hat g)=G_n(\hat\theta,\hat g)-G_n(\theta_0,\hat g)+\left(G_n(\theta_0,\hat g)-G_n(\theta_0,g_0)\right) + G_n(\theta_0,g_0).
\end{align}

By Taylor's expansion, we have
\begin{align}
    G_n(\hat\theta,\hat g)-G_n(\theta_0,\hat g)&=M(\hat\theta,\hat g)-M(\theta_0,\hat g)-\left(M_n(\hat\theta,\hat g)-M_n(\theta_0,\hat g)\right)\\
    &=\left(A(\theta_0,\hat g)-A_n(\theta_0,\hat g)\right)'(\hat\theta-\theta_0)+o_p(\|\hat\theta-\theta_0\|_2)
\end{align}

Now
\begin{align}
    &\|A(\theta_0,\hat g)-A_n(\theta_0,\hat g)\|_{op}\\
    \le~& \|A(\theta_0, g_0)-A_n(\theta_0, g_0)\|_{op} +\left\|A(\theta_0,\hat g)-A(\theta_0,g_0) - \left(A_n(\theta_0,\hat g)-A_n(\theta_0,g_0)\right)\right\|_{op}\\
    =~& \|A(\theta_0, g_0)-A_n(\theta_0, g_0)\|_{op}+o_p(1),
\end{align}
where the last equality follows from the stochastic equicontinuity condition on the Jacobian.
Since $A(\theta_0, g_0)-A_n(\theta_0, g_0)$ is a mean zero empirical process with $\partial_{\theta} m(Z;\theta_0,g_0)$ having bounded variance, we have
$$\|A(\theta_0, g_0)-A_n(\theta_0, g_0)\|_{op}=o_p(1).$$
Hence,
\begin{align}
    G_n(\hat\theta,\hat g)-G_n(\theta_0,\hat g)=o_p(\|\hat\theta-\theta_0\|_2).
\end{align}

Moreover,
\begin{align}
    G_n(\theta_0,\hat g)-G_n(\theta_0,g_0)
    =~& M(\theta_0,\hat g)-M_n(\theta_0,\hat g) -\left(M(\theta_0,g_0)-M_n(\theta_0,g_0)\right)=o_p(n^{-1/2})
\end{align}
by stochastic equicontinuity.

In summary, we have
$$G_n(\hat\theta,\hat g)=G_n(\theta_0,g_0)+o_p(\|\hat\theta-\theta_0\|_2+n^{-1/2}),$$
and thus
$$A(\theta_0,g_0)\cdot (\hat\theta-\theta_0)=G_n(\theta_0,g_0)+o_p(\|\hat\theta-\theta_0\|_2+n^{-1/2}).$$

That is, we have
$$(\hat\theta-\theta_0)=A(\theta_0,g_0)^{-1}G_n(\theta_0,g_0)+o_p(\|\hat\theta-\theta_0\|_2+n^{-1/2}).$$

Since $G_n(\theta_0,g_0)$ is a mean-zero empirical process, we have that $\|G_n(\theta_0,g_0)\|_{2} = O_p(n^{-1/2}).$

By consistency of $\hat\theta$, above implies that
$$\hat\theta-\theta_0=O_p(n^{-1/2}).$$

Thus we get
$$(\hat\theta-\theta_0)=A(\theta_0,g_0)^{-1}G_n(\theta_0,g_0)+o_p(n^{-1/2}).$$

By Slutsky's Theorem, we conclude that
$$\sqrt{n}(\hat\theta-\theta_0)\overset{d}{\to}N\left(0, A(\theta_0,g_0)^{-1}
\mathbb{E}[m(Z;\theta_0,g_0)m(Z;\theta_0,g_0)']
A(\theta_0,g_0)^{-1}\right).$$
\end{proof}

We can also directly extend Lemma~2 to general nonlinear moments.
\begin{lemma}[Non-Linear Main Lemma]\label{lem:stoc-eq-jac-nonlin}
If the estimation algorithm satisfies the stability conditions: for all $i,j\in [p]$
\begin{align}
\max_{l\in[n]}\Big\|a_{i,j}(Z_l;\theta_0, \hat g)-a_{i,j}(Z_l;\ \theta_0, \hat g^{(-l)})\Big\|_1=~& o(n^{-1/2})\\
\max_{l\in[n]}\Big\|a_{i,j}(Z;\theta_0, \hat g)-a_{i,j}(Z;\theta_0, \hat g^{(-l)})\Big\|_2=~& o(n^{-1/2})\\
\max_{l\in[n]}\Big\|m_{i}(Z_l; \theta_0, \hat g)-m_{i}(Z_l;\theta_0, \hat g^{(-l)})\Big\|_1=~& o(n^{-1/2})\\
\max_{l\in[n]}\Big\|m_{i}(Z;\theta_0, \hat g)-m_{i}(Z; \theta_0, g^{(-l)})\Big\|_2=~& o(n^{-1/2})
\end{align}
and the moment satisfies the mean-squared-continuity condition:
\begin{align}
    \forall g,g': \E[(a_{i,j}(Z;\theta_0, g) - a_{i,j}(Z;\theta_0, g'))^2]\leq~& L \|g-g'\|_2^q\\
    \forall g,g': \E[(m_i(Z;\theta_0, g) - m_i(Z;\theta_0, g'))^2]\leq~& L \|g-g'\|_2^q
\end{align}
for some $0<q<\infty$ and some $L>0$, then the Condition~\eqref{eqn:equicont-nonlin} is satisfied.
\end{lemma}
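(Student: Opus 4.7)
The plan is to reduce the claim to two direct applications of the Main Lemma (Lemma~\ref{lem:stoc-eq-jac}). The key observation is that the proof of Lemma~\ref{lem:stoc-eq-jac} nowhere uses the fact that $a(Z;g)$ or $\nu(Z;g)$ arose from a moment that is linear in $\theta$: the arguments only manipulate generic functions of the data and the nuisance, together with their empirical and population averages. Consequently the same double-centering machinery applies verbatim to any pair of functions of $(Z,g)$ that satisfy our leave-one-out stability and mean-squared-continuity properties.

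Concretely, I would freeze $\theta = \theta_0$ throughout, and define two auxiliary function families indexed by $g$:
\[
\tilde a_{i,j}(Z;g) \;:=\; a_{i,j}(Z;\theta_0,g), \qquad \tilde\nu_i(Z;g) \;:=\; m_i(Z;\theta_0,g).
\]
The hypotheses of Lemma~\ref{lem:stoc-eq-jac-nonlin} state exactly that $\tilde a$ and $\tilde\nu$ satisfy the stability bounds $o(n^{-1/2})$ in the $\|\cdot\|_1$ (on-sample) and $\|\cdot\|_2$ (fresh-sample) norms, and the mean-squared-continuity bound with exponent $q$. Hence Lemma~\ref{lem:stoc-eq-jac}, applied with the substitutions $a\leftarrow\tilde a$ and $\nu\leftarrow\tilde\nu$ and with the corresponding population and empirical means, yields
\[
\sqrt{n}\,\bigl\|A(\theta_0,\hat g)-A(\theta_0,g_0)-(A_n(\theta_0,\hat g)-A_n(\theta_0,g_0))\bigr\|_{op} \;=\; o_p(1),
\]
\[
\sqrt{n}\,\bigl\|M(\theta_0,\hat g)-M(\theta_0,g_0)-(M_n(\theta_0,\hat g)-M_n(\theta_0,g_0))\bigr\|_{2,2} \;=\; o_p(1).
\]
The first of these is strictly stronger than the $o_p(1)$ requirement in condition~\eqref{eqn:equicont-nonlin}, so in particular it suffices; the second is exactly the second line of~\eqref{eqn:equicont-nonlin}.

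Concerning execution: I would handle the Jacobian statement coordinate by coordinate as in the original proof, invoking the decomposition $J_n \le J_{1,n}+J_{2,n}+J_{3,n}$, bounding $J_{1,n},J_{2,n}$ via the first-order on-sample and fresh-sample stability of $\tilde a$, and bounding $J_{3,n}$ by expanding $\E[K_l^2]$ via mean-squared-continuity and $\max_{l_1\neq l_2}(n-1)\E[K_{l_1}K_{l_2}]$ via the double-centering trick with the auxiliary variables $K_{l_1}^{(l_2)}$ built from $\hat g^{(-l_1,-l_2)}$. The $\|M(\theta_0,\hat g)-M(\theta_0,g_0)-(M_n(\theta_0,\hat g)-M_n(\theta_0,g_0))\|_{2,2}$ statement reduces to the coordinate-wise bound on $|M_i(\theta_0,\hat g)-M_i(\theta_0,g_0)-(M_{n,i}(\theta_0,\hat g)-M_{n,i}(\theta_0,g_0))|$, which is of exactly the form treated for $\nu$ in the Main Lemma.

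I do not expect any genuine obstacle, since the argument is a template instantiation and the conclusion involves strictly the same quantities that the original proof controls. The only thing worth stating explicitly is that when bounding $\E[K_l^2]$ in the double-centering step, one needs the mean-squared-continuity exponent $q$ to be positive (so that $\|\hat g - g_0\|_2^q = o_p(1)$ from the consistency condition), and that the $o_p(1)$ target on the Jacobian part requires nothing beyond what the Main Lemma already gives.
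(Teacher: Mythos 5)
Your proposal is correct and matches the paper's own proof, which likewise just substitutes $a(z;\theta_0,g)$ and $m(z;\theta_0,g)$ for $a(z,g)$ and $\nu(z,g)$ in the proof of Lemma~\ref{lem:stoc-eq-jac}; your additional remark that the Jacobian bound is even stronger than the $o_p(1)$ required by Condition~\eqref{eqn:equicont-nonlin} is a correct, harmless observation.
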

\begin{proof}
The proof follows by replacing all functions $a(z,g), \nu(z,g)$ in the proof of Lemma~\ref{lem:stoc-eq-jac} correspondingly with the functions $a(z;\theta_0,g)$ and $m(z;\theta_0, g)$.
\end{proof}

\paragraph{Application to bagging estimators.} We finally note that if the moment satisfies Lipschitz conditions of the form:
\begin{align}
\E\left[\left(a_{i,j}(Z_l;\theta_0, \hat g) - a_{i,j}(Z_l;\theta_0, \hat{g}^{(-l)})\right)^{2}\right]\leq~ L\cdot \mathbb{E}\Big[\sup_z\|\hat g(z)-\hat g^{(-l)}(z)\|_2^{2r}\Big]^{1/r} \\
\max_{\theta\in \Theta}\, \E\left[\left(m_i(Z_l;\theta, \hat g) - m_{i}(Z_l;\theta, \hat{g}^{(-l)})\right)^{2}\right]\leq~ L\cdot \mathbb{E}\Big[\sup_z\|\hat g(z)-\hat g^{(-l)}(z)\|_2^{2r}\Big]^{1/r}
\end{align}
Then Theorem~\ref{bag:sleepy} can be applied to upper bound the right hand side of these inequalities by $o_p(n^{-1/2})$ for bagging estimators. This would then imply the stability conditions invoked in both the consistency and the normality theorem. Thus the main application for bagging estimators carries over to non-linear moments.

\appendix

\end{document}